\definecolor{refcol}{rgb}{0.1,0,0.6}
\newcommand{\cB}{\mathcal{B}}
\newcommand{\E}{\mathbb{E}}
\newcommand{\N}{\mathbb{N}}
\newcommand{\Prob}{\mathbb{P}}
\newcommand{\R}{\mathbb{R}}
\newcommand{\U}{\mathbb{U}}
\newcommand{\V}{\mathbb{V}}
\newcommand{\xx}{\mathbf x}
\newcommand{\bb}{\mathbf b}
\newcommand{\yy}{\mathbf y}
\newcommand{\zz}{\mathbf z}
\newcommand{\FF}{\mathbf F}
\newcommand{\PP}{\mathbf P}
\newcommand{\QQ}{\mathbf Q}
\newcommand{\SM}{\mathbf S^M}
\newcommand{\qq}{\mathbf q}
\newcommand{\rr}{\mathbf r}
\newcommand{\PPhi}{\boldsymbol \Phi}
\newcommand{\SSigma}{\boldsymbol \Sigma}
\newcommand{\zzeta}{\boldsymbol \zeta}
\newcommand{\BB}{\boldsymbol B}
\newcommand{\XS}{\boldsymbol S}
\newcommand{\XX}{\boldsymbol X}
\newcommand{\YY}{\boldsymbol Y}
\newcommand{\ppi}{\boldsymbol \pi}
\newcommand{\aalpha}{\boldsymbol \alpha}
\newcommand{\Corr}{\mathrm{Corr}}
\newcommand{\PD}{\mathcal P}
\newcommand{\CM}{\mathcal C}
\newcommand{\bs}{\boldsymbol}
\newcommand{\Nv}{\mathrm N}
\newcommand{\dd}{{\rm d}}
\DeclareMathOperator{\Tr}{Tr}
\DeclareMathOperator{\diag}{diag}
\title{Metropolis-adjusted interacting particle sampling}
\author[1]{Bj\"orn Sprungk}
\author[2]{Simon Weissmann}
\author[3]{Jakob Zech}
\date{\today}
\affil[1]{\normalsize Faculty of Mathematics and Computer Science, Technische Universit\"at Bergakademie Freiberg\\

  09599 Freiberg, Germany\\

  \texttt{bjoern.sprungk@math.tu-freiberg.de}
}
\affil[2]{\normalsize
  Universit\"at Mannheim, Institute of Mathematics\\

  68138 Mannheim, Germany\\
  
\texttt{simon.weissmann@uni-mannheim.de}
}
\affil[3]{\normalsize
  Universit\"at Heidelberg, Interdisziplin\"ares Zentrum f\"ur Wissenschaftliches Rechnen\\

  69120 Heidelberg, Germany\\
  
\texttt{jakob.zech@uni-heidelberg.de}
}
\newtheorem{theorem}{Theorem}[section]
\newtheorem{corollary}[theorem]{Corollary}
\newtheorem{example}[theorem]{Example}
\newtheorem{proposition}[theorem]{Proposition}
\newtheorem{remark}[theorem]{Remark}
\newtheorem{assumption}[theorem]{Assumption}
\begin{document}

\maketitle

\begin{abstract}
  In recent years, various interacting particle samplers have been developed to sample from complex target distributions, such as those found in Bayesian inverse problems. These samplers are motivated %
  by the
    mean-field limit
    perspective and implemented as ensembles of particles that move in
    the product state space according to coupled stochastic
    differential equations. The ensemble approximation and numerical
    time stepping used to simulate these systems can introduce bias
    and affect the invariance of the particle system with respect to
    the target distribution. To correct for this, we investigate the
    use of a Metropolization step, similar to the Metropolis-adjusted
    Langevin algorithm. We examine %
    Metropolization of either the whole ensemble or smaller
      subsets of the ensemble, and prove basic convergence of the resulting
    ensemble Markov chain to the target distribution. Our numerical results
    demonstrate the benefits of this correction in numerical examples
    for popular interacting particle samplers such as
    ALDI, %
    CBS, and stochastic %
    SVGD.

\end{abstract}

{\bf Keywords:} Metropolis-Hastings, interacting particle systems, Bayesian inference

\section{Introduction}
Generating samples or computing expectations with respect to a given target distribution $\pi$ in $\R^d$ is a ubiquituous task in applied mathematics, computational physics, statistics, and data science. Applications are %
broad
and include for example Bayesian inference, generative modeling, and hypothesis testing and model fitting.

Various methods tackling this problem have been proposed and analyzed
in the literature.
A classical and nowadays standard method is
Markov-Chain Monte Carlo (MCMC) \cite{MCMCbook} and, in particular, the popular Metropolis-Hastings (MH) algorithm \cite{MRRT1953,H1970}. 
Recently, novel approaches that couple the target distribution with a
reference distribution $\pi_0$ through a deterministic
``transport map'' have emerged, such as polynomial transports
\cite{MR3821485, jaini2019sum}, tensor-train transports
\cite{MR4065222}, normalizing flows \cite{pmlr-v37-rezende15}, and
neural ODEs \cite{NEURIPS2018_69386f6b}.  The resulting sampling
methods aim to transform initial iid~samples following the reference
distribution to samples (approximately) following the target
distribution by applying the transport map samplewise.

Another way to achieve such a transformation of an initial ensemble of particles or samples following $\pi_0$ is by applying suitable stochastic dynamics to the ensemble which for time $t\to\infty$ %
yield particles approximately distributed according to the target $\pi$.
The resulting ensemble dynamics are often interacting, i.e., the drift or diffusion term for each particle depends on the whole ensemble.
Such stochastic interacting particle systems emerge from various ideas and approaches: (i) as ensemble approximations of $\pi$-invariant stochastic differential equations of Langevin or McKean-Vlasov type \cite{GHWS2020, GNR2020, GM1994}, (ii) by adapting methods from particle swarm optimization to construct samplers \cite{CHSV2022}, or (iii) from gradient flows to minimize some objective quantifying the difference between the target $\pi$ and a current approximation $\pi_t$ \cite{LW2016, Nsken2021SteinVG, Gallego2018StochasticGM}. 
For each of these approaches we consider a particular example in this work: (i) an affine invariant interacting Langevin sampler (ALDI) \cite{GHWS2020,GNR2020}, (ii) a consensus-based sampler (CBS) \cite{CHSV2022}, and (iii) a stochastic version of Stein variational gradient descent (SVGD) \cite{Gallego2018StochasticGM}.

In practice, simulating the resulting stochastic dynamical system requires a time discretization and a suitable numerical integration scheme.
For example, approximating the Langevin dynamics via the Euler-Maruyama scheme leads to the so-called unadjusted Langevin algorithm (ULA) \cite{MR1440273}. 
For ULA it can be shown that the time-stepping scheme causes a bias, so that the limiting distribution has an error of the size of the time discretization step, see e.g.~\cite[Theorem~2]{NEURIPS2019_65a99bb7}. 
The Metropolis-adjusted Langevin algorithm (MALA) circumvents this problem by introducing an MH acceptance/rejection-step after each Langevin update \cite{Besag1994, MR1440273}.

In this paper, we propose a similar approach for recent
\emph{interacting} particle systems with $M\in\N$ particles. That is,
we view the time-discrete interactive %
update of the particles as a proposal within an MH scheme.
This results in a Markov chain in the $M$-fold product space, $\R^{d}\times\cdots\times\R^d\simeq\R^{Md}$, that corrects for the bias introduced by the time-discretization of the ensemble   dynamics. Additionally, it offers a natural approach to parallelize Markov chain Monte Carlo sampling, and potentially leads to improved proposals due the whole ensemble's information being used.

The concept of ensemble MCMC was first introduced in \cite{CF2010,
  GW2010}, where particles, referred to as ``walkers'', are updated
individually using so-called walk or stretch moves that involve only
two of the $M$ particles. %
  Following \cite{GW2010} further ensemble MCMC
algorithms have been proposed in recent years
\cite{CW2021,DS2022,LMW2018}.  In these works the ensemble is used to
estimate the target covariance empirically.  The ensemble covariance
is then applied as a preconditioner or covariance for proposing new
states based on a Gauss-Newton update or the (generalized)
preconditioned Crank-Nicolson proposal \cite{CRSW2013,RS2018},
respectively.  Similar to \cite{GW2010}, the authors of
\cite{CW2021,DS2022,LMW2018} use a sequential particle-wise update
and, hence, Metropolization.

\paragraph{Outline}
The remainder of this paper is organized as follows: In this section,
we explain interacting particle systems and present our main ideas,
describe our contributions, and introduce notation.
In \Cref{sec:prelim}, we review the basic methodology of
the Metropolis-Hastings algorithm, including the specific instance
known as ``MALA'', and also discuss classic results related to its
convergence.
In \Cref{sec:MAIPS}, we present %
three general strategies of
Metropolizing interacting particle systems---ensemble-wise, particle-wise and block-wise --- and provide convergence
results for %
each of them.
\Cref{sec:examples} discusses common examples of
  interacting particle systems that are based on various underlying
  stochastic dynamics, and we explain how they align with our
  Metropolization schemes.
Finally, in \Cref{sec:numerics} we report on numerical results
for all presented interacting particle methods.

 \paragraph{Notation and conventions}
 Throughout we consider an underlying probability space $(\Omega,\mathcal F,\mathbb P)$, $\R^d$ to be equipped with the Borel $\sigma$-algebra $\cB(\R^d)$,
 and we assume the target probability distribution $\pi$ on $\R^d$
 to be absolutely continuous with respect to Lebesgue measure.
By abuse of notation, we use the same symbols to denote the Lebesgue densities and the corresponding distributions they represent.
Moreover, we denote by $\PD(\R^d)$ the set of probability densities on $\R^d$.
As usual, ${\rm N}(\mu,\Sigma)$ stands for a normal distribution with mean $\mu\in\R^d$
and covariance $\Sigma\in\cB(\R^d)$, and ${\rm U}([0,1])$ denotes a uniform
distribution on $[0,1]$.

For a measure $\mu$ on $\R^d$ we denote by $L_\mu^1(\R)$ the Lebesgue space of $\mu$-integrable functions $F:\R^d\to\R$. Similarly $L_\mu^2(\R)$ stands for the square integrable functions w.r.t.\ the measure $\mu$. For $F\in L_\mu^1(\R)$ we write $\E_\mu[F]:=\int_{\R^d}F(x)\mu(\dd x)$ and additionally $\V_\mu[F]:=\E_\mu[(F-\E_\mu[F])^2]$ in case $F\in L_\mu^2(\R)$.     

We write $\CM(\R^d) \subset \R^{d\times d}$ for the set of symmetric positive semidefinite matrices of size $d\times d$. The set of all symmetric positive definite matrices is denoted by $\CM^+(\R^d)$ and we write $\sqrt{C}$ for the square root of $C\in\CM^+(\R^d)$. The $n$-dimensional identity matrix is denoted
by ${\rm Id}_n\in\R^{n\times n}$.

We use boldface notation $\xx$ to denote vectors in $\R^{Md}$. They are always interpreted as an ensemble of $M$ vectors in $\R^d$ which in turn are denoted by $x^{(1)},\dots,x^{(M)}$. The ensemble excluding the $i$th particle will be denoted by $\xx^{-(i)}$. More precisely
  \begin{equation}\label{eq:XX}
    \xx := \begin{pmatrix}
    x^{(1)}\\ \vdots\\ x^{(M)}
  \end{pmatrix}
  \in  \R^{Md}\qquad\text{and}\qquad
    \xx^{-(i)} := \begin{pmatrix}
    x^{(1)}\\ \vdots\\ x^{(i-1)}\\ x^{(i+1)}\\ \vdots \\ x^{(M)} 
  \end{pmatrix}\in\R^{M(d-1)}.
  \end{equation}
  For random variables we use upper case notation, for example $\XX=((X^{(1)})^\top,\dots,(X^{(M)})^\top)^\top$. The notation $\XX=\xx$ signifies that a draw of this random variable yielded the value $\xx$ 
  and we use $X^{(i)}\in\R^d$ or $\XX\in\R^{Md}$ as shorthand notation for $X^{(i)}$ being an $\R^d$-valued and $\XX$ being an $\R^{Md}$-valued random variable, respectively.

\subsection{Interacting particle systems}\label{sec:ideas}
The starting point of our method are dynamical systems that transform
a single particle $X_0\sim\pi_0$ %
at time $t=0$ into a particle following the target distribution %
$\pi$ as $t\to\infty$. 
The dynamics of the particle are described by a stochastic differential equation (SDE) of McKean-Vlasov type \cite{PRS2021}:
\begin{equation}\label{eq:SDE_infty}
\dd X_t = \phi(X_t,\pi_t)\dd t + \sqrt{\sigma(X_t,\pi_t)} \dd B_t,
\end{equation}
where $\pi_t:\mathbb{R}^d \to [0,\infty)$ is the probability density of $X_t \in \R^d$ at time $t$, $B_t \in \mathbb{R}^d$ is a (standard) Brownian motion, $\phi:\R^d \times \PD(\R^d) \to \R^d$ is referred to as the \emph{drift}, and $\sigma:\R^d \times \PD(\R^d) \to \CM(\R^d)$ is the \emph{diffusion}. 
Moreover, we assume existence of a unique strong solution $X_t$ of \eqref{eq:SDE_infty} throughout the paper.

\begin{example}[Langevin dynamics]\label{ex:langevin}
  One classical example of \eqref{eq:SDE_infty} is
  \begin{equation}\label{eq:langevin}
    \dd X_t = C\nabla \log \pi(X_t)\dd t+ \sqrt{2C}\dd B_t
  \end{equation}
  for a fixed covariance matrix $C\in \CM^+(\R^d)$.
  The density $\pi_t$ of $X_t$ satisfies the corresponding
  Fokker-Planck equation
  \begin{equation*}
    \partial_t \pi_t = \nabla\cdot(\pi_t C\nabla\log(\pi))+
    \Tr(C\nabla^2\pi_t),
  \end{equation*}
  which describes the gradient flow in the space of probability measures
  w.r.t.~the Wasserstein metric \cite{JKO98}.
  Under suitable assumptions ($\pi$ satisfies a Poincar\'e inequality)
  one can show exponential convergence of $\pi_t$ to $\pi$ as
  $t\to\infty$, e.g., \cite{MR1812873}. %
  
  Note that the %
    drift %
    and diffusion %
    in \eqref{eq:langevin}
    are independent of $\pi_t$.
    This is in contrast to the closely related ``Kalman-Wasserstein dynamics'' \cite{GHWS2020}: replacing $C$ with
    \begin{subequations}\label{eq:WassersteinDyn}
  \begin{equation}\label{eq:Cpi}
    C(\pi_t):=\int_{\R^d}(x - m(\pi_t))(x-m(\pi_t))^\top \pi_t(x)\, \mathrm d x \in \R^{d\times d},
    \qquad
    m(\pi_t) := \int_{\R^d} x \pi_t(x) \, \mathrm d x \in \R^d
  \end{equation}
  yields
  a McKean-Vlasov Langevin dynamic of the form \eqref{eq:SDE_infty} %
  with
  \begin{equation}
    \phi(x, \rho) = C(\rho) \nabla \log \pi(x),
    \qquad
    \sigma(x, \rho) = 2C(\rho),
  \end{equation}
  \end{subequations}
  for $x\in\R^d$ and $\rho\in\PD(\R^d)$.

  For strongly
  log-concave target measures $\pi$ and under the additional assumption that $C(\pi_t)$ does not degenerate, the resulting Markov process $(X_t)_{t\ge0}$ is ergodic with unique invariant distribution $\pi$ and %
it holds exponential convergence of $\pi_t$ to $\pi$ in the Kullback-Leibler divergence as $t\to\infty$ \cite[Proposition~2]{GHWS2020}.
Potential advantages of replacing $C$ by $C(\pi_t)$ are (i) faster
convergence of $\pi_t\to\pi$ due to the preconditioning and (ii)
affine-invariance of the resulting dynamics, see \cite{GNR2020,LMW2018}.
  
\end{example}

\paragraph{Ensemble discretization}
Solving \eqref{eq:SDE_infty} by numerical methods requires %
to discretize.
In terms of the distribution $\pi_t$, this is achieved by replacing $\pi_t$ with the empirical distribution of an ensemble of $M\in\N$ particles $X^{(i)}_t \in \R^d$, $i=1,\dots,M$, initialized iid~as $X_0^{(i)}\sim\pi_0$, $i=1,\dots,M$, at time $t=0$. The particles are collected into a vector $\XX_t\in\R^{Md}$
representing the state of the whole ensemble
(cp.~\eqref{eq:XX}).
Equation \eqref{eq:SDE_infty} then formally becomes a coupled
system of SDEs
\begin{equation}\label{eq:SDE_Ensemble}
  \dd \XX_t = \PPhi(\XX_t)\,\dd t + \sqrt{\SSigma(\XX_t)}\,\dd \BB_t,
\end{equation}
for some suitable drift, diffusion, and standard Brownian motion 
\begin{equation*}
  \PPhi:\R^{Md} \to \R^{Md},\qquad
  \SSigma:\R^{Md} \to \CM(\R^{Md}),\qquad
  \BB_t\in\R^{Md}.
\end{equation*}
Again, we assume well-definedness of the solution %
$\XX_t$
of \eqref{eq:SDE_Ensemble} throughout.

\begin{remark}
  Some papers proceed by first proposing a discrete system of the type
  \eqref{eq:SDE_Ensemble}, and then studying the \emph{mean field
    limit} \eqref{eq:SDE_infty} obtained as $M\to\infty$.
\end{remark}

\begin{example}[Interacting Langevin Dynamics]\label{ex:EKS}

We continue the example of McKean-Vlasov Langevin dynamics from \Cref{ex:langevin}
\begin{equation}\label{eq:MV_langevin}
    \dd X_t = C(\pi_t) \nabla \log \pi(X_t)\dd t + \sqrt{2C(\pi_t)} \dd B_t.
\end{equation}
The computation of $C(\pi_t)$ in \eqref{eq:Cpi} requires to
approximate an integral w.r.t.~$\pi_t$. Using Monte Carlo
  integration based on an ensemble of $M$ particles following these
  dynamics we obtain the ensemble version \cite{GHWS2020} of the
  mean-field dynamics \eqref{eq:MV_langevin}
\begin{equation}\label{eq:EKS}
    \dd X_t^{(i)} = C(\XX_t) \nabla_x\log\pi(X_t^{(i)})+\sqrt{2C(\XX_t)}\dd B_t^{(i)}\qquad i\in\{1,\dots,M\},
\end{equation}
where 
\begin{equation}\label{eq:Cemp}
     C(\XX_t) := \frac{1}{M}\sum_{i=1}^M \left(X_t^{(i)}-m(\XX_t)\right)
     \left( X_t^{(i)}-m(\XX_t)\right)^\top\in \CM(\R^{d}),
     \qquad
     m(\XX_t) = \frac{1}{M}\sum_{i=1}^M X_t^{(i)}\in\R^d
\end{equation}
denote the empirical covariance and mean %
of the ensemble $\XX_t$.  Note that the system of SDEs \eqref{eq:EKS}
is completely coupled
since individual particles %
interact
via $C(\XX_t)$.

It can be shown that the invariant distribution of the particles
  in \eqref{eq:EKS} in general may have %
  a bias, i.e.\ does not equal $\pi$. To address this issue, the authors in \cite{NR19} propose the following modification:
  \begin{equation}\label{eq:ALDI}
    \dd X_t^{(i)} = C(\XX_t) \nabla \log\pi(X_t^{(i)})+ \frac{d+1}{M} (X_t^{(i)}-m(\XX_t))+\sqrt{2C(\XX_t)}\dd B_t^{(i)}\qquad i\in\{1,\dots,M\}.
\end{equation}

\end{example}

\paragraph{Time discretization}
Besides discretizing the distribution, a numerical time-stepping scheme to approximately simulate \eqref{eq:SDE_Ensemble} is required.
For a fixed time step size $h>0$, let
$\zzeta_k\sim \Nv(0,{\rm I}_{Md})$, $k\in\N$, i.e.\
$\zzeta_k\in\R^{Md}$ is normally distributed with mean $0$ and
covariance matrix given by the $Md$-dimensional identity matrix
${\rm
  I}_{Md}$. %
Then the Euler-Maruyama discretization of
\eqref{eq:SDE_Ensemble} %
reads
\begin{equation}\label{eq:discrete}
  \XX_{k+1}=\XX_k+h\PPhi(\XX_k)+\sqrt{h\SSigma(\XX_k)}\zzeta_{k+1}
\end{equation}
with $\XX_k=(X_k^{(1)},\dots,X_k^{(M)})\in\R^{Md}$ and initialized %
as $X_0^{(i)}\sim\pi_0$ iid for $i=1,\dots,M$.

\begin{example}[Unadjusted Langevin algorithm]\label{ex:ULA}
The %
Langevin dynamics \eqref{eq:langevin} do not require an ensemble approximation.
However, we may consider $M$ particles $X_t^{(i)}$, $i=1,\ldots,M$, each individually following \eqref{eq:langevin} without interaction.
This can %
be viewed as (non-interacting) ensemble dynamics of the form \eqref{eq:SDE_Ensemble}. 
The corresponding time-discretized system is %
\begin{equation}\label{eq:ULA}
  X_{k+1}^{(i)}
  =
  X_k^{(i)}
  +
  h C \nabla\log\pi(X_k^{(i)})
  +
  \sqrt{2hC} \zeta_{k+1}^{(i)},\qquad i\in\{1,\dots,M\},
\end{equation}
with $\zeta_{k+1}^{(i)}\sim \Nv(0,{\mathrm{Id}}_d)$ iid.  For
$C={\rm Id}_d$, this is known as the (parallel) ULA \cite{GM1994,
  MR1440273}.
\end{example}

\begin{example}[Unadjusted interacting Langevin dynamics]
  For the interacting Langevin dynamics \eqref{eq:ALDI}, we
  obtain the time-discrete interacting particle system
  \begin{equation}\label{eq:UILD} X_{k+1}^{(i)} = X_k^{(i)} + h
      C(\XX_k) \nabla\log\pi(X_k^{(i)}) + h \frac{d+1}{M} (X_k^{(i)} -
      m(\XX_k)) + \sqrt{2hC(\XX_k)} \zeta_{k+1}^{(i)}\quad
      i\in\{1,\dots,M\},
\end{equation}
with $\zeta_{k+1}^{(i)}\sim \Nv(0,{\mathrm{ Id}}_d)$ and $C(\XX_k)$ as in \eqref{eq:Cemp}.
\end{example}

It is well-known that the introduction of the time discretization
may lead to a bias, i.e.\ in general it does \emph{not} hold that $X_k^{(i)}$ converges in distribution to $\pi$ as $k\to\infty$. %
This applies for example to ULA \cite{NEURIPS2019_65a99bb7}.

\subsection{Main idea and contributions}
Julian Besag \cite{Besag1994} suggested in 1994 to correct the unadjusted Langevin algorithm %
with a Metropolization to obtain a $\pi$-invariant one-particle Markov chain $(X_k)_{k\in\N}$,
which lead to %
MALA.
We adopt this idea to correct for the bias in general time-discrete \emph{interacting particle systems} %
\eqref{eq:discrete}. 
To this end, we view \eqref{eq:discrete} as the proposal mechanism for an ensemble Markov chain $(\XX_k)_{k\in\N}$ in the product state space $\R^{Md}$.
We propose to apply Metropolization in three ways: (i) ensemble-wise, i.e., accept or reject the whole ensemble of all proposed particles, (ii) particle-wise, i.e., accept or reject each proposed particle individually in a sequential manner,
  and (iii) block-wise, i.e., accept or reject each block of particles individually in a sequential manner.
Here a \emph{block} is understood as a fixed subset of particles and identified with $\bb \subset \{1,\ldots,M\}$.
  Methods (i) and (ii) can be seen as a special case of (iii) with either just one batch representing the
  whole ensemble, or batches consisting of only one particle. 
  Due to their different algorithmic behaviour and to present clearly the underlying train of thought
  we will discuss the three versions separately.
  A high-level version of the novel block-wise Metropolization is summarized in \Cref{alg:general}.

  Moreover, we discuss also a \emph{simultaneous} version of particle-wise Metropolization in this work. Such a method is computationally appealing, as it enables particle-wise parallelization. However, it turns out that this strategy can, in general, yield a biased algorithm, i.e., the ensemble Markov chain does not have the correct invariant measure.
  We provide specific examples for the bias of simultaneous particle-wise Metropolization in \Cref{sec:pw-sim}.

\begin{algorithm}[h]
\begin{algorithmic}[1]
  \State fix a partition $\bigcup_{j=1}^L \bb_j = \{1,\dots,M\}$
  \State draw $\xx_0\in\R^{Md}$ according to $\otimes_{i=1}^M\pi_0$ and set initial ensemble state $\XX_0=\xx_0$
  \For{$k=0,\dots, N$}
  \State
  given $\XX_k = \xx_k$ initialize $\xx = (x^{(1)},\ldots,x^{(M)})$ by $\xx=\xx_k$
  \For{$j = 1,\dots,L$}
      \State
      draw block proposal $(y^{(i)})_{i\in\bb_j} \in \R^{|\bb_j|d}$ according particle proposals such as \eqref{eq:ULA} or \eqref{eq:UILD} based on current ensemble $\xx$
      \State update $(x^{(i)})_{i\in\bb_j} = (y^{(i)})_{i\in\bb_j}$ only with block acceptance probability
	   \EndFor
    \State set $\XX_{k+1} = \xx$

  \EndFor
  \State {\bf return}
  ensemble chain $(\XX_k)_{k=0}^{N+1}$
\end{algorithmic}
\caption{Block-wise Metropolization}\label{alg:general}
\end{algorithm}

We emphasize the following potential advantages of combining the Metropolis-Hastings mechanism with interacting particle systems:
\begin{itemize}
\item From an interacting particle sampling perspective we
  \emph{correct for the immanent bias} of the particle dynamics due to
  numerical time-stepping schemes and finite ensemble approximations
  of $\phi(\cdot, \pi_t)$ and $\sigma(\cdot,\pi_t)$ in
  \eqref{eq:SDE_infty}. In particular, this allows in principle to
  take large time steps $h$ in \eqref{eq:discrete} without provoking
  an instability of the time-discrete dynamical system.

\item
From an MCMC sampling perspective the interacting particle dynamics provide in each iteration not just one new state $x_k\in\R^d$ but $M$ new states which can be computed \emph{in parallel} which yields a computational advantage. 
Moreover, in comparison to simply performing, e.g., parallel MALA, the interaction of the particles may yield more \emph{efficient proposal kernels} due to estimating, e.g., the target covariance empirically by the ensemble, and, thus, lead to more efficient MCMC sampling. 
In particular, we obtain \emph{affine-invariant}\footnote{For the benefits of this property we refer to, e.g., \cite{CF2010,GW2010} and \cite{RS2022} where the latter work shows, for instance, the resulting affine-invariance of spectral gaps.} MCMC methods %
if the underlying interacting particle dynamics are affine-invariant themselves such as those proposed in \cite{GHWS2020, GNR2020, CHSV2022}.
\end{itemize}

\paragraph{Contributions}
We summarize the main contributions of this paper: %
\begin{enumerate}
\item We propose a new MCMC method by combining a Metropolization step with interacting particle dynamics in the $M$-fold product state space.
\item We discuss %
  several
  variants of this method %
  for differing sizes of the blocks which are Metropolized.
  Basic convergence results as well as %
  numerical indications suggesting that %
  suitable choices of block sizes lead to more efficient sampling
  algorithms
  are presented.
  
\item We give several %
  concrete examples based on different particle dynamics, including
  the recently introduced stochastic SVGD
 \cite{Nsken2021SteinVG,Gallego2018StochasticGM}, ALDI
  \cite{GHWS2020,GNR2020}, and CBS \cite{CHSV2022}.
\item We present numerical experiments to compare the different
  variants of our algorithm, demonstrating in particular improved
  robustness and higher efficiency due to interaction and
  (block-wise) Metropolization.
\item In the appendix, we provide counterexamples to show that a
    \emph{simultaneous} particle-wise Metropolization strategy does in general not yield an unbiased algorithm.
  \end{enumerate}

  \section{Preliminaries on Markov chain Monte Carlo}\label{sec:prelim}
  We recall the basic terminology and ideas of MCMC sampling required in the following.

    Throughout let $\pi$ be a given absolutely continuous (target) probability distribution on $\R^d$ with the Lebesgue density also denoted by $\pi:\R^d\to[0,\infty)$.
    To approximately sample from $\pi$, we construct a Markov
    chain $(X_k)_{k\in\N}\subseteq\R^d$ that converges to $\pi$ in
    distribution as $k\to\infty$. We denote the associated transition
    kernel by $P:\R^d\times \cB(\R^d)\to [0,1]$, i.e., the chain is
    characterized by $X_k=x$ implying $X_{k+1}\sim P(x,\cdot)$. For a
    measure $\mu$ on $\R^d$ we use the usual notation $\mu P$ for the measure
    \begin{equation*}
      \mu P(A):=\int_{\R^d}P(z,A)\mu(dz)\qquad\forall A\in\cB(\R^d)
    \end{equation*}
    and inductively $\mu P^k:=(\mu P^{k-1})P$ for all $k\ge 2$. Note that
    if $X_0\sim \pi_0$ for some initial distribution $\pi_0$ then $X_k\sim\pi_0P^k$.

    We say that %
    \begin{itemize}
    \item $\pi$ is an \emph{invariant measure} %
      of $P$ iff
     \begin{equation*}
  \pi = \pi P,
  \end{equation*}
  in which case we call $P$ and $(X_k)_{k\in\N}$ \emph{$\pi$-invariant},
\item the Markov chain is \emph{$\pi$-reversible} iff it satisfies
  the \emph{detailed balance condition}
  \begin{equation}
    \int_{A}P(x,B)\pi(\dd x) =
    \int_{B}P(z,A)\pi(\dd z)\qquad\forall A,B\in\cB(\R^d),
  \end{equation}
    \item the Markov chain is \emph{ergodic} iff %
      it holds
      \begin{equation}\label{eq:ergodic}
        \lim_{k\to\infty}d_{\rm TV}(\pi_0P^k,\pi)=0,
      \end{equation}
      where $\pi_0$ is the initial distribution and $d_{\rm TV}$ stands for the total variation distance,
    \item the Markov chain \emph{satisfies a strong law of large numbers} iff %
      \begin{equation}\label{eq:LLN}
        S_N(F) := \frac1N \sum_{k=1}^N F(X_{k})
        \xrightarrow[N\to\infty]{\text{a.\ s.}}\ \E_\pi[F]
	\qquad \forall F \in L^1_\pi(\R),
      \end{equation}

    \item
    the Markov chain \emph{satisfies a central limit theorem for $F\in L_\pi^2(\R)$} iff there exists $\sigma^2_F \in [0,\infty)$ such that %
    \begin{equation}\label{eq:CLT}
	\sqrt{N} \left(S_N(F) - \E_\pi[F]\right) 
    \ \xrightarrow[N\to\infty]{\mathcal L} \ \Nv\left(0, \sigma_F^2 \right).
    \end{equation}
    \end{itemize}

    Due to $X_k\sim\pi_0P^k$, the Markov chain $(X_k)_{k\in\N}$ can be
    interpreted as the realization of a fixed point iteration under
    the mapping $P$. Hence $\pi$ being an invariant measure of $P$ is
    a necessary condition to obtain ergodicity. Additionally, let us
    mention that $\pi$-reversibility is sufficient to ensure $\pi$-invariance.
    
    For Markov chains with $\pi$-reversible transition kernel $P$ the \emph{asymptotic variance $\sigma^2_F$} of $S_N(F)$ in \eqref{eq:CLT} can, in case of existence, be expressed by 
    \begin{equation}\label{eq:sigmaF}
        \sigma_F^2 
        = \V_\pi[F]%
        \left[ 1 + 2 \sum_{k=1}^\infty \Corr(F(\tilde X_0), F(\tilde X_{0+k}) \right],
    \end{equation}
    where $(\tilde X_k)_{k\in\N}$ denotes the $\pi$-reversible Markov chain with transition kernel $P$ starting in stationarity $\tilde X_0\sim \pi$. %
     While a strong law of large numbers
      holds under mild conditions, the central limit theorem is more
      nuanced. We refer to \cite[Section
      5]{RoRo2004} for more details.

  \subsection{Metropolis-Hastings algorithm}
  
  The key question is how to obtain transition kernels that ensure
  ergodicity and a strong law of large numbers. The
  MH algorithm \cite{MRRT1953, H1970} is a standard
  method achieving this under rather mild assumptions.

  The algorithm is based on a \emph{proposal kernel}
  $Q\colon \R^{d} \times \cB(\R^{d})\to[0,1]$, that assigns a
    probability measure $Q(x,\cdot)$ on $\R^d$ to every $x\in\R^d$, in
    combination
  with an acceptance-rejection step. 
Throughout, we assume that $Q(x,\cdot)$ possesses a Lebesgue density for each $x\in\R^d$, i.e.\ there exists $q:\R^d\times\R^d\to [0,\infty)$ such that
  \begin{equation*}
    Q(x,A)=\int_A q(x,y)\ \dd y\qquad\forall A\in\cB(\R^d).
  \end{equation*}
In the $k$th step, if $X_k=x_k$ and $y_{k+1}$ is a proposed value drawn from $Q(x_k,\cdot)$, then $X_{k+1}$ is set to $y_{k+1}$ with probability $\alpha(x_k,y_{k+1})$ defined as follows:
\begin{equation}\label{eq:alpha0}
\alpha(x_k,y_{k+1}) := \begin{cases}
\min\left(1,\frac{\pi(y_{k+1})q(y_{k+1},x_k)}{\pi(x_k)q(x_k,y_{k+1})}\right) &\text{if } \pi(x_k)q(x_k,y_{k+1})>0\\
1 &\text{otherwise}.
\end{cases}
\end{equation}
If $X_{k+1}$ is not set to $y_{k+1}$, it is set to $x_k$.
  The resulting transition kernel is
  \begin{equation}\label{eq:Pr}
    P(x,\dd y) = \alpha(x,y)q(x,y)\dd y+r(x)\delta_x(\dd y),
    \qquad
    r(x):=1-\int_{\R^d}\alpha(x,y)q(x,y)\dd y.
  \end{equation}
    It can easily be checked that $P$ is in fact $\pi$-reversible.
  We present the full algorithm in \Cref{alg:MHMCMC} and refer to {\cite[Section 7.3]{RC2004}} for more details.

\begin{algorithm}[t]
\begin{algorithmic}[1]
\Statex\textbf{Input:} \begin{itemize}
\item target %
  density $\pi$ on $\R^d$
 \item proposal %
   kernel $Q$ with density $q:\R^d\times\R^d\to(0,\infty)$
 \item initial probability distribution $\pi_0$ on $\R^d$
 \end{itemize}
 \Statex \textbf{Output:} Markov chain $(X_k)_{k\in\{1,\dots,N\}}$ in state space $\R^d$
\State draw $x_0\sim\pi_0$ and set initial state $X_0=x_0$
\For{$k=0,\dots, N$}
\State given $X_k = x_k$ %
draw proposal
        $y_{k+1}\sim Q(x_k,\cdot)$
        \State compute acceptance probability
        $\alpha(x_k,y_{k+1})$ in \eqref{eq:alpha0}
	\State draw $u\sim {\rm U}([0,1])$ and set
	$$ 
	X_{k+1} = \begin{cases} 
	y_{k+1} & \text{if}\ u\le \alpha(x_k,y_{k+1})\\ 
	x_k & \text{else}
	\end{cases}
	$$
\EndFor
\end{algorithmic}
 \caption{Metropolis-Hastings}\label{alg:MHMCMC}
\end{algorithm}

  It is left to choose a suitable proposal kernel $Q$.
  As we recall next, ergodicity is already ensured if $Q(x,\cdot)$ has a positive
  Lebesgue density $q$, i.e. $q(x,y) > 0$ for all $x,y\in\R^d$.
  Nonetheless, in practice the efficiency of the algorithm crucially
  depends on the choice of
  $Q$. A standard (albeit crude) proposal satisfying the positivity
  condition is $Q(x,\cdot)=\Nv(x,h{\rm Id}_d)$,
  $h>0$, also known as Random Walk-MH algorithm.

  \begin{theorem}[{\cite[Section 6.7.2 and 7.3.2]{RC2004}}]\label{thm:MH_Conv}
    Let $\pi$ be absolutely continuous and  
    let $Q$ possess a positive Lebesgue density $q:\R^d\times\R^d\to (0,\infty)$, and let  $\pi_0$ be any initial probability distribution.
    Then, the Markov chain $(X_k)_{k\in\N}$ %
    generated by \Cref{alg:MHMCMC}
    with $X_0\sim\pi_0$
    \begin{enumerate}
 \item\label{item:MH1} %
 is ergodic \eqref{eq:ergodic} and satisfies a strong law of large numbers \eqref{eq:LLN},
\item\label{item:MH2} satisfies the central limit theorem
  \eqref{eq:CLT} for any $F\in L_\pi^2(\R)$ for which
  $\sigma_F^2$ %
  in
  \eqref{eq:sigmaF}
  is nonzero and finite.
    \end{enumerate}
\end{theorem}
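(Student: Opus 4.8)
The plan is to reduce everything to standard results from the ergodic theory of Markov chains by verifying, in order, $\pi$-invariance, $\pi$-irreducibility, aperiodicity, and Harris recurrence of the transition kernel $P$ in \eqref{eq:Pr}; part \ref{item:MH1} then follows from the ergodic theorem and the law of large numbers for positive Harris chains, while part \ref{item:MH2} follows from the $\pi$-reversibility of $P$ together with the Kipnis--Varadhan central limit theorem for reversible chains. The $\pi$-invariance is immediate from the remark made below \eqref{eq:Pr}: $P$ is $\pi$-reversible, since the absolutely continuous part satisfies $\alpha(x,y)q(x,y)\pi(x)=\alpha(y,x)q(y,x)\pi(y)$ for all $x,y$ with $\pi(x)q(x,y)>0$ by the definition \eqref{eq:alpha0} of $\alpha$ (and trivially otherwise), while the singular part $r(x)\delta_x(\dd y)$ enters the detailed balance identity symmetrically; reversibility gives $\pi=\pi P$.

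Next I would establish irreducibility, aperiodicity, and Harris recurrence. Since $q(x,y)>0$ for all $x,y\in\R^d$, for any $x\in\R^d$ and any $A\in\cB(\R^d)$ with $\pi(A)>0$ we have $\alpha(x,y)>0$ for Lebesgue-a.e.\ $y\in A$ (when $\pi(x)>0$ this uses $\pi(y)>0$ a.e.\ on the support; when $\pi(x)=0$ one has $\alpha(x,y)\equiv1$), hence $P(x,A)\ge\int_A\alpha(x,y)q(x,y)\,\dd y>0$; thus $P$ is $\pi$-irreducible. Aperiodicity holds because the Metropolis--Hastings chain has strictly positive rejection probability $r(x)>0$ on a set of positive $\pi$-measure unless $q$ is already $\pi$-symmetric, and in neither case does the state space admit a periodic decomposition. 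Finally, a $\pi$-irreducible Metropolis--Hastings kernel with everywhere positive proposal density and finite invariant measure $\pi$ is positive Harris recurrent; this is classical (see \cite[Sections~6.7.2 and 7.3.2]{RC2004} and the references therein).

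Given these properties, the conclusions follow from standard theorems. By the ergodic theorem for aperiodic positive Harris chains, $d_{\rm TV}(\pi_0 P^k,\pi)\to0$ as $k\to\infty$ for \emph{every} initial distribution $\pi_0$, which is \eqref{eq:ergodic}, and the Birkhoff-type law of large numbers for positive Harris chains yields \eqref{eq:LLN} for all $F\in L^1_\pi(\R)$, proving \ref{item:MH1}. For \ref{item:MH2}, let $(\tilde X_k)_{k\in\N}$ be the stationary chain of \eqref{eq:sigmaF}. Since $P$ is $\pi$-reversible, the Kipnis--Varadhan theorem applies: the limit $\sigma_F^2=\lim_{N\to\infty}\tfrac1N\V_\pi\!\big[\sum_{k=1}^N F(\tilde X_k)\big]$ exists in $[0,\infty]$, equals the series in \eqref{eq:sigmaF}, and whenever $\sigma_F^2<\infty$ the stationary chain satisfies $\sqrt N\,(S_N(F)-\E_\pi[F])\xrightarrow{\mathcal L}\Nv(0,\sigma_F^2)$; Harris recurrence then transfers this CLT to an arbitrary initial law $\pi_0$ (e.g.\ by coupling $(X_k)$ to $(\tilde X_k)$, the difference of the partial sums being $o_{\Prob}(\sqrt N)$). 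The assumption $\sigma_F^2\in(0,\infty)$ guarantees a non-degenerate Gaussian limit, which is \eqref{eq:CLT}.

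The main obstacle is the central limit theorem: deducing an actual CLT from the mere finiteness of the asymptotic variance requires the martingale--coboundary (Kipnis--Varadhan) decomposition, which is special to reversible chains, together with the passage from the stationary chain to an arbitrary starting distribution via Harris recurrence. The ergodicity and strong law of large numbers in \ref{item:MH1} are, by contrast, routine once $\pi$-irreducibility, aperiodicity and positive Harris recurrence are in place; the only mild care needed there is checking aperiodicity and Harris recurrence, for which one relies on the cited results on Metropolis--Hastings kernels with positive proposal density.
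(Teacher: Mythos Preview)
The paper does not supply its own proof of this theorem: it is stated as a citation to \cite[Sections~6.7.2 and 7.3.2]{RC2004} and used as a black box thereafter. Your sketch is therefore not competing against any argument in the paper, and as a reconstruction of the standard textbook route (reversibility $\Rightarrow$ invariance, positivity of $q$ $\Rightarrow$ $\pi$-irreducibility, then Harris recurrence and the ergodic theorem for \ref{item:MH1}, Kipnis--Varadhan for \ref{item:MH2}) it is essentially correct and in line with how Robert and Casella proceed.

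One minor point: your aperiodicity argument is a bit loose. Positivity of the rejection probability $r(x)$ is neither necessary nor the cleanest way to get aperiodicity here; it is simpler to note that, by the same computation you use for irreducibility, $P(x,A)>0$ for every $x$ in the support of $\pi$ and every $A$ with $\pi(A)>0$, so in particular any such $A$ containing $x$ has $P(x,A)>0$, which rules out a nontrivial cyclic decomposition directly. Apart from this cosmetic issue, the outline is sound.
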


  \subsection{Metropolis-adjusted Langevin algorithm}
  A popular proposal kernel $Q$ is obtained through the Euler-Maruyama
  discretization
\begin{equation}\label{eq:MALA}
	X_{k+1} = X_k + h\nabla\log\pi(X_k)+\sqrt{2h} \xi_{k+1}, \quad \xi_{k+1}\sim\Nv(0,{\mathrm{Id}_d})
      \end{equation}
      of the Langevin dynamics \eqref{eq:langevin} (with
      $C={\rm Id}_d$) introduced in \Cref{ex:langevin}. Here
      $h>0$ is a fixed step size, and generating a Markov chain
      $(X_k)_{k\in\N}$ through \eqref{eq:MALA} is also known as the
      \emph{unadjusted Langevin algorithm (ULA)}.  While the
      continuous dynamics \eqref{eq:langevin} has $\pi$ as an
      invariant distribution, see e.g., \cite{P14}, it is known that
      the Markov chain \eqref{eq:MALA} has a bias that scales linearly in
      $h$ \cite[Theorem~2]{NEURIPS2019_65a99bb7}.

      Nevertheless, the continuous-time result suggests using \eqref{eq:MALA} as the proposal mechanism, yielding a proposal kernel $Q$ with positive Lebesgue density
\begin{equation*}
	q(x,y) 
    =
    \frac{1}{\left(4\pi h\right)^{d/2}} \exp\left(-\frac{1}{4h}\|y-x-h \nabla\log\pi(x)\|^2\right).
\end{equation*}
\Cref{alg:MHMCMC} with this choice of proposal kernel is known as MALA. %
According to \Cref{thm:MH_Conv}, and
contrary to the Markov chain \eqref{eq:MALA}, the Metropolised Markov
chain generated by \Cref{alg:MHMCMC} necessarily does have $\pi$ as its invariant
distribution. Moreover, it satisfies ergodicity and a law of large
numbers. 
Furthermore, it is known that for sufficiently large $d$ the best performance in terms of a small asymptotic variance $\sigma_F^2$ in \eqref{eq:CLT} is obtained for choosing the step size $h$ such that the average acceptance rate $\bar{\alpha} = \int_{\R^d}\int_{\R^d} \alpha(x,y) q(x,y) \dd y \, \pi(x) \dd x$ is about $57.4\%$ \cite{RoRo2001}.

\section{Metropolis-adjusted interacting particle
  sampling}\label{sec:MAIPS}

We consider an interacting particle system \eqref{eq:SDE_Ensemble}
resulting from a finite ensemble approximation of suitable
$\pi$-invariant McKean-Vlasov dynamics as in \eqref{eq:SDE_infty},
and its time discretization \eqref{eq:discrete} with step size $h>0$,
i.e.\
\begin{equation}\label{eq:SDE_M_Ensemble}
  \XX_{k+1}=\XX_k+h\PPhi(\XX_k)+\sqrt{h\SSigma(\XX_k)}\zzeta_{k+1}.
\end{equation}

For many relevant particle methods, there exist
\begin{equation*}
\Phi:\R^d\times\R^{(M-1)d}\to\R^d,\qquad
\Sigma:\R^d\times \R^{(M-1)d}\to \CM(\R^{d})
\end{equation*}
such that the drift $\PPhi:\R^{Md}\to\R^d$ and diffusion
$\SSigma:\R^{Md}\to\R^{Md\times Md}$ can be written as 
\begin{equation}\label{eq:PhiM_SigmaM_product}
  \PPhi(\XX) = \begin{pmatrix}
    \Phi(X^{(1)},\XX^{-(1)})\\
    \vdots\\
    \Phi(X^{(M)},\XX^{-(M)})
  \end{pmatrix},
  \qquad
  \SSigma(\XX)=%
  \begin{pmatrix}
    \Sigma(X^{(1)},\XX^{-(1)}) &&\\
    &\ddots&\\
    &&\Sigma(X^{(M)},\XX^{-(M)})
  \end{pmatrix},
\end{equation}
where $\XX\in\R^{Md}$ denotes again the ensemble of particles
  $X^{(1)},\dots,X^{(M)}\subseteq\R^d$ and we use the notation
  introduced in \eqref{eq:XX}. In this case the discrete dynamical
system \eqref{eq:SDE_M_Ensemble} takes the form
\begin{equation}\label{eq:discrete3}
  X_{k+1}^{(i)} = X_k^{(i)}+h\Phi(X_k^{(i)},\XX_k^{(-i)})+\sqrt{h\Sigma(X_k^{(i)},\XX_k^{(-i)})}
  \xi_{k+1}^{(i)}\in\R^d\qquad\forall i\in\{1,\dots,M\}
\end{equation}
with $\xi_k^{(i)}\sim\Nv(0,{\rm I}_d)$, iid for all $i\in\{1,\dots,M\}$, $k\in\N$.

In this section we will discuss an ensemble-wise Metropolization for the general dynamics \eqref{eq:SDE_M_Ensemble} and, in addition, a particle-wise Metropolization for the special case \eqref{eq:discrete3} under assumption \eqref{eq:PhiM_SigmaM_product}. We emphasize that there are common interacting particle systems, such as SVGD, which do not satisfy \eqref{eq:PhiM_SigmaM_product}. %

\begin{remark}
  Equation \eqref{eq:SDE_M_Ensemble} generates an ensemble Markov chain $(\XX_k )_{k\in\N}$ in the product state space $\R^{Md}$. We emphasize that %
  the dynamics of each individual particle $(X^{(i)}_k)_{k\in\N}$, $i=1,\ldots,M$, does not necessarily satisfy the Markov property with respect to the filtration $\mathcal F_t^{(i)} = \sigma(X_s^{(i)},s\le t)$, since the particle-wise drift and diffusion term may depend on the entire ensemble $\XX_k$.
\end{remark}

\subsection{Ensemble-wise Metropolization}\label{sec:ensemble_met}
Our goal in the
  following is to generate a Markov Chain $(\XX_k)_{k\in\N}$ that
  converges in distribution to the product target measure
  \begin{equation*}
    \ppi(\mathbf x)  = \prod_{i=1}^M \pi( x^{(i)}).
\end{equation*}
To this end, given the state $\XX_k$, we use \eqref{eq:SDE_M_Ensemble} %
to generate the ensemble proposal
\begin{equation}\label{eq:Ykp1}
  \YY_{k+1}=\XX_k+h\PPhi(\XX_k)+\sqrt{h\SSigma(\XX_k)}\zzeta_{k+1}
\end{equation}
in the $k$th step of the algorithm. Here as before
$\zzeta_{k+1}\sim\Nv(0,{\rm I}_{Md})$, so that the proposal kernel
$\QQ \colon \R^{Md} \times \cB(\R^{Md}) \to [0,1]$ on the product
space $\R^{Md}$ is given by
\begin{equation*}
  \QQ(\xx,\cdot) = \Nv\left( \xx + h\PPhi(\xx), h
    \SSigma(\xx)\right) \qquad \forall\xx \in\R^{Md}.
\end{equation*}
In order for $\QQ(\xx,\cdot)$ to possess a well-defined Lebesgue density,
throughout we assume the following:
\begin{assumption}\label{assum:Cov}
  For each %
  $\xx \in \R^{Md}$ the matrix
  $\SSigma(\xx) \in \R^{Md\times Md}$ in
  \eqref{eq:SDE_M_Ensemble} is regular.
\end{assumption}%
The Lebesgue density of $\QQ(\xx,\cdot)$ then reads
\begin{equation}\label{eq:qq}
  \qq(\xx, \yy)
   = 
  \frac{1}{\det\left(2\pi\, h\, \SSigma(\xx)  \right)^{1/2}}
  \exp\left(- \frac 1{2h} \left\| \SSigma(\xx)^{-1/2} \left(\yy - \xx - h \PPhi(\xx) \right) \right\|^2 \right)\qquad\forall \xx,\yy\in\R^{Md},
\end{equation}
which is a positive but in general \emph{not} symmetric function,
i.e.\ $\qq(\xx, \yy)>0$ for all $\xx,\yy \in \R^{Md}$ but not
necessarily $\qq(\xx,\yy) \neq \qq(\yy,\xx)$ if $\xx\neq\yy$.  The
  existence of a density allows to introduce a Metropolization step as
  follows: The ensemble proposal $\YY_{k+1}$ in \eqref{eq:Ykp1} is accepted
  with probability
\begin{equation}\label{eq:alpha}
\bs \alpha(\XX_k,\YY_{k+1}) := \begin{cases}
  \min\left(1,\frac{\ppi(\YY_{k+1}) \, \qq(\YY_{k+1},\XX_k)}{\ppi(\XX_k) \, \qq(\XX_k,\YY_{k+1})}\right) &\text{if }\ppi(\XX_k) \qq(\XX_k,\YY_{k+1})>0\\
  1 &\text{else,}
  \end{cases}
\end{equation}
in which case we set $\XX_{k+1}=\YY_k$. Otherwise $\XX_{k+1}=\XX_k$.
The resulting Metropolis-adjusted interacting particle sampling method
is summarised in \Cref{alg:MA-particle}.  

The corresponding transition kernel
$\PP \colon \R^{Md} \times \mathcal B(\R^{Md}) \to [0,1]$ of the
ensemble Markov chain $(\XX_k)_{k\in\N}$ generated by
\Cref{alg:MA-particle} is
\[
  \PP(\xx, \mathrm d \yy) = \bs \alpha(\xx,\yy)\, \qq(\xx, \yy)\, \dd
  \yy + \rr(\xx)\, \delta_\xx(\dd \yy), \qquad \rr(\xx) := 1 -
  \int_{\R^{Md}} \bs \alpha(\xx,\yy) \qq(\xx, \yy)\, \dd \yy.
\]
Note that the construction allows for general proposal kernels $\QQ$
which are dominated by the Lebesgue measure.

\begin{algorithm}[h]
  \begin{algorithmic}[1]
    \Statex \textbf{Input:}
    \begin{itemize}
    \item product target %
      density
      $\ppi = \bigotimes_{i=1}^M \pi$ on $\R^{Md}$
    \item proposal kernel %
      $\QQ$ with
      density $\qq:\R^{Md}\times\R^{Md}\to(0,\infty)$
      in \eqref{eq:qq}
\item initial probability distribution $\pi_0$ on $\R^d$
\end{itemize}
\Statex \textbf{Output:} ensemble Markov chain
$(\XX_k)_{k\in\{1,\dots,N\}}$ in state space $\R^{Md}$
\State %
draw $\xx_0\sim \otimes_{i=1}^M\pi_0$ and set initial state $\XX_0=\xx_0$
\For{$k=0,\dots,N$}
\State %
given $\XX_k=\xx_k$ draw proposal $\yy_{k+1}\sim \QQ(\xx_k,\cdot)$ %
\State compute acceptance
probability $\aalpha(\xx_k,\yy_{k+1})$ in \eqref{eq:alpha}
	
         \State draw $u\sim{\rm U}([0,1])$ and set
	$$ 
	\XX_{k+1} = \begin{cases} \yy_{k+1} &\text{if}\ u\le
          \aalpha(\xx_k,\yy_{k+1})\\
          \xx_k &\text{else}
	\end{cases}
	$$
        \EndFor
      \end{algorithmic}
      \caption{Ensemble-wise Metropolized interactive particle sampling
      }\label{alg:MA-particle}
    \end{algorithm}

     \Cref{alg:MA-particle} can be interpreted as a
      Metropolis-Hastings method to sample from the product measure
      $\ppi=\otimes_{i=1}^M\pi$ in the product space
      $\R^d\times\cdots\times\R^d\simeq\R^{Md}$.
      Hence the following
      convergence result is an immediate consequence of
      \Cref{thm:MH_Conv}:

    \begin{corollary}\label{cor:MAPS}
    Let $\pi$ possess a Lebesgue probability
    density on $\R^d$, and let $\pi_0$ be any probability distribution
    on $\R^d$. Let $\QQ$ possess a positive Lebesgue density
    $\qq:\R^{Md}\times\R^{Md}\to (0,\infty)$, and let
    $(\XX_k)_{k\in\N}$ be generated by \Cref{alg:MA-particle}. Then
    \begin{enumerate}
    \item the Markov chain $(\XX_k)_{k\in\N}$ %
      is ergodic and satisfies the strong law of large numbers
  \begin{equation}\label{eq:MAPS_LLN}
    \SM_N(F) := \frac1{NM} \sum_{k=1}^N \sum_{i=1}^M F\left( X^{(i)}_{k} \right) 
    \ \xrightarrow[N\to\infty]{\text{a.\ s.}} 
    \E_\pi[F] \qquad\forall F \in L^1_\pi(\R),
  \end{equation}
\item if additionally %
$F\in L_\pi^2(\R)$ satisfies %
    \begin{equation}\label{eq:MAPS_var}
      \bs \sigma_F^2 
      := 
      \V_\pi[F] \ \left[ 
      1 + 2 \sum_{k=1}^\infty 
      \Corr\left( \frac 1M \sum_{i=1}^M F\left( \tilde X^{(i)}_{0} \right), \frac 1M \sum_{j=1}^M F\left( \tilde X^{(j)}_{k} \right) \right) \right]
      \in (0,\infty),
    \end{equation}
    where $(\tilde \XX_k)_{k\in\N}$ denotes the 
    the stationary Markov chain generated by \Cref{alg:MA-particle} with $\pi_0 = \pi$, %
    then there holds the central limit theorem
    \begin{equation}\label{eq:MAPS_CLT}
      \sqrt{N} \left( \SM_N(F) - \E_\pi[F]\right) 
      \ \xrightarrow[N\to\infty]{\mathcal L} \ \Nv\left(0, \frac 1M \bs\sigma_F^2 \right).
    \end{equation}
    \end{enumerate}

  \end{corollary}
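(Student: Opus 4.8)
The plan is to recognize that \Cref{alg:MA-particle} is precisely \Cref{alg:MHMCMC} executed on the product state space $\R^{Md}$ with target density $\ppi=\otimes_{i=1}^M\pi$ and proposal kernel $\QQ$ with density $\qq$ from \eqref{eq:qq}, and then to invoke \Cref{thm:MH_Conv} verbatim. First I would verify that the hypotheses of \Cref{thm:MH_Conv} are satisfied in this setting: $\ppi$ is absolutely continuous on $\R^{Md}$, being a product of the absolutely continuous densities $\pi$; the proposal density $\qq$ is strictly positive on $\R^{Md}\times\R^{Md}$ by \Cref{assum:Cov}, since \eqref{eq:qq} is then a nondegenerate Gaussian density; and $\otimes_{i=1}^M\pi_0$ is a valid initial distribution. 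Hence \Cref{thm:MH_Conv} yields that $(\XX_k)_{k\in\N}$ is $\ppi$-reversible and ergodic, satisfies the strong law of large numbers $\frac1N\sum_{k=1}^N G(\XX_k)\to\E_{\ppi}[G]$ a.s.\ for every $G\in L^1_{\ppi}(\R^{Md})$, and obeys the central limit theorem for every $G\in L^2_{\ppi}(\R^{Md})$ whose asymptotic variance is nonzero and finite.

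The remaining work is to specialize these statements to the ensemble-average functional. For $F\in L^1_\pi(\R)$ define $G\colon\R^{Md}\to\R$ by $G(\xx):=\frac1M\sum_{i=1}^M F(x^{(i)})$. Then $\E_{\ppi}[|G|]\le\frac1M\sum_{i=1}^M\E_\pi[|F|]=\E_\pi[|F|]<\infty$, so $G\in L^1_{\ppi}(\R^{Md})$ with $\E_{\ppi}[G]=\E_\pi[F]$. Since $\frac1N\sum_{k=1}^N G(\XX_k)=\SM_N(F)$, the first part of \Cref{thm:MH_Conv} immediately gives \eqref{eq:MAPS_LLN}.

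For the central limit theorem I would apply \Cref{thm:MH_Conv} to $G$ again, observing that $F\in L^2_\pi(\R)$ forces $G\in L^2_{\ppi}(\R^{Md})$ by the same estimate applied to $F^2$. Because the ensemble kernel $\PP$ is $\ppi$-reversible, the asymptotic variance of $\SM_N(F)=S_N(G)$ is given by formula \eqref{eq:sigmaF} with $(\pi,F)$ replaced by $(\ppi,G)$, namely
\[
  \sigma_G^2=\V_{\ppi}[G]\Bigl[1+2\sum_{k=1}^\infty\Corr\bigl(G(\tilde\XX_0),G(\tilde\XX_k)\bigr)\Bigr].
\]
Here I would use that under $\ppi$ the coordinates $X^{(1)},\dots,X^{(M)}$ are iid $\sim\pi$, so that $\V_{\ppi}[G]=\frac1{M^2}\sum_{i=1}^M\V_\pi[F]=\frac1M\V_\pi[F]$; inserting $G(\tilde\XX_k)=\frac1M\sum_{i=1}^M F(\tilde X^{(i)}_k)$ into the correlation terms identifies $\sigma_G^2=\frac1M\bs\sigma_F^2$ with $\bs\sigma_F^2$ exactly as in \eqref{eq:MAPS_var}. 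Thus the hypothesis $\bs\sigma_F^2\in(0,\infty)$ is precisely the requirement that $\sigma_G^2$ be nonzero and finite, and the second part of \Cref{thm:MH_Conv} delivers \eqref{eq:MAPS_CLT}.

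There is no genuine obstacle here, as this is a true corollary of \Cref{thm:MH_Conv}; the only points needing (routine) care are the identification of \Cref{alg:MA-particle} with the Metropolis--Hastings chain on $\R^{Md}$, the integrability checks $G\in L^1_{\ppi}$ and $G\in L^2_{\ppi}$, and the computation $\V_{\ppi}[G]=\frac1M\V_\pi[F]$ together with the matching of the correlation series, all of which hinge on the product structure of $\ppi$.
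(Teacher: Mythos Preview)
Your proof is correct and follows the same approach as the paper: define $G(\xx)=\frac1M\sum_{i=1}^M F(x^{(i)})$, apply \Cref{thm:MH_Conv} to the Metropolis--Hastings chain on $\R^{Md}$ with target $\ppi$, and use $\E_{\ppi}[G]=\E_\pi[F]$ and $\V_{\ppi}[G]=\frac1M\V_\pi[F]$ to translate the conclusions. The paper's proof is more terse but identical in substance.
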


  \begin{proof}
    For $\xx=(x^{(1)},\dots,x^{(M)})\in\R^{Md}$ with $x^{(j)}\in\R^d$ define $G:\R^{Md}\to\R$ via
    \begin{equation*}
      G(\xx):=\frac{1}{M}\sum_{j=1}^M F(x^{(i)}).
    \end{equation*}
    Then with $\ppi:=\otimes_{j=1}^M\pi$ it holds $G\in L_{\ppi}^1(\R)$ since $F\in L_\pi^1(\R)$.
    According to \Cref{thm:MH_Conv} \ref{item:MH1} this implies
    \begin{equation*}
      \SM_N(F) = 
      \frac{1}{N}\sum_{k=1}^NG(\XX_k)
      \xrightarrow[N\to\infty]{\text{a.\ s.}}
      \E_{\ppi}[G] = \E_\pi[F],
    \end{equation*}
    which shows \eqref{eq:MAPS_LLN}. Statement \eqref{eq:MAPS_CLT} follows similarly with \Cref{thm:MH_Conv} \ref{item:MH2} and $\V_{\bs\pi}[G] = \frac 1M \V_{\pi}[F]$.
  \end{proof}

  The factor $\frac 1M$ in the asymptotic variance
  $\frac 1M \bs\sigma_F^2$ of $\SM_N (F)$ in \eqref{eq:MAPS_CLT}
  reflects the reduced variance due to running $M$ Markov chains
  instead of just one.  Moreover, we note that the particles
  $(X_k^{(i)})_{i=1}^M$ within the ensemble $\XX_k$ become iid $\pi$
  distributed as $k\to\infty$, since the limit distribution $\ppi$ is
  of product form.
  However, $X_k^{(i)}$ and $X_{k+1}^{(j)}$ do in general not become
  uncorrelated for $i\neq j$ as $k\to\infty$, which %
  is due to
  the chains interacting:

\begin{example}
  Let $\pi = \Nv(0,1)$, $M=2$ and consider the $\ppi=\pi\otimes\pi$-invariant transition kernel

  \[
    \PP(\xx,\cdot) = \Nv\left(A \xx, \frac 12 I_2 \right), \qquad A
    = \begin{pmatrix}
      \frac 12 & \frac 12\\
      \frac 12 & -\frac 12
    \end{pmatrix}.
  \]
  Then for the Markov chain $(\XX_k)_{k\in\N}$ generated with transition
  kernel $\PP$ holds for all $k\in\N$ that
  $\XX_k \sim \ppi$ implies  $\XX_{k+1}\sim \ppi$, but $\Corr(X_k^{(1)}, X_{k+1}^{(2)}) = \Corr(X_k^{(2)}, X_{k+1}^{(1)}) =   \frac 12$.
\end{example}

Let us comment on the special case of non-interacting particles,
  i.e., assume $(X_k^{(i)})_{k\in\N}$
  are %
  mutually independent Markov chains for all $i=1,\ldots,M$.
  Moreover, assume for simplicity that $X_0^{(i)}\sim\pi$ iid,
  $i=1,\dots,M$.
For such an ensemble of iid Markov chains the estimator
\[
    \SM_N(F)
    =
    \frac 1M \sum_{i=1}^M \left(\frac 1N \sum_{k=1}^N F(X_k^{(i)}) \right)
\]
is simply an average of $M$ iid path average estimators $\frac 1N \sum_{k=1}^N F(X_k^{(i)})$, $i=1,\dots,M$, of the independent particle Markov chains $(X_k^{(i)})_{k\in\N}$.
Thus, the resulting asymptotic variance $\bs \sigma_F^2$ in \eqref{eq:MAPS_var} can %
be written as
\[ 
    \bs \sigma_F^2 = 
    \V_\pi[F] \ \left[ 1 + 2 \sum_{k=1}^\infty
    \Corr\left(F\left( X^{(i)}_{0} \right), F\left( X^{(i)}_{k}
      \right) \right ) \right] = \sigma_F^2
\]
where $i\in\{1,\ldots,M\}$ is arbitrary and $\sigma_F$ as in \eqref{eq:sigmaF}.

In particular, for noninteracting particle
systems, we have for sufficiently large $N$
\[ \V(S_{ MN }(F)) \approx \V(\SM_{N}(F)) \] where
\begin{equation*}
  S_{MN}(F):=\frac{1}{MN}\sum_{k=1}^{MN}F(\tilde X_k),
\end{equation*}
with $(\tilde X_k)_{k=1}^{MN}$ being %
a \emph{one-particle} Markov chain generated by \Cref{alg:MHMCMC} with $\tilde X_0\sim\pi$, as already mentioned in \cite{GW2010}.
In our numerical experiments, see \Cref{sec:numerics}, we observe that the interaction of the $M$ particles can give improvements in the form
\[ 
    \V(S_{ MN }(F)) > \V(\SM_{N}(F)) 
\]
due to allowing each particle chain $(X_k^{(i)})_{k\in\N}$ to take
larger steps resulting from exploiting approximate information on
$\pi$ provided by the ensemble in the proposal kernels $\QQ$.
However, a rigorous proof of this statement is beyond the scope of
this work.

We conclude this subsection with a remark on \Cref{assum:Cov}.

 \begin{remark}\label{rem:subspace_property}
   \Cref{assum:Cov} is not necessarily satisfied in common particle
   dynamics. For example, $\SSigma(\xx)$ often (see 
     \Cref{sec:examples} ahead) takes the form
   \eqref{eq:PhiM_SigmaM_product} with%
   \[ \Sigma(x^{(i)}, \xx^{-(i)}) %
     = \frac 1{M-1} \sum_{i=1}^M (x^{(i)} - \bar\xx)(x^{(i)} -
     \bar\xx)^\top \in \R^{d\times d}, \qquad \bar\xx = \frac 1M
     \sum_{i=1}^M x^{(i)} \in \R^{d}, \] as the empirical covariance
   of the ensemble $\xx$.
   If $M < d$, then
   $\Sigma(x^{(i)},\xx)$ is degenerate and so is $\SSigma(\xx)$. Apart
   from the fact that $\qq$ is not well-defined in \eqref{eq:qq}
   (i.e., we would have to work with the Moore-Penrose inverse of
   $\SSigma(\xx)$), a more important consequence of the degeneracy of
   $\SSigma(\xx)$ is related to the so-called \emph{subspace
     property} of the ensemble dynamics, see e.g. \cite{ILS2013}. This
   means that for certain $\PPhi$
   each particle $X_{k+1}^{(i)}$ of the
   ensemble Markov chain $(\XX_k )_{k\in\N}$ remains in the span of the
   initial ensemble
   $V_0 := \mathrm{span}\,(X_{0}^{(1)}, \ldots, X_{0}^{(M)})$. Thus,
   the ensemble Markov chain will not explore the entire space
   $\R^{Md}$, but will stay within
   $V_0\times \cdots \times V_0 = V_0^M$. In this case, we can at best
   hope for an approximate sample of the (normalized) conditional
     density $\pi|_{V_0}$ %
   but not a sample of $\pi$ itself.
\end{remark}

\subsection{Particle-wise Metropolization}\label{sec:particle_met}
Although, we can control the average acceptance probability for ensemble-wise Metropolization via a sufficiently small stepsize $h$,  
a particle-wise acceptance or rejection seems more flexible and thus preferable.
In fact, previous ensemble MCMC algorithms introduced in \cite{GW2010, LMW2018, CW2021, DS2022} are all based on a \emph{sequentially particle-wise Metropolization} which can be understood as a Metropolis-within-Gibbs mechanism applied to particles as the ``coordinates'' or components of the ensemble state \cite{RoRo2006}.
We now discuss this approach in detail
providing a theoretical description and again a convergence result.
Additionally, we briefly comment on the possibility of
a \emph{simultaneous} (rather than sequential)
Metropolization below.
 
To define particle-wise Metropolization we assume the structural condition
  \eqref{eq:PhiM_SigmaM_product} on the drift $\PPhi$ and diffusion
  $\SSigma$, so that \eqref{eq:discrete3} describes the particle
  dynamics.
In this case, the corresponding proposal kernel $\QQ$ for the ensemble Markov chain follows a product form:
\begin{equation}\label{eq:proposal_structure}
  \QQ(\xx, \dd \yy)
  =
  Q_{\xx^{-(1)}}(x^{(1)}, \dd y^{(1)})
  \otimes
  \cdots
  \otimes	
  Q_{\xx^{-(M)}}(x^{(M)}, \dd y^{(M)}),
\end{equation}
where %
\[
  Q_{\xx^{-}}(z, \cdot) := \Nv(z + h \, \Phi (z, \xx^{-} ),
    h\, \Sigma (z, \xx^{-} ) )\qquad 
  \forall z\in\R^d,~\xx^{-}\in\R^{(M-1)d}.
\]
Under \Cref{assum:Cov}, i.e., $\Sigma(z, \xx^{-})$ is regular for any $z\in\R^d$ and $\xx^-\in\R^{(M-1)d}$, $Q_{\xx^{-}}(z,\cdot)$ %
possesses the Lebesgue density
\begin{equation}\label{eq:qxx}
  q_{\xx^-}(z,y)
  \ = \
  \frac{1}{\det\left(2\pi\, h\, \Sigma(z, \xx^-)  \right)^{1/2}}
  \exp\left(- \frac 1{2h} \left\| \Sigma(z, \xx^-)^{-1/2} \left(y - z - h \Phi(z, \xx^-) \right) \right\|^2 \right) > 0.
\end{equation}

We then introduce the acceptance probability for the $i$th particle as
\begin{equation}\label{eq:alpha2}
  \alpha_{\xx^{-(i)}}(x^{(i)},y) :=
	 \begin{cases}
           \min\left(1, \frac{\pi(y) \, q_{\xx^{-(i)}}(y, x^{(i)})}{\pi( x^{(i)}) \, q_{\xx^{-(i)}}(x^{(i)},y)}\right) & %
           \text{if}\ \pi(x^{(i)}) \, q_{\xx^{-(i)}}(x^{(i)},y) > 0\\
           1 &\text{else}
	 \end{cases}
    \qquad \forall i\in\{1,\dots,M\},
\end{equation}
and define the MH transition kernel $P_{\xx^{-(i)}} \colon \R^{d} \times \mathcal B(\R^{d})\to[0,1]$, $\xx \in \R^{Md}$, for the $i$th particle as
\begin{align}\label{eq:Pi}
      P_{\xx^{-(i)}}(x^{(i)}, \dd y) 
      := 
      \alpha_{\xx^{-(i)}}(x^{(i)},y)\, Q_{\xx^{-(i)}}(x^{(i)}, \dd y) 
      + 
      r_{\xx^{-(i)}}(x^{(i)})
      \delta_x(\dd y), 
\end{align}
where $r_{\xx^{-(i)}}(x^{(i)}) :=  1 - \int_{\R^d} \alpha_{\xx^{-(i)}}(x^{(i)},y)\, Q_{\xx^{-(i)}}(x^{(i)}, \dd y)$ denotes the rejection probability just for the $i$th particle. 
By construction $P_{\xx^{-}}$ is $\pi$-reversible for any $\xx^-\in\R^{(M-1)d}$.
Based on $P_{\xx^{-(i)}}$ we can now describe a sequential and simultaneous Metropolization.

\paragraph{Sequential particle-wise Metropolization}
For this method interpret the particles $X^{(i)}$ as ``block coordinates'' in the product state space $\R^{Md}$, and update them sequentially as in classical Metropolis-within-Gibbs sampling.
To this end, we define transition kernels $\PP^{(i)}\colon \R^{Md} \times \cB(\R^{M,d})\to[0,1]$, $i=1,\ldots,M$, for the whole ensemble, which only update the $i$th particle via
\begin{equation}\label{eq:PPi}
    \begin{split}
      \PP^{(i)}(\xx, \mathrm d \yy)
      :=&
      \delta_{x^{(1)}}(\mathrm d y^{(1)})
      \otimes\cdots \otimes
      \delta_{x^{(i-1)}}(\mathrm d y^{(i-1)})
      \otimes
      P_{\xx^{-(i)}}(x^{(i)} , \mathrm d y^{(i)})\\
      &\otimes
      \delta_{x^{(i+1)}}(\mathrm d y^{(i+1)})
      \otimes \cdots \otimes
      \delta_{x^{(M)}}(\mathrm d y^{(M)}).
      \end{split}
\end{equation}
The sequential particle-wise Metropolization of the interacting particle system \eqref{eq:SDE_M_Ensemble} is defined as the sequential application of the transition kernels %
$\PP^{(i)}$
\begin{align}\label{eq:PP_seq}
      \PP_\mathrm{seq}(\xx, \mathrm d \yy)
      := \PP^{(1)}\cdots\PP^{(M)}(\xx, \mathrm d \yy).
\end{align}
We point out that $\PP_{\rm seq}$ is $\ppi$-invariant since all $\PP^{(i)}$ are $\ppi$-invariant. 
In general $\PP_{\rm seq}$ is not $\ppi$-reversible however (cf.\ deterministic scan Gibbs sampling). 
Reversibility can be retained if the order updating for the particles is not deterministic but random (cf.\ random scan Gibbs sampling), i.e., with $\Psi_M$ denoting the set of all $M!$ permutations $\psi\colon \{1,\ldots,M\} \to \{1,\ldots,M\}$ we set
\begin{align}\label{eq:PP_seq_2}
      \tilde\PP_\mathrm{seq}(\xx, \mathrm d \yy)
      := 
      \sum_{\psi \in \Psi_M} \frac 1{M!} \, \PP^{(\psi(1))}\cdots\PP^{(\psi(M))}(\xx, \mathrm d \yy).
\end{align}
An algorithmic description of $\PP_\mathrm{seq}$ in \eqref{eq:PP_seq} is given in \Cref{alg:MA-particle_seq}.

\begin{algorithm}[t]
  \begin{algorithmic}[1]
    \Statex \textbf{Input:} \begin{itemize}
    \item target density $\pi$ on $\R^d$
    \item ensemble dependent proposal kernel $Q_{\xx^-}$ with density
      $q_{\xx^-}:\R^{d}\times\R^{d}\to (0,\infty)$ in \eqref{eq:qxx}
    \item initial probability distribution $\pi_0$ on $\R^d$
    \end{itemize}
    \Statex \textbf{Output:} ensemble Markov chain
    $(\XX_k)_{k\in\{1,\dots,N\}}$ in state space $\R^{Md}$
    \State
    draw $\xx_0\sim \otimes_{i=1}^M\pi_0$ and set initial state $\XX_0=\xx_0\in\R^{Md}$
    
      \For{$k=0,\dots, N$}
      \State 
      given $\XX_k = \xx_k$ initialize $\xx = (x^{(1)}_k,\ldots,x^{(M)}_k) $
      \For{$i = 1,\dots,M$}
      \State
      draw proposal $y^{(i)} \sim Q_{\xx^{-(i)}}(x^{(i)},\cdot)$
      \State 
      compute particle acceptance probability $\alpha_{\xx^{-(i)}}(x^{(i)}, y^{(i)})\in [0,1]$ in \eqref{eq:alpha2}
      \State 
      draw $u_i\sim{\rm U}([0,1])$ and set
      \[
          x^{(i)} = \begin{cases}
          y^{(i)} &\text{if } u_i \le \alpha_{\xx^{-(i)}}(x^{(i)}, y^{(i)})\\
          x^{(i)} &\text{else}
	       \end{cases}
      \]
	   \EndFor
    \State set $\XX_{k+1} = \xx$
	\EndFor
      \end{algorithmic}
      \caption{Sequentially particle-wise Metropolized interactive particle sampling
      }\label{alg:MA-particle_seq}
    \end{algorithm}

    It is not possible to apply \Cref{thm:MH_Conv} to prove
      the ergodicity of the chains generated by
      \Cref{alg:MA-particle_seq}, as neither $\PP_\mathrm{seq}$ nor
      $\tilde\PP_\mathrm{seq}$ are %
      MH transition kernels. We can however leverage the results of
      \cite{RoRo2006} about Metropolis-within-Gibbs algorithms to
      establish ergodicity:

\begin{theorem}[Cf. \cite{RoRo2006}]\label{theo:MAPS2_seq}
  Let the $\ppi$-invariant Markov chain $(\XX_k)_{k\in\N}$ be
  generated by \Cref{alg:MA-particle_seq}.  Assume that the
  Lebesgue density $q_{\xx^-}$ of $Q_{\xx^-}$ satisfies
  $q_{\xx^-}(z,y) > 0$ for all $z$, $y \in \R^d$ and all
  $\xx^{-} \in \R^{(M-1)d}$, and that
  \begin{equation}\label{eq:assum_Dn}
    \lim_{n\to\infty} \Prob\left( \exists k \in \{1,\ldots,n\} \text{ s.t. } \forall i=1,\ldots,M \text{ holds } %
    X^{(i)}_{k} \neq X^{(i)}_{k-1} \,\middle |\, \XX_0 = \xx\right)
    =
    1
    \qquad
    \forall \xx\in\R^{Md}.
  \end{equation}
  Then $(\XX_k)_{k\in\N}$ is ergodic and satisfies a law of large
    numbers \eqref{eq:MAPS_LLN}.%
\end{theorem}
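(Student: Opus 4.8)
The plan is to show that the ensemble kernel $\PP_{\rm seq}$ is $\ppi$-invariant, $\ppi$-irreducible, aperiodic and positive Harris recurrent on $\R^{Md}$, and then to invoke the standard ergodic theory for positive Harris chains (Meyn--Tweedie) to obtain both total-variation convergence and a law of large numbers for every $G\in L^1_\ppi(\R)$. The reduction of \eqref{eq:MAPS_LLN} to the latter is then carried out exactly as in the proof of \Cref{cor:MAPS}, by specializing to $G(\xx)=\frac1M\sum_{i=1}^M F(x^{(i)})$.

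$\ppi$-invariance has already been observed: each $P_{\xx^-}$ is $\pi$-reversible, so each $\PP^{(i)}$ leaves $\ppi$ invariant, and hence so does $\PP_{\rm seq}=\PP^{(1)}\cdots\PP^{(M)}$. For $\ppi$-irreducibility and aperiodicity I would use the positivity hypothesis $q_{\xx^-}>0$: consider, within one sweep from a state $\xx$, the event that every one of the $M$ particle updates is accepted. Conditionally on $\xx$ and on this event, the new ensemble is drawn from a Lebesgue density which is a product of the strictly positive factors $q_{\xx^{-(i)}}$ and acceptance probabilities $\alpha_{\xx^{-(i)}}$ evaluated along the progressively updated ensemble, and is therefore strictly positive at every $\yy$ with $\ppi(\yy)>0$. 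Consequently $\PP_{\rm seq}(\xx,A)>0$ for every $\xx\in\R^{Md}$ and every measurable $A$ with $\ppi(A)>0$; this yields $\ppi$-irreducibility, and a short cyclic-decomposition argument then rules out periodicity.

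The main obstacle is Harris recurrence, which for Metropolis-within-Gibbs-type chains does \emph{not} follow from $\ppi$-irreducibility and aperiodicity alone: the single-particle kernels $\PP^{(i)}$ carry atoms (due to rejections), and a priori the ensemble chain could remain forever on a $\ppi$-null set without ever ``refreshing'' all of its particles. This is exactly the difficulty studied in \cite{RoRo2006}, and condition \eqref{eq:assum_Dn} is precisely the hypothesis isolated there: from any deterministic initial state $\xx$, almost surely there is a finite sweep $\tau$ at which all $M$ particles move simultaneously. I would follow the argument of \cite{RoRo2006}: on $\{\tau=k\}$ the state $\XX_k$ has, conditionally on the past and on all particles moving at that sweep, a Lebesgue density that is strictly positive on $\{\ppi>0\}$ (again by positivity of the $q_{\xx^{-(i)}}$ and $\alpha_{\xx^{-(i)}}$), and this smoothing at the finite time $\tau$ is the device used to verify the Harris condition $\Prob_\xx(\tau_C<\infty)=1$ for every $\xx$ and a fixed accessible small set $C$; combined with $\ppi$-invariance this gives positive Harris recurrence. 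The only genuinely technical point is to transcribe the coordinate-wise refreshment bookkeeping of \cite{RoRo2006} to the present setting, where the per-particle kernels $P_{\xx^{-(i)}}$ additionally depend on the frozen positions of the other particles; since these enter only through the still strictly positive densities $q_{\xx^{-(i)}}$ and ratios $\alpha_{\xx^{-(i)}}$, their argument carries over essentially verbatim.

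Finally, with $\PP_{\rm seq}$ shown to be aperiodic and positive Harris recurrent with invariant probability $\ppi$, the ergodic theorem for Harris chains yields $d_{\rm TV}(\mu\PP_{\rm seq}^n,\ppi)\to0$ for every initial law $\mu$ --- in particular for $\mu=\otimes_{i=1}^M\pi_0$ --- and $\frac1N\sum_{k=1}^N G(\XX_k)\to\E_\ppi[G]$ almost surely for every $G\in L^1_\ppi(\R)$. Applying the latter to $G(\xx)=\frac1M\sum_{i=1}^M F(x^{(i)})\in L^1_\ppi(\R)$ for arbitrary $F\in L^1_\pi(\R)$, and using $\E_\ppi[G]=\E_\pi[F]$, gives \eqref{eq:MAPS_LLN} and hence the claim.
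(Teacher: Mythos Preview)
Your proposal is correct and aligns with the paper's approach: the paper does not supply an independent proof but simply attributes the result to \cite{RoRo2006}, treating \Cref{alg:MA-particle_seq} as a Metropolis-within-Gibbs sampler with particles as block coordinates. Your sketch is precisely the Roberts--Rosenthal argument spelled out in this setting---$\ppi$-invariance from the $\pi$-reversibility of each $P_{\xx^{-(i)}}$, $\ppi$-irreducibility and aperiodicity from the positivity of $q_{\xx^-}$, and Harris recurrence from the simultaneous-acceptance condition \eqref{eq:assum_Dn}---so there is nothing to compare; you have filled in what the paper leaves as a citation.
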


\begin{remark}
  A Markov chain with the transition kernel $\tilde\PP_\mathrm{seq}$ instead of $\PP_\mathrm{seq}$ satisfies besides ergodocity and a law of large numbers \eqref{eq:MAPS_LLN} also a central limit theorem \eqref{eq:MAPS_CLT} given \eqref{eq:MAPS_var} and given the assumptions of \Cref{theo:MAPS2_seq}.
   However, also for the non-reversible transition kernel $\PP_\mathrm{seq}$ (i.e.\
  \Cref{alg:MA-particle_seq}) sufficient conditions for the central
  limit theorem \eqref{eq:MAPS_CLT} with $\bs\sigma^2_F$ as in
  \eqref{eq:MAPS_var} can be stated. We refer to \cite[Section
  5]{RoRo2004} for more details.
\end{remark}

The additional assumption \eqref{eq:assum_Dn} for ergodicity in \Cref{theo:MAPS2_seq} is rather mild. 
It states that for any initial ensemble $\XX_0 = \xx$, the probability of all proposed particles being accepted at least once for $k\in\{1,\dots,n\}$ tends to $1$ as $n\to\infty$.
In practice, this should be satisfied in any reasonable setting and can also be easily checked during simulation (by observing whether all proposed particles are accepted at least once).

\paragraph{Simultaneous particle-wise Metropolization}
For interacting particle systems described by
\eqref{eq:SDE_M_Ensemble} it seems %
unintuitive to update the particles in the ensemble sequentially,
i.e., to use a different drift and diffusion term for each of the
particles due to interaction and the sequentially updated
ensemble. Moreover, from a computational perspective, a
  \emph{simultaneous} (rather than sequential) particle-wise
  Metropolization would be most desirable as it allows for the
  parallel processing of all particles within the ensemble: this
  method, in each step of the algorithm, decides for each particle in
  the ensemble independently, whether the proposal for
  this particle is accepted or rejected. %
Formally, this is described by the ensemble transition operator 
\[
    \PP_\mathrm{sim}(\xx, \mathrm d \yy)
    =
    P_{\xx^{-(1)}}(x^{(1)} , \mathrm d y^{(1)})
    \otimes
    \cdots
    \otimes
    P_{\xx^{-(M)}}(x^{(M)} , \mathrm d y^{(M)}).
\]
However, it turns out that this approach %
in general fails to ensure ergodicity. A detailed discussion
including counterexamples is provided in \Cref{sec:pw-sim}.

\subsection{Block-wise Metropolization}\label{sec:block_met}
The sequential particle-wise Metropolization requires an inner loop over all $M$ particles which results also in the sequential computation of $M$ different drift and diffusion terms $\Phi(x^{(i)}, \xx^{-(i)})$ and $\Sigma(x^{(i)}, \xx^{-(i)})$, $i=1,\dots,M$.
In order to allow for parallelization, we discuss a \emph{block-wise Metropolization} which can be seen as bridging the ensemble-wise and particle-wise variants.
Instead of applying the particle-wise transition
kernels $P_{\xx^{-(i)}}$ sequentially, we propose to apply 
transition kernels block-wise.

To explain the idea, we first fix a ``block''
  $\bb\subseteq\{1,\dots,M\}$ of indices.  We then write
\begin{equation*}%
    \xx^{(\bb)} 
    := 
    (x^{(j)})_{j \in \bb}
    \in  \R^{|\bb|d}
    \qquad
    \text{and}
    \qquad
    \xx^{-(\bb)}
    :=
    ( x^{(j)} )_{j \in \{1,\ldots,M\}\setminus\bb}
    \in \R^{(M-|\bb|)d}\, .
\end{equation*}
In the following we assume again a product form \eqref{eq:proposal_structure} of the ensemble proposal kernel $\QQ$.
We then define corresponding block-wise proposal kernels
$\QQ_{\xx^{-(\bb)}}\colon \R^{|\bb| d} \times \cB (\R^{|\bb| d})\to[0,1]$
by
\begin{equation}\label{eq:Qx_block}
    \QQ_{\xx^{-(\bb})}(\zz, \cdot) 
    := \Nv\left(\zz + h \, \PPhi(\zz, \xx^{-(\bb)}),
    h\, \SSigma (\zz, \xx^{-(\bb)})\right)
    \qquad 
    \forall \zz\in\R^{|\bb|d},~\xx^{-(\bb)}\in\R^{(M-|\bb|)d}
  \end{equation}
where
\begin{equation*}%
  \PPhi(\zz, \xx^{-(\bb)})
  := \begin{pmatrix}
    \Phi\left( z^{(1)}, \, \left(\zz^{-(1)},\xx^{-(\bb)} \right) \right)\\
    \vdots\\
    \Phi\left( z^{(|\bb|)}, \, \left(\zz^{-(|\bb|)},\xx^{-(\bb)}\right) \right)
  \end{pmatrix}
\end{equation*}
and
\begin{equation*}
  \SSigma(\zz, \xx^{-(\bb)})
  := 
  \begin{pmatrix}
    \Sigma\left(z^{(1)}, \, \left(\zz^{-(1)},\xx^{-(\bb)} \right) \right) &&\\
    &\ddots&\\
    &&\Sigma\left( z^{(|\bb|)}, \, \left(\zz^{-(|\bb|)},\xx^{-(\bb)}\right) \right)
  \end{pmatrix}.
\end{equation*}

Under \Cref{assum:Cov}, for every $\xx^{-(\bb)} \in \R^{(M-|\bb|)d}$,
the proposal kernel $\QQ_{\xx^{-(\bb)}}$ possesses a Lebesgue density $q_{\xx^{-(\bb)}} \colon \R^{|\bb|d} \times \R^{|\bb|d}\to (0,\infty)$ 
\[ 
  q_{\xx^{- (\bb)}}(\zz,\yy)
  \ = \
  \frac{1}{\det\left(2\pi\, h\, \SSigma(\zz, \xx^{-(\bb)})  \right)^{1/2}}
  \exp\left(- \frac 1{2h} \left\| \SSigma(\zz, \xx^{-(\bb)})^{-1/2} \left(\yy - \zz - h \PPhi(\zz, \xx^{-(\bb)}) \right) \right\|^2 \right). 
\]
This allows to define block-wise acceptance probabilities of the form
\begin{equation}\label{eq:alpha_block}
    \bs \alpha_{\xx^{-(\bb)}}(\zz, \yy) :=
	   \begin{cases}
           \min\left(1, \frac{{\ppi}(\yy) \, q_{\xx^{-(\bb)}}(\yy, \zz)}{\ppi(\zz) \, q_{\xx^{-(\bb)}}(\zz,\yy)}\right) & 
           \text{if}\ {\ppi}(\zz) \, q_{\xx^{-(\bb)}}(\zz,\yy) > 0\\
           1 &\text{else}
	 \end{cases}
\end{equation}
with $\yy, \zz\in\R^{|\bb| d}$, where %
${\ppi}(\zz) := \prod_{i=1}^{|\bb|} \pi(z^{(i)})$ for
$\zz\in\R^{|\bb| d}$.  We finally define a block-wise MH transition kernel
$\PP_{\xx^{-(\bb)}}\colon \R^{|\bb| d} \times \mathcal{B}
(\R^{|\bb|} d)\to[0,1] $, $\xx^{-(\bb)} \in \R^{(M-|\bb|)d}$ describing the update
of the $|\bb|$ particles within a block $\mathbf b$ by
\[
      \PP_{\xx^{-(\bb)}}(\zz, \dd \yy) 
      := 
      \bs \alpha_{\xx^{-(\bb)}}(\zz,\yy)\, \QQ_{\xx^{-(\bb)}}(\zz, \dd \yy) 
      + 
      r_{\xx^{-(\bb)}}(\zz) \delta_{\zz}(\dd \yy), 
\]
where $r_{\xx^{-(\bb)}}(\zz) :=  1 - \int_{\R^{|\bb|d}} \alpha_{\xx^{-(\bb)}}(\zz,\yy)\, \QQ_{\xx^{-(\bb)}}(\zz, \dd \yy)$.
Again, the transition kernel $\PP_{\xx^{-(\bb)}}$ is reversible w.r.t.~$\ppi = \Pi_{i=1}^{|\bb|} \pi$ by construction for any $\xx^{-(\bb)} \in \R^{(M-|\bb|)d}$.

Consider now a partition of the ensemble of $M$ particles via
\begin{equation}\label{eq:blocks}
  \{1,\dots,M\}=\dot\bigcup_{i=1}^L \bb_i.
\end{equation}
Analogous to \eqref{eq:PPi} we define transition kernels $\PP^{(\bb_j)}\colon \R^{M d}\times \mathcal B(\R^{M d})\to [0,1]$, which only update
particles corresponding to the block $\bb_j$, via
\begin{equation*}
  \PP^{(\mathbf b_j)}(\xx,\dd \yy)
  = \bigotimes_{k\notin\bb_j}\delta_{x^{(k)}}(\mathrm d y^{(k)})\otimes
  \PP_{\xx^{-(\bb_j)}}(\xx^{(\bb_j)},\dd \yy^{(\bb_j)}).
\end{equation*}
By construction each $\PP^{(\mathbf b_j)}$, $j=1,\ldots,L$, is invariant w.r.t.~product target $\ppi = \Pi_{i=1}^M \pi$. 

The block-wise Metropolization of the interacting particle system
\eqref{eq:SDE_M_Ensemble} is then defined %
as the sequential application of the block transition kernels
$\PP^{({\bb_j})}$, i.e.,
\begin{align}\label{eq:PP_block}
      \PP_\mathrm{block}(\xx, \mathrm d \yy)
      := \PP^{(\bb_1)}\cdots\PP^{(\bb_L)}(\xx, \mathrm d \yy).
\end{align}
We note that ensemble- and particle-wise Metropolization in 
  \Cref{sec:ensemble_met} and \ref{sec:particle_met} can be viewed as
  special cases of the block-wise Metropolization
  with $\bb_j=\{j\}$ for all $j=1,\dots,M$, and $\bb_1=\{1,\dots,M\}$,
  respectively.
Analogous to \Cref{theo:MAPS2_seq} we obtain:

\begin{theorem}[Cf. \cite{RoRo2006}]\label{theo:MAPS_block}
  Let the $\ppi$-invariant Markov chain $(\XX_k)_{k\in\N}$ be
  generated by \Cref{alg:MA-particle_block}. 
    Assume that for any $\bb\subseteq\{1,\dots,M\}$
    the Lebesgue density of $\QQ_{\xx^{-(\bb)}}$ satisfies
    $q_{\xx^{-(\bb)}}(\zz,\yy) > 0$ for all $\zz, \yy \in \R^{|\bb|d}$
    and all $\xx^{-(\bb)} \in \R^{(M-|\bb|)d}$.  If
  \begin{equation}\label{eq:assum_Dn}
    \lim_{n\to\infty} \Prob\left( \exists k \in \{1,\ldots,n\} \text{ s.t. } \forall j=1,\ldots,L \text{ holds } %
    \XX^{(\mathbf b_j)}_{k} \neq \XX^{(\mathbf b_j)}_{k-1} \,\middle |\, \XX_0 = \xx\right)
    =
    1
    \qquad
    \forall \xx\in\R^{Md},
  \end{equation}
  then $(\XX_k)_{k\in\N}$ is ergodic and satisfies a law of large
    numbers \eqref{eq:MAPS_LLN}.
\end{theorem}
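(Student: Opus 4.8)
The plan is to recognize $\PP_\mathrm{block}$ as a deterministic-scan Metropolis-within-Gibbs sampler on the product space $\R^{Md}$, where the ``coordinate blocks'' are the $\xx^{(\bb_j)}$, and then invoke the convergence theory of Roberts and Rosenthal \cite{RoRo2006} for such samplers. First I would record the two structural facts that make the framework apply: (a) each $\PP^{(\bb_j)}$ is $\ppi$-invariant — indeed $\PP_{\xx^{-(\bb_j)}}$ is $\pi^{\otimes|\bb_j|}$-reversible by the standard Metropolis--Hastings detailed-balance computation, and integrating the frozen Dirac factors against the conditional of $\ppi$ given $\xx^{-(\bb_j)}$ (which is exactly $\pi^{\otimes|\bb_j|}$ since $\ppi$ is a product) shows $\ppi \PP^{(\bb_j)} = \ppi$; hence the composition $\PP_\mathrm{block} = \PP^{(\bb_1)}\cdots\PP^{(\bb_L)}$ is $\ppi$-invariant; and (b) the blocks $\{\bb_1,\dots,\bb_L\}$ form a partition of $\{1,\dots,M\}$, so every coordinate is updated exactly once per sweep, which is the combinatorial hypothesis needed for the Gibbs-type argument.

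Next I would verify the irreducibility/aperiodicity input required by \cite{RoRo2006}. The positivity assumption $q_{\xx^{-(\bb)}}(\zz,\yy)>0$ for all $\zz,\yy$ guarantees that each block kernel $\PP_{\xx^{-(\bb_j)}}$, conditionally on the frozen coordinates, has a component that is absolutely continuous with a strictly positive density on $\R^{|\bb_j|d}$; this makes each conditional update $\phi$-irreducible and aperiodic with respect to $\pi^{\otimes|\bb_j|}$ (the usual Metropolis argument: the acceptance probability is strictly positive on a set of full measure, and the rejection atom gives aperiodicity). Assumption \eqref{eq:assum_Dn} then plays the role of the ``return to a state where every block has moved'' condition: it ensures that with probability one the chain eventually performs a full sweep in which every block genuinely updates, which is precisely what \cite{RoRo2006} needs to upgrade the coordinate-wise irreducibility to irreducibility of the composed kernel $\PP_\mathrm{block}$ on $\R^{Md}$. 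With $\ppi$-invariance plus $\ppi$-irreducibility and aperiodicity of $\PP_\mathrm{block}$, the standard Markov chain ergodic theorem (e.g.\ \cite[Thm.~6.7.2 / Thm.~7.3.2 style results]{RC2004}, or directly the conclusion of \cite{RoRo2006}) yields $d_\mathrm{TV}(\ppi_0 \PP_\mathrm{block}^n, \ppi)\to 0$ and the strong law of large numbers for $G\in L^1_\ppi$; applying this to $G(\xx)=\frac1M\sum_{i=1}^M F(x^{(i)})$ exactly as in the proof of \Cref{cor:MAPS} gives \eqref{eq:MAPS_LLN}.

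I expect the main obstacle to be the careful translation of the block-wise setting into the exact hypotheses of \cite{RoRo2006}, since that reference is phrased for single-coordinate (or block) Metropolis-within-Gibbs and one must check that: the conditional distributions of $\ppi$ coincide with the product targets $\pi^{\otimes|\bb_j|}$ used to build $\PP_{\xx^{-(\bb_j)}}$ (true by the product structure of $\ppi$, but worth stating explicitly since it is what makes the frozen-coordinate kernels genuinely be Metropolis steps for the right conditional); and that assumption \eqref{eq:assum_Dn} is the correct analogue of their recurrence-type condition $D_n\to 1$. A secondary subtlety is that $\PP_\mathrm{block}$ is in general \emph{not} $\ppi$-reversible (deterministic scan), so one cannot quote a reversible-chain ergodic theorem directly; however ergodicity and the law of large numbers do not require reversibility, only irreducibility and aperiodicity together with invariance, so this is not a real obstruction — and I would remark (as the paper does after \Cref{theo:MAPS2_seq}) that replacing the deterministic scan by a random permutation scan restores reversibility and hence a CLT. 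Since the proof is essentially identical in structure to that of \Cref{theo:MAPS2_seq} with single particles replaced by blocks, I would largely reduce to that argument, pointing out only the places where $|\bb_j|>1$ changes nothing because $\ppi$ is a product measure and the block kernels are built from the corresponding product conditionals.
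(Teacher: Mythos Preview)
Your proposal is correct and matches the paper's approach: the paper does not give an explicit proof of this theorem but simply attributes it to \cite{RoRo2006}, treating the block-wise case as a direct analogue of the particle-wise \Cref{theo:MAPS2_seq} (which is likewise stated with only a citation). Your write-up in fact supplies more detail than the paper does---the identification of $\PP_\mathrm{block}$ as a deterministic-scan Metropolis-within-Gibbs kernel, the role of the product structure of $\ppi$ in making the block conditionals equal to $\pi^{\otimes|\bb_j|}$, the interpretation of \eqref{eq:assum_Dn} as the Roberts--Rosenthal recurrence condition, and the passage to \eqref{eq:MAPS_LLN} via $G(\xx)=\tfrac1M\sum_i F(x^{(i)})$ as in \Cref{cor:MAPS}---all of which is exactly the intended argument.
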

Again, also for the non-reversible transition kernel $\PP_\mathrm{block}$ a central limit theorem \eqref{eq:MAPS_CLT} with $\bs\sigma^2_F$ as in \eqref{eq:MAPS_var} can be stated given sufficient conditions. For the latter we refer again to \cite[Section 5]{RoRo2004} for more details.

\begin{remark}
  The definition of $\PP_\mathrm{block}$ can be modified by randomly
  selecting which of the (fixed) blocks $\bb_1,\dots,\bb_L$ to
    update in each step.
  This then yields a reversible transition operator
  $\tilde{\PP}_\mathrm{block}$ satisfying also a central limit theorem
  \eqref{eq:MAPS_CLT} given \eqref{eq:MAPS_var} and given the
  assumptions of \Cref{theo:MAPS_block}.  Moreover, one could also
  think about generalizing the construction by allowing for random
  blocks $\bb$, i.e., updating each time a random selection of $|\bb|$
  particles, where also the number $|\bb|\in\{1,\ldots,M\}$ could be
  drawn randomly %
  in each step. However, we focus in the following on deterministic
  blocks and the deterministic scan block updating.
\end{remark}

\begin{algorithm}[t]
  \begin{algorithmic}[1]
    \Statex \textbf{Input:} \begin{itemize}
    \item target density $\pi$ on $\R^d$
    \item a partition $\dot\bigcup_{j=1}^L\bb_j=\{1,\dots,M\}$
    \item ensemble dependent block proposal kernels $\QQ_{\xx^{-(\bb_j)}}$ with density $\qq_{\xx^{-(\bb_j)}}:\R^{|\bb_j|d}\times\R^{|\bb_j|d}\to (0,\infty)$ for arbitrary $\xx^{-(\bb_j)} \in \R^{(M-|\bb_j|)d}$ and $j \in\{1,\ldots,L\}$ %
    \item initial probability distribution $\pi_0$ on $\R^d$
    \end{itemize}
    \Statex \textbf{Output:} ensemble Markov chain
    $(\XX_k)_{k\in\{1,\dots,N\}}$ in state space $\R^{Md}$
    \State
    draw $\xx_0\sim \otimes_{i=1}^M\pi_0$ and set initial state $\XX_0=\xx_0\in\R^{Md}$
      \For{$k=0,\dots, N$}
      \State 
      given $\XX_k = \xx_k$ initialize $\xx = (x^{(1)}_k,\ldots,x^{(M)}_k) $
      \For{$j = 1,\dots,L$}
      \State
      draw proposal $\yy \sim \QQ_{\xx^{-(\bb_j)}}(\xx^{(\bb_j)},\cdot)$ %
      \State 
      compute block acceptance probability
      $\bs \alpha_{\xx^{-(\bb_j)}}(\xx^{(\bb_j)}, \yy)\in [0,1]$ in \eqref{eq:alpha_block}
      \State 
      draw $u_j\sim{\rm U}([0,1])$ and set
      \[
          \xx^{(\bb_j)} = \begin{cases}
          \yy &\text{if } u_j \le \bs \alpha_{\xx^{-(\bb_j)}}(\xx^{(\bb_j)}, \yy)\\
          \xx^{(\bb_j)} &\text{else}
	       \end{cases}
      \]
	   \EndFor
    \State set $\XX_{k+1} = \xx$
	\EndFor
      \end{algorithmic}
      \caption{Block-wise Metropolized interactive particle sampling
      }\label{alg:MA-particle_block}
    \end{algorithm}

 \section{Algorithmic examples}\label{sec:examples}
 In this section we discuss several interacting particle methods
 which can be used
 to build proposals %
 for the Metropolization schemes
 presented in %
  \Cref{sec:MAIPS}.  
 They are inspired from continuous-time systems 
 \eqref{eq:SDE_infty}, which we translate into a time-discrete
 interacting particle system through ensemble and time discretization.
 For each example, we
   discuss whether the assumptions of \Cref{cor:MAPS} and
   \Cref{theo:MAPS2_seq} are satisfied and whether particle-wise
   metropolization is applicable.

\subsection{Parallel Metropolis-adjusted Langevin algorithm (pMALA)}
For comparison %
we start with a %
\emph{non-interacting} system, namely %
parallel %
MALA. 
This will serve to illustrate the benefits %
of interaction of particles in our numerical examples.

\paragraph{Ensemble update and proposal density}
Recall the particle-wise update scheme from \Cref{ex:ULA}
arising from discretization of the Langevin dynamics
introduced in \Cref{ex:langevin} and \Cref{ex:ULA}, i.e.\
\begin{equation}\label{eq:pMALA}
	X_{k+1}^{(i)} = X_k^{(i)} + h \nabla\log\pi(X_k^{(i)}) + \sqrt{2h} \zeta_{k+1}^{(i)}, 
	\qquad
	k\in\N,\ j = 1,\ldots,M,
\end{equation}
with iid~$\zeta_{k+1}^{(i)}\sim\Nv(0, \mathrm{Id}_d)$. 
This system satisfies the structural condition \eqref{eq:PhiM_SigmaM_product}
and the corresponding proposal kernel $\QQ$ %
has the product form \eqref{eq:proposal_structure} with
($\xx^{-}$-independent) Lebesgue density
\[
	q_{\xx^-}(x,y)
	=
	q(x,y)
    =
    \frac{1}{(4h\pi)^{d/2}}
	\exp\left(-\frac{1}{4h}\|\left(y- (x+h \nabla\log\pi(x))\right)\|^2\right),
	\qquad
	x,y \in \R^d,
\]
i.e., $\Phi(x,\xx) :=  \nabla\log\pi(x)$ and $\Sigma(x,\xx) = 2\mathrm{Id}_d$
in \eqref{eq:PhiM_SigmaM_product}.

\paragraph{Summary} 
By construction \eqref{eq:pMALA} possesses a positive proposal density, thus establishing ergodicity for ensemble-wise Metropolization, see \Cref{cor:MAPS}. Furthermore, the
structural condition in \eqref{eq:PhiM_SigmaM_product} is met, which also ensures ergodicity for both particle- and block-wise Metropolization as stated in \Cref{theo:MAPS2_seq} and \Cref{theo:MAPS_block}.

\subsection{Metropolis-adjusted ALDI (MA-ALDI)}\label{sec:MA-ALDI}

We continue with the ALDI discretization introduced in \Cref{ex:EKS} for
the continuous Wasserstein dynamics described by \eqref{eq:WassersteinDyn}.

\paragraph{Ensemble update and proposal density}
We introduce a sightly modified version of the update scheme \eqref{eq:ALDI}:
\begin{equation}\label{eq:EM_ALDI}
  \begin{aligned}
X_{k+1}^{(i)} = &X_k^{(i)} + h (\gamma {\mathrm{Id}_d} + (1-\gamma)C(\XX_k))\nabla\log\pi(X_k^{(i)})+h(1-\gamma)\frac{d+1}{M}(X_k^{(i)}-m(\XX_k))\\ & + \sqrt{2h(\gamma {\mathrm{Id}_d} + (1-\gamma)C(\XX_k))}\zeta_{k+1}^{(i)}.
\end{aligned}
\end{equation}
Here, $\zeta_{k+1}^{(i)}\sim\Nv(0,{\mathrm{Id}_d})$ iid for all
$i=1,\dots,M$ and $k\ge 0$, and $\gamma\in[0,1]$ is a fixed parameter
that determines the contribution of the covariance matrix $C(\XX_k)$
in the update.

When $\gamma = 0$, the update scheme \eqref{eq:EM_ALDI} reduces to the
ALDI method \eqref{eq:ALDI}, while $\gamma = 1$ corresponds to the
pMALA method discussed in \Cref{ex:ULA} or the previous subsection. By
choosing $\gamma\in[0,1]$, we enable a smooth transition from pMALA to
MA-ALDI. Additionally, the term $\gamma {\mathrm{Id}}$ in
$\Sigma_\gamma(x^{(i)},\xx^{-(i)}) := 2(\gamma {\mathrm{Id}} +
(1-\gamma)C(\xx))$ 
serves as variance inflation
and guarantees the positive-definiteness of
$\Sigma_\gamma(x^{(i)},\xx^{-(i)}) \in \CM^+(\R^d)$ for all
$\xx\in\R^{Md}$ when $\gamma >0$, which breaks the well-known subspace
property \cite{ILS2013}. The structural assumption
\eqref{eq:PhiM_SigmaM_product} which implies the product proposal
structure \eqref{eq:proposal_structure} and thus allows for
particle-wise Metropolization, holds with
$\Phi(x^{(i)},\xx^{-(i)}) := C(\xx)\nabla\log\pi(x^{(i)})$ and
$\Sigma_\gamma(x^{(i)},\xx^{-(i)})$ as above.

The proposal density $q_{\xx^{-}}$ of $Q_{\xx^{-}}$ in \eqref{eq:proposal_structure}
is given by
\[
	q_{\xx^{-}}(z, y)
	=
	\frac1{\det(\sqrt{2\pi\, h\, \Sigma_\gamma(z,\xx^{-})})}
	\exp\left(-\frac{1}{4h}\|\Sigma_\gamma(z,\xx^{-})^{-1/2} \left(y- (z + h \Phi_\gamma(z,\xx^{-}))\right)\|^2\right)
\]
where $\Phi_\gamma(x^{(i)},\xx^{-(i)}) = (\gamma {\mathrm{Id}} + (1-\gamma)C(\xx))\nabla\log\pi( x^{(i)})+h(1-\gamma)\frac{d+1}{M}(x^{(i)}-m(\xx))$.

Variance inflation is not necessary, i.e., we can allow for
$\gamma=0$, if $C(\XX_k) \in \CM^+(\R^d)$ %
for all $k\in\N$.
The latter necessarily requires having $M\geq d+1$ particles $X_k^{(i)}$,
since ${\mathrm{rank}}(C(\XX_k)) \le M-1$.  Note that
$C(\XX_k) \in \CM^+(\R^d)$ implies that
$C(\XX_{k+1}) \in \CM^+(\R^d)$ almost surely for \eqref{eq:EM_ALDI}
with $\gamma = 0$ within \Cref{alg:MA-particle} or
\Cref{alg:MA-particle2}. This holds since the probability of proposing
ensembles which lie in a strict subspace of $\R^d$ is zero.  Thus, we
introduce:

\begin{assumption}\label{assum:MA-ALDI}
  At least one of the following conditions is met: (i) $\gamma > 0$
  or (ii)
  $M>d$ and for the initial ensemble $\xx_0\in\R^{Md}$ it holds that $C(\xx_0) \in \CM^+(\R^d)$.
\end{assumption}

\begin{remark}\label{rem:EKS}
    From an inverse problem perspective, the interacting Langevin system \eqref{eq:EKS} can also be viewed as modification of the ensemble Kalman inversion (EKI) \cite{ILS2013} and allows for derivative-free implementations avoiding the computation of $\nabla \log\pi$. Let $\pi$ be of the form
    \[ \pi(x) \propto \exp(-\frac{1}{2}\|\Gamma^{-1/2}(G(x)-z)\|^2),\]
    where $G:\R^d\to\R^{d_z}$ is a possibly nonlinear differentiable mapping, $z\in\R^{d_z}$, and $\Gamma\in\R^{d_z\times d_z}$ symmetric positive definite. 
    Then $\nabla \log\pi$ computes as
    \[\nabla\log\pi(x) \propto -\nabla G(x)\Gamma^{-1}(G(x)-z),\]
    for $x\in\R^d$. The key idea of the derivative-free implementation is to %
    apply the approximation
    \[C(\xx) \nabla G(x)\Gamma^{-1}(G(x)-z) \approx C^{xG}(\xx)\Gamma^{-1}(G(x)-z)\]
    within the particle system, where
    \[C^{xG}(\xx) := \frac{1}{M}\sum_{i=1}^M (G(\xx^{(i)})-m(G(\xx))(\xx^{(i)}-m(\xx))^\top,\quad m(G(\xx)):=\frac{1}{M}\sum_{i=1}^M G(x^{(i)}).\]
    Using a second order Taylor expansion, one can show that the approximation error scales with the spread of the particle system \cite[Lemma~4.5]{Weissmann_2022}. The derivative-free modification of \eqref{eq:EKS}
    \begin{equation}\label{eq:EKS_derivativefree}
    \dd X_t^{(i)} = -C^{xG}(\XX_t) \Gamma^{-1}(G(X_t^{(i)})-z)+\sqrt{2C(\XX_t)}\dd B_t^{(i)},\qquad i\in\{1,\dots,M\},
    \end{equation}
    is often referred to the Ensemble Kalman sampler (EKS)
    \cite{GHWS2020}. Note that for linear maps $G$ both \eqref{eq:EKS}
    and \eqref{eq:EKS_derivativefree} coincide. For nonlinear maps $G$
    and hence, non-Gaussian distributions $\pi$, localisation of the
    empirical covariance helps to improve the approximation of $\pi$
    through the EKS \cite{RW2021}.  Finally, one can similarly apply
    our proposed Metropolis-adjusted scheme to EKS in order to reduce
    the resulting bias through avoiding computation of derivatives.
\end{remark}

\begin{remark}\label{rem:MAEKS}
  Similarly, we can consider a variance-inflated version of the ALDI
  update following \Cref{rem:EKS}:
\begin{align}\label{eq:EKS_EM}
   X_{k+1}^{(i)} 
   =
   X_{k}^{(i)} 
   -h\, C_\gamma^{xG}(\XX_k) \Gamma^{-1}(G(X_k^{(i)}) - z)
   +
   \sqrt{ 2h\, (\gamma {\mathrm{Id}} + (1-\gamma)C(\XX_k)) } \zeta_{k+1}^{(i)} \qquad i\in\{1,\dots,M\},
\end{align}
with iid~$\zeta_{k+1}^{(i)}\sim\Nv(0,{\mathrm{Id}})$, $\gamma\in[0,1]$, $C(\xx)$ defined as above and 
\[
    C_\gamma^{xG}(\xx)
    =
    \gamma {\mathrm{Id}} + \frac {1-\gamma}{M-1}\sum_{i=1}^M (x^{(i)} - m(\xx)) (G(x^{(i)}) - m(G(\xx)))^\top,
    \qquad
    m(G(\xx)))
    = \frac 1M \sum_{i=1}^M 
    G(x^{(i)}).
\]
The resulting proposal distribution $q_\xx$ is as for ALDI above with the same $\Sigma_\gamma$ but different $\Phi_\gamma$.
\end{remark}

\paragraph{Summary}
Given \Cref{assum:MA-ALDI}, the interacting particle systems
\eqref{eq:EM_ALDI} and \eqref{eq:EKS_EM} posses positive proposal densities, thus establishing ergodicity for ensemble-wise Metropolization, see \Cref{cor:MAPS}. Furthermore, the
structural condition in \eqref{eq:PhiM_SigmaM_product} is met, which also ensures ergodicity for both particle- and block-wise Metropolization as stated in \Cref{theo:MAPS2_seq} and \Cref{theo:MAPS_block}.

\subsection{Metropolis-adjusted consensus based sampling (MA-CBS)} %
Motivated by the consensus based optimization (CBO) scheme
\cite{PTTM2017}, the authors of \cite{CHSV2022}
propose a modification leading to the so-called
consensus based sampling (CBS) method. While
CBO aims to find a global minimizer of some objective function
$\mathcal V:\R^d\to\R_+$, CBS aims to generate approximate samples
from measures of the form
$\pi(x)\propto \exp(-\mathcal V(x)), \ x\in\mathcal X= \R^d$.

The theoretical %
study of CBS in \cite{CHSV2022} %
was based on %
its continuous-time
formulation in the mean-field limit %
represented %
by
the McKean-Vlasov SDE %
\begin{equation}\label{eq:CB}
    \dd X_t = -(X_t-m_\pi(\pi_t))\,{\mathrm d}t + \sqrt{2\lambda^{-1}C_{\pi}(\pi_t)}\,{\mathrm d}B_t,  \quad X_0\sim \pi_0,
\end{equation}
i.e., $\phi(x,\rho) = -(x - m_\pi(\rho))$ and $\sigma(x,\rho) = 2\lambda^{-1} C_{\pi}(\rho)$ in \eqref{eq:SDE_infty}. Here $B_t$ is a $d$-dimensional Brownian motion, $\pi_t$ denotes the probability density function of the %
state $X_t$, $t\geq 0$, and $m_\pi(\rho)$, $C_\pi(\rho)$ denote the weighted mean and covariance of a probability distribution $\rho$ defined as
\begin{align*}
  m_\pi(\rho)&=\frac{1}{\int_{\R^d}\rho(x)\pi(x)\,{\mathrm d}x}\int_{\R^{d}} x \ %
                                           \pi(x)
                  \,\rho(x)\,{\mathrm d}x,\\
  C_{\pi}(\rho) &= \frac{1}{\int_{\R^d}\rho(x)\pi(x)\,{\mathrm d}x}\int_{\R^{d}}\Big((x-m_\pi(\rho))(x-m_\pi(\rho))^\top\Big)\,%
                       \pi(x)
                     \,\rho(x)\,{\mathrm d}x. 
\end{align*}

\paragraph{Ensemble update and proposal density}
Discretizing %
\eqref{eq:CB} by the Euler-Maruyama scheme in time and using an ensemble as empirical approximation for $\pi_t$ we obtain the following %
update
\[
   X_{k+1}^{(i)}
   = 
    X_k^{(i)} - h(X_k^{(i)} - m_{\pi}(\XX_k))+\sqrt{4hC_{\pi}(\XX_k)}\zeta_{k+1}^{(i)},
    \qquad
    i=1,\ldots,M, \ k \in\N,
\]
where $\zeta_{k+1}^{(i)}\sim\Nv(0,{\mathrm{Id}_d})$ iid and
\begin{align*}
    m_{\pi}(\xx)
    &=
      \frac{1}{\sum_{i=1}^M \pi(x^{(i)})}\sum_{i=1}^M \pi(x^{(i)})
      x^{(i)},\\
    C_{\pi}(\xx) &= \frac{1}{\sum_{i=1}^M \pi(x^{(i)})} \sum_{i=1}^M
                   \pi(x^{(i)})
                   \Big((x^{(i)}-m_{\pi}(\xx)) (x^{(i)}-m_{\pi}(\xx))^\top\Big)
\end{align*}
denote the weighted empirical mean and covariance of the ensemble
$\xx\in\R^{Md}$. We mention that in the original work \cite{CHSV2022}
  the authors introduced a rescaled time, so that our system
  slightly differs from the one in \cite{CHSV2022}.
The proposed Metropolization is applicable to both discrete time systems.

Similar to the ALDI method the weighted empirical covariance
$C_{\pi}(\XX_k)$ is not necessarily positive definite.  We can again
either use sufficiently many particles $M>d$ or introduce
variance inflation which yields the %
update
\begin{align}\label{eq:EM_CBS} 
    X_{k+1}^{(i)} 
    &
    = X_k^{(i)} - h(X_k^{(i)}-m_{\pi}(\XX_k)) + \sqrt{4h(\gamma{\mathrm{Id}} + (1-\gamma)C_{\pi}(\XX_k))}\zeta_{k+1}^{(i)},
    \qquad
    i=1,\ldots,M, \ k \in\N,
\end{align}
with $\gamma \in [0,1]$ fixed. Also \eqref{eq:EM_CBS} satisfies
\eqref{eq:PhiM_SigmaM_product}, and, hence, the corresponding ensemble
proposal kernel is of the form \eqref{eq:proposal_structure}
with proposal density
\begin{align*}
	q_{\xx^-}(z, y) 
	& =
	\frac1{\det(\sqrt{2\pi\, h\, \Sigma_\gamma(z,\xx^-)})}
	\exp\left(-\frac{1}{4h}\|\Sigma_\gamma(z,\xx^-)^{-1/2} \left(%
   y- (z + h \Phi_\gamma(z,\xx^-))\right)\|^2\right)
\end{align*}
where
$\Phi_\gamma(x^{(i)},\xx^{-(i)}) = x^{(i)} - h (x^{(i)} - m_{\pi}(\xx))$ and $\Sigma_\gamma(x^{(i)},\xx^{-(i)}) = 2(\gamma{\mathrm{Id}_d} + (1-\gamma)C_{\pi}(\xx))$.
\paragraph{Summary} %

Given \Cref{assum:MA-ALDI} the CBS method \eqref{eq:EM_CBS} %
possesses a positive proposal density, thus establishing ergodicity for ensemble-wise Metropolization, see \Cref{cor:MAPS}. Furthermore, the
structural condition in \eqref{eq:PhiM_SigmaM_product} is met, which also ensures ergodicity for both particle- and block-wise Metropolization as stated in \Cref{theo:MAPS2_seq} and \Cref{theo:MAPS_block}.

\subsection{Metropolis-adjusted Stein variational gradient descent (MA-SVGD)}
Stein variational gradient descent (SVGD) is a particle-based
  sampling method for Bayesian inverse problems \cite{LW2016}.
The algorithm can formally be viewed as a particle and time-discretization of the mean field equation
\begin{equation}\label{eq:SVGD_infty} 
    \dd X_{t} = \Phi(X_t, \pi_t) \dd t, \quad X_0 \sim \pi_0,
    \qquad
    \Phi(x, \rho)
    =
    \int_{\R^d} \left[ K(y, x)\nabla \log\pi(y) + \nabla_y K(y,x)\right] \, \pi_k(\dd y)
\end{equation}
where $\pi_t$ denotes the distribution of $X_t$ and $K$ is a fixed
kernel function \cite{NEURIPS2020_3202111c,SVGDgradientflow}.

The \emph{deterministic} interacting particle system
termed SVGD and originally proposed in \cite{LW2016} reads
\begin{align*}\label{eq:svgdup} 
  X_{k+1}^{(i)} & = X_k^{(i)}  + h_{k+1} \Phi(X_k^{(i)},\XX_k), &
  \Phi(x,\xx) &= \frac{1}{M} \sum_{j=1}^M K(x^{(j)},x) \nabla\log \pi(x^{(j)}) +\nabla_{x^{(j)}} K(x^{(j)},x)
\end{align*}
where $h_k>0$ denote the %
step size in the $k$th step.
While this %
method %
has shown promising results in applications, there is still a lack of
theoretical understanding. Specifically, convergence results %
have only been established
for the long time behavior of the mean field limit in
continuous time, %
where
exponential convergence of the
Kullback-Leibler divergence was shown under strong assumptions on
the underlying kernel $K$ \cite{NEURIPS2020_3202111c,Duncan2019OnTG}.

\paragraph{Ensemble update and proposal density}
Motivated by %
  the Langevin dynamics, a \emph{stochastic} version of SVGD has recently
  been proposed in \cite{Gallego2018StochasticGM}. This approach is driven by
a $\ppi$-invariant coupled system of SDEs \cite{Nsken2021SteinVG,Gallego2018StochasticGM} that describes the time-continuous dynamics of the ensemble $\XX_t$. 
Notably, the mean field limit for $M\to\infty$ of this system of SDEs is %
the same as in
\eqref{eq:SVGD_infty}.

The stochastic particle dynamics reads
\begin{equation}\label{eq:EM_SVGD}
	\XX_{k+1} = \XX_{k} + h\PPhi(\XX_k)+ \sqrt{h \SSigma(\XX_k) } \zeta_{k+1},
\end{equation}
where $\zeta_{k}\sim \Nv(0, \mathrm{Id}_{Md})$,
\begin{align*}
    \PPhi(\xx) &:= (\Phi(x^{(1)},\xx^{(-1)}),\dots,\Phi(x^{(M)},\xx^{(-M)}))^\top,\\
    \Phi(x^{(j)},\xx^{-(j)}) &:=  
    \frac{1}{M} \sum_{i=1}^M K(x^{(j)}, x^{(i)}) \nabla\log \pi(x^{(i)}) +\nabla_{x^{(i)}} K(x^{(j)}, x^{(i)}),
\end{align*}
and 
\[
	\SSigma(\mathbf x) = \frac 2M \, \XS^\top  \diag_d(\mathcal K(\XX),\dots,\mathcal K(\XX)) \XS
      \]
where $\mathcal K(\xx) \in \CM^+(\R^d)$ is given by $\mathcal K(\xx)_{i,j} = K(x^{(i},x^{(j)})$, $i,j=1,\dots,M$, and $\XS \in \R^{Md\times Md}$ denotes a permutation matrix defined by
\[
  \XS = \begin{pmatrix}
    \mathbf{1}_{1,1} %
    & \dots & \mathbf{1}_{M,1}\\
    \vdots & \ddots & \vdots %
    \\
    \mathbf{1}_{1,d} %
    & \dots & \mathbf{1}_{M,d} 
    \end{pmatrix}, 
    \qquad
    \mathbf{1}_{k,\ell}\in\R^{M\times d} \ \text{with}\ (\mathbf{1}_{k,\ell})_{n,m} = \begin{cases} 1, & k=n, \ell = m\\ 0, & \text{else}\end{cases}. 
\] 
Note that $\XS$ is an orthogonal matrix such that $\XS^\top \XS = \XS \XS^\top = {\mathrm{Id}}_{Md}$, and it holds
$$
\sqrt{h\SSigma(\xx)} = \sqrt{2h/M} \XS^\top \diag_d(\sqrt{\mathcal K(\xx)},\dots,\sqrt{\mathcal K(\xx)}) \XS.$$
Moreover, with $\mathbf{K}_{ij} := K(x^{(i)},x^{(j)})\, \mathrm{Id}_{d}$,  we can write 
\[
    \SSigma(\xx) = 
    \frac 2M
    \begin{pmatrix}
    \mathbf{K}_{11} %
    & \cdots & \mathbf{K}_{1M}\\
    \vdots & \ddots & \vdots %
    \\
    \mathbf{K}_{M1} %
    & \ldots & \mathbf{K}_{MM}
    \end{pmatrix}.
\]
Thus, \eqref{eq:EM_SVGD} does not satisfy the structural assumption
\eqref{eq:PhiM_SigmaM_product} for particle-wise Metropolization as in
\Cref{alg:MA-particle2}.  However, the associated proposal kernel
$\QQ$ in $\R^{Md}$ possesses the positive Lebesgue density
\[
    \qq(\xx,\yy)
    =
    \frac1{\det(\sqrt{2\pi\, h\, \SSigma(\xx)})}
	\exp\left(-\frac{1}{2h}\|\SSigma(\xx)^{-1/2} \left(\yy - (\xx + h \PPhi(\xx))\right)\|^2\right)
    > 0,
\]
since $\mathcal K(\xx)$ and, therefore, $\SSigma(\xx)$ is positive definite for distinct particles $x^{(i)} \neq x^{(j)}$, $i\neq j$.

\begin{remark}
  Similar
  to
  CBS and ALDI, we can improve the stability %
  of the %
  system by introducing variance inflation in order
  to avoid the kernel matrix $\mathcal K(\xx)$ from becoming close to singular
  in case the particles collapse.
  To do so, one can replace $\SSigma(\xx)$ by
  $\SSigma_\gamma(\xx) := \gamma {\mathrm{Id}_{Md}} + (1-\gamma)
  \SSigma(\xx)$, $\gamma\in(0,1)$.  However, in our numerical examples
  such a variance inflation was not necessary and, hence, not
  applied.
\end{remark}

\paragraph{Summary}
The stochastic SVGD interacting particle system \eqref{eq:EM_SVGD}
does \emph{not} %
fit into the setting of particle- or block-wise Metropolization as introduced
in %
 \Cref{sec:particle_met} and \ref{sec:block_met}, since the
  structural condition \eqref{eq:PhiM_SigmaM_product} is not
  satisfied. However, it 
  posseses a positive proposal density so that ergodicity holds again for ensemble-wise Metropolization, see \Cref{cor:MAPS}.

\section{Numerical experiments} \label{sec:numerics} To evaluate
  the performance of our proposed Metropolis-adjusted interactive
  particle sampling (MA-IPS) methods, we conduct three numerical
  experiments in which we implement (MA-)ALDI, (MA-)SVGD, and
  (MA-)CBS. The implementation was done in \texttt{MATLAB} and
  we utilized
  the discrete dynamics \eqref{eq:EM_ALDI},
  \eqref{eq:EM_CBS}, and \eqref{eq:EM_SVGD} for the unadjusted
  algorithms, along with the corresponding Metropolis-adjusted methods
  based on Algorithms~\ref{alg:MA-particle} to
  \ref{alg:MA-particle_block}. Let us briefly describe the three
  experiments below:

\begin{enumerate}
\item 
The first experiment illustrates the bias of the unadjusted interactive particle samplers for a one-dimensional non-Gaussian target distribution, thus, emphasizing the need for Metropolization.

\item The second experiment is a more detailed study of MA-IPS using a
  four-dimensional multivariate Gaussian target distribution. It
  demonstrates the benefits of interaction (compared to independent
  parallel Markov chains) and 
  compares the performance of all three Metropolization versions, in particular, under consideration of parallelization.
  Moreover, also the optimal tuning of the considered MA-IPS methods is studied empirically by the relation of the average acceptance rate and the obtained mean squared error of $\SM_N(F)$ for a chosen $F$.

\item 
  The final experiment applies the MA-IPS methods to a high-dimensional target distribution and confirms our observations in the low-dimensional setting.
\end{enumerate} 

We adhere to the following naming conventions for our algorithms:
  
\begin{itemize}
\item {\bf ALDI} corresponds to the chain generated by \eqref{eq:EM_ALDI},
  {\bf CBS} corresponds to the chain generated by \eqref{eq:EM_CBS},
  and {\bf SVGD} corresponds to the chain generated by \eqref{eq:EM_SVGD},
\item {\bf pMALA} corresponds to parallel MALA, i.e.\ to $M$ independent chains each generated
  by \Cref{alg:MHMCMC} with proposal \eqref{eq:pMALA},
\item we use the prefix {\bf MA}, to indicate that it's the Metropolized
  version of an algorithm,
\item we use the suffix {\bf ew} to indicate ensemble-wise Metropolization
  as in \Cref{alg:MA-particle}, e.g.\ MA-ALDI-ew corresponds to \Cref{alg:MA-particle} with proposal \eqref{eq:EM_ALDI},
\item we use the suffix {\bf pw} to indicate particle-wise sequential Metropolization as
  in \Cref{alg:MA-particle_seq}, e.g.\ MA-ALDI-pw corresponds to \Cref{alg:MA-particle_seq} with proposal \eqref{eq:EM_ALDI},
\item we use the suffix {\bf bw} to indicate block-wise
   Metropolization as in \Cref{alg:MA-particle_block}, e.g.\
  MA-ALDI-bw corresponds to \Cref{alg:MA-particle_block} with proposal
  \eqref{eq:EM_ALDI}.
\end{itemize}

\subsection{One-dimensional example: bimodal distribution}
We study the effect of Metropolization on interacting particle systems for ALDI, CBS, and SVGD, by comparing the performance of each method with and without ensemble Metropolization.

As a target we consider the (unnormalized) %
probability density %
\[ 
    \pi(x) \propto \exp\left(-\frac1{2\sigma^2}|x^2-1|^2\right)\cdot \exp\left(-\frac12|x-m_0|^2\right) \qquad x\in\R,
\]
with $\sigma = 0.1$ and $m_0=0.8$. This target density corresponds to a BIP with Gaussian prior distribution $\Nv(m_0,1)$, forward model $G(x) = x^2$ and additive Gaussian noise with zero mean and variance $\sigma^2$.

For each particle sampling method we have simulated $N_{\mathrm{burn}} = 10^4$ iterations of burn-in followed by another $N=10^5$ iterations. 
Both, the Metropolis-adjusted and unadjusted variant, were implemented with $M=10$ particles, same realization of the iid initialization $X_0^{(j)} \sim \Nv(m_0,1)$, $j=1,\dots,M$ and same step size $h>0$ in the Euler-Maruyama scheme. 
For %
both
ALDI versions we %
used step size $h=0.0725$ and for both
CBS versions %
the step size $h=0.05$. For %
both SVGD versions we used step size $h = 0.0001$ and the Gaussian kernel \[k_s(x_1,x_2) = \exp\left(-\frac1{2s^2} |x_1-x_2|^2\right)\]
with $s = 0.01$.  

In Figure~\ref{fig:1d_SVGD}--\ref{fig:1d_CBS} we compare
the %
histograms of the particles obtained by the Metropolis-adjusted and unadjusted algorithms accumulated over all $N$ iterations. %
For all three interactive particle methods, their unadjusted version
do not seem to yield %
a sample from
the target
$\pi$ whereas their Metropolized versions %
do.

\begin{figure}[!htb]
\centering \includegraphics[width=0.4\textwidth]{./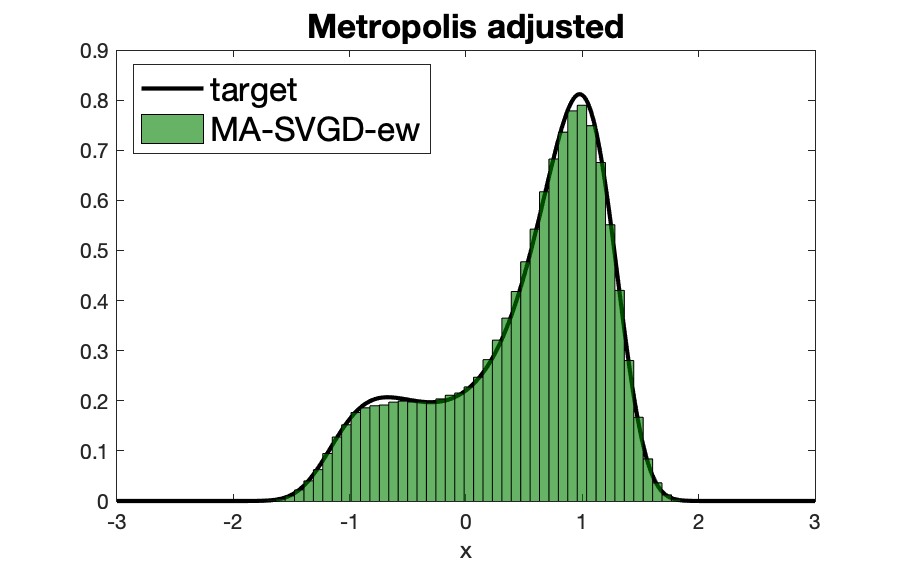}~~\includegraphics[width=0.4\textwidth]{./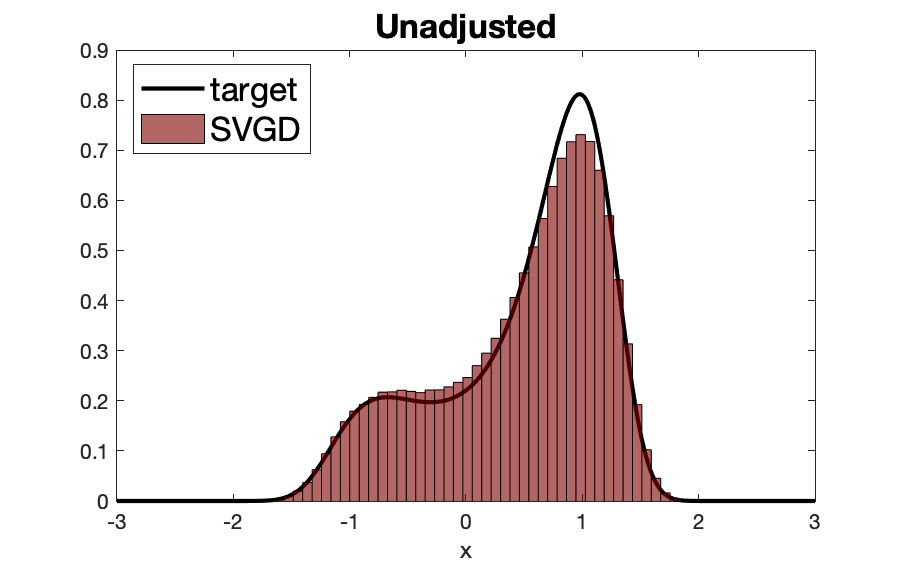}
 \caption{Comparison of MA-SVGD-ew and SVGD with $M=10$ particles and step size $h=0.0001$. MA-SVGD-ew achieved an acceptance rate of $53\%$.}\label{fig:1d_SVGD}
\end{figure} 

\begin{figure}[!htb]
\centering \includegraphics[width=0.4\textwidth]{./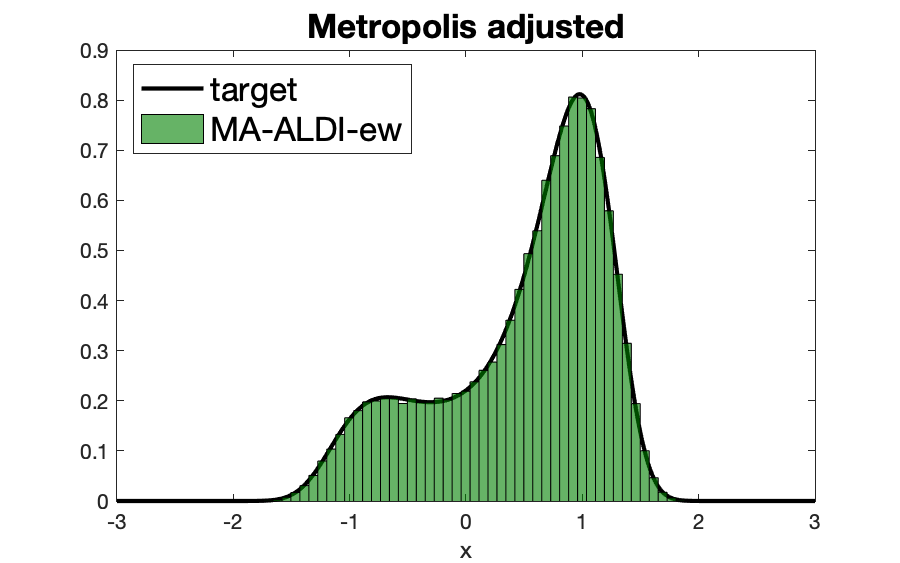}~~\includegraphics[width=0.4\textwidth]{./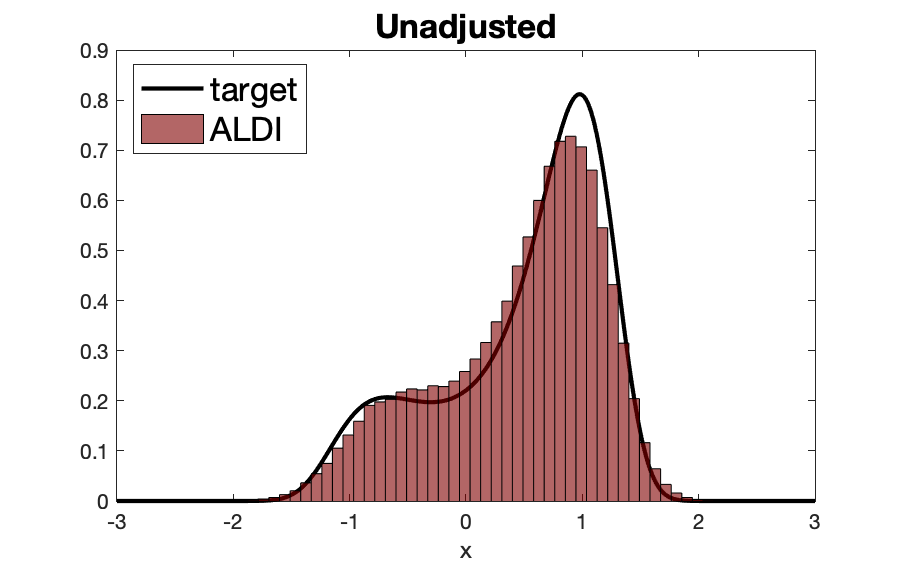}
 \caption{Comparison of MA-ALDI-ew and ALDI with $M=10$ particles and step size $h=0.0725$. MA-ALDI-ew achieved an acceptance rate of $70\%$. }\label{fig:1d_EKS}
\end{figure} 

\begin{figure}[!htb]
\centering \includegraphics[width=0.4\textwidth]{./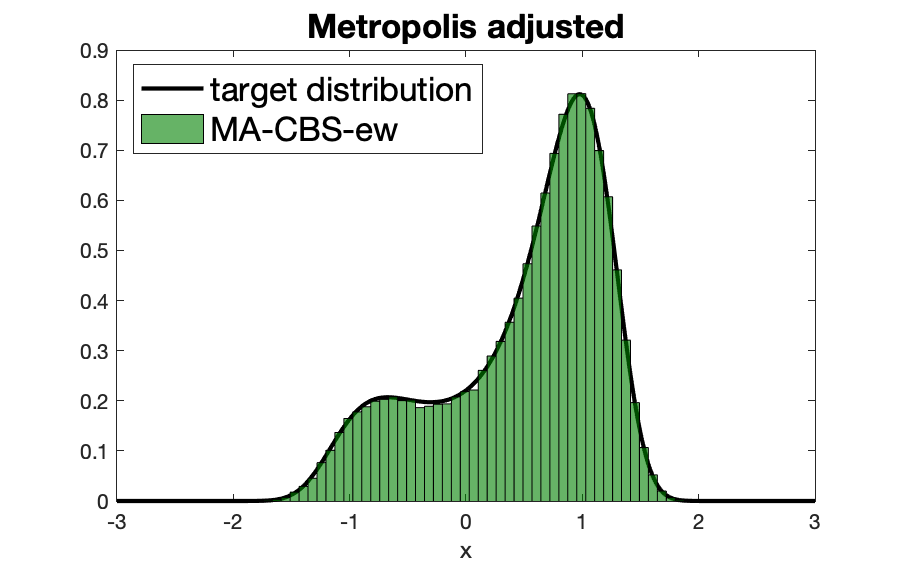}~~ \includegraphics[width=0.4\textwidth]{./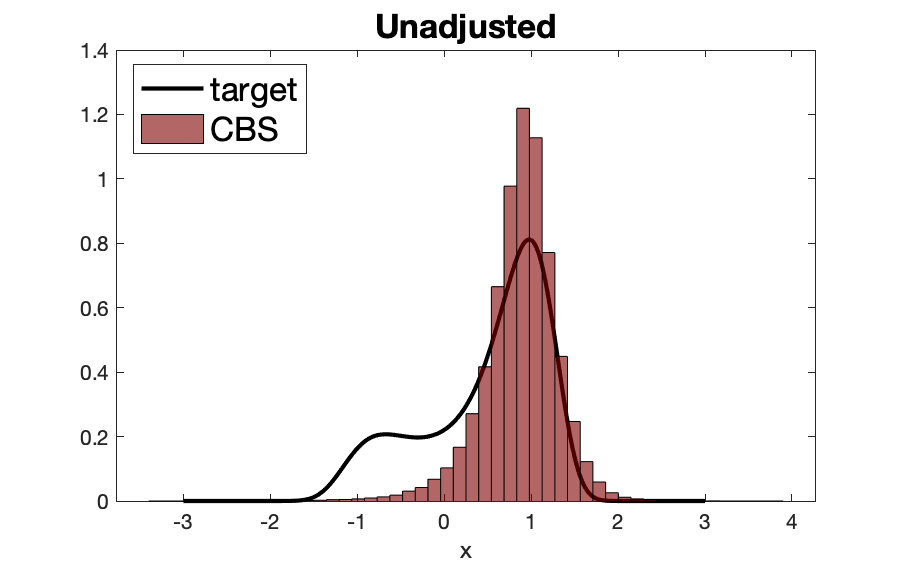}
 \caption{Comparison of MA-CBS-ew and CBS with $M=10$ particles and step size $h=0.05$. MA-CBS-ew achieved an acceptance rate of $52\%$.}\label{fig:1d_CBS}
\end{figure} 

\subsection{Multivariate Gaussian distribution}\label{sec:mvgauss}
We now consider a $4$-dimensional multivariate Gaussian target distribution $\Nv(0,C)$ with target density
\begin{equation*}
    \pi(x) \propto \exp\left(-\frac12\|C^{-1/2}x\|^2\right),
    \qquad
    C = \diag(1,0.1,0.01,0.001).
  \end{equation*}
  As the coordinates are weighted differently, we
  expect a positive effect from the ensemble preconditioner.
In this section we illustrate the benefits of interaction, compare ensemble-wise to particle-wise Metropolization (for MA-ALDI), and consider optimal tuning of MA-ALDI, MA-CBS, and MA-SVGD by controlling the acceptance rate.

To this end, we consider the quantity of interest 
$f(x) = x^\top C^{-1} x$ where for $X\sim\pi$ we have $f(X)\sim \chi^2(4)$.
Our goal is then to estimate
\begin{equation}\label{eq:Pref}
    P_{\mathrm{ref}} := \Prob(f(X) \le q_{0.5} ) = \E_\pi\left[ \mathds{1}_{(-\infty, q_{0.5}]}(f(X))\right] = \frac 12
\end{equation}
where $q_{0.5}$ denotes the $0.5$-quantile of the $\chi^2(4)$ distribution and $\mathds{1}_A$ the indicator function of a set $A$.
The corresponding estimators based on ensemble Markov chains $(\XX_k)_{k\in \N}$ are then
\[
    \widehat P_N
    :=
    \frac 1N \sum_{k=1}^N \FF(\XX_k),
    \qquad
    \FF(\xx) := \frac 1M \sum_{i=1}^M \mathds{1}_{(-\infty, q_{0.5}]}(f(x^{(i)})) \in \R.
\]
For evaluating the efficicency of these estimators we also compute %
the associated autocorrelations $\Corr(\FF(\XX_1), \FF(\XX_{1+k})) \approx \rho_k := c_k/c_0$ where
\[
    c_k := 
    \frac{1}{N-k}
    \sum_{i=1}^{N-k} 
    \left(\FF(\XX_i) - \widehat P_N\right)
    \left(\FF(\XX_{i+k}) - \widehat P_N\right)
\]
since the $\rho_k$ yield information about the asymptotic variance \eqref{eq:sigmaF} in the central limit theorem.

\paragraph{Superiority of interacting particles}
In Figure \ref{fig:quantile_est_gamma}, we compare MA-ALDI-ew %
for different values of the covariance inflation parameter
$\gamma\in\{0.001,0.1,1\}$ to pMALA, in order to show the
improved performance for proposals generated by \emph{interacting}
  particle systems.  Note that for $\gamma=1$ the proposal
of MA-ALDI-ew and %
pMALA are in fact the same,
but MA-ALDI-ew uses
ensemble-wise Metropolization %
while pMALA uses
particle-wise Metropolization.

Figure~\ref{fig:quantile_est_gammaa} %
shows the evolution of $\widehat P_N$ averaged over $100$
independent runs for each method.
For each method, we
tuned the step size $h$ such that the average acceptance rate was
around $50\%$.
For %
$\gamma =0.001$ and $\gamma = 0.1$ the MA-ALDI-ew based %
estimators
converge much faster than pMALA.
This observation is confirmed by the significantly
faster decaying autocorrelation depicted on the right plot in
Figure~\ref{fig:quantile_est_gamma}.  On the other hand %
for
$\gamma = 1$ it is observed that %
MA-ALDI-ew performs worse %
than pMALA, suggesting that for $\gamma=1$ the particle-wise Metropolization is more effective than ensemble-wise Metropolization.

\begin{figure}[!htb]
  \begin{subfigure}{0.48\textwidth}
    \centering
    \includegraphics[width=\textwidth]
    {./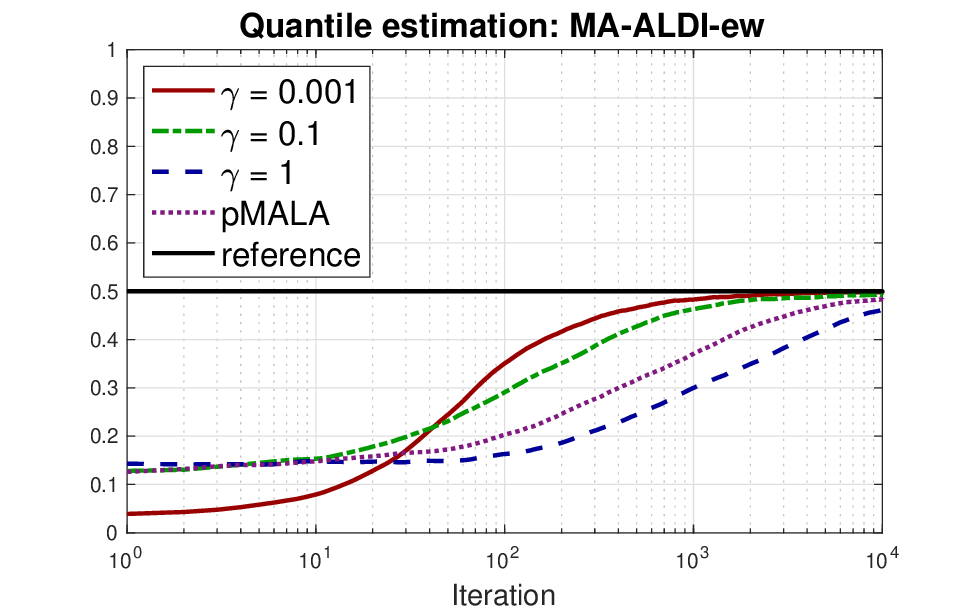}
    \caption{quantile estimation}\label{fig:quantile_est_gammaa}
  \end{subfigure}
  \hfill
  \begin{subfigure}{0.48\textwidth}
    \centering    
    \includegraphics[width=\textwidth]
    {./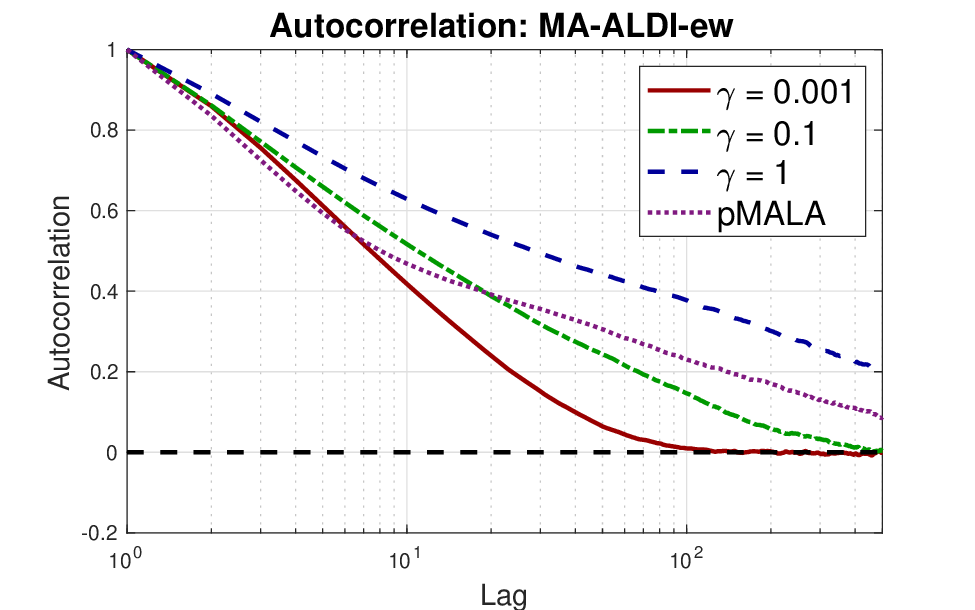}
    \caption{autocorrelation}\label{fig:quantile_est_gammab}
    \end{subfigure}
    \caption{Estimation of $P_{\mathrm{ref}}$ in \eqref{eq:Pref} and the corresponding estimated autocorrelation using either pMALA %
      or %
      MA-ALDI-ew %
      for different choices of $\gamma$. For each %
      method
      the step size $h$ %
      was tuned to obtain an
   acceptance rate of approximately $50\%$. }\label{fig:quantile_est_gamma}
\end{figure}

\paragraph{Superiority of particle- and block-wise Metropolization}
In \Cref{fig:quantile_est_parallelization1} and \Cref{fig:quantile_est_parallelization2}
  we compare the ensemble-wise Metropolization MA-ALDI-ew
  with the %
  particle-
and block-wise Metropolizations MA-ALDI-pw
and MA-ALDI-bw for $\gamma = 0.001$.
The results are obtained by averaging
over $10$ independent runs of the corresponding methods. We used the same
step size $h$ and ensemble size $M$ for all three MA-ALDI versions. Note that we have increased the ensemble size to $M=100$ in this experiment.
Similar as for $\gamma=1$, the particle-wise Metropolization
outperforms the ensemble-wise Metropolization also for
$\gamma = 0.001$ and $\gamma=0.1$.  We see hardly any difference
between %
particle- and block-wise
  Metropolization in \Cref{fig:quantile_est_parallelization1} (a),
where we do not %
take into account possibilities for parallel computing. %
The result %
changes, when allowing access to multiple cores, such that the
computation of the ensemble- and block-wise Metropolzation can be done in parallel. 
In \Cref{fig:quantile_est_parallelization1} (b) and \Cref{fig:quantile_est_parallelization2}, we observe a clear advantage of block-wise and even ensemble-wise Metropolization the
more cores we incorporate. In addition, %
Table~\ref{tab:inte_ac} shows the estimated integrated autocorrelation
scaled by the associated cost of the applied algorithm, i.e.\ the quantity
  \begin{equation}\label{eq:efficiency}
  {c({\mathrm{int\_ac}}, B,{\mathrm{cores}})} = {\mathrm{int}}\_{\mathrm{ac}}\cdot N \cdot M \cdot \frac{1}{\min\{\mathrm{cores},B\}}.
\end{equation}
Here we assume all blocks in \eqref{eq:blocks} to be of equal size
  $|\bb_j|=B \in\N$ for all $j=1,\dots,L$ (and thus $M=LB$
  is a multiple of $B$). Moreover, $\mathrm{cores}\in\N$ refers to the number of processors available for parallelization, $N$ is the
  number of iterations, and $\mathrm{int\_ac}$ refers to the estimated integrated autocorrelation.
  Note that ensemble-wise Metropolization and parallel MALA correspond to $B=M$ and particle-wise Metropolization corresponds to $B=1$. 
  
  \begin{table}[t]
  \centering
  \begin{tabular}{c|ccccc}
    cores & ew & bw, B=50 & bw, B=25 & pw & pMALA \\ \hline
    $1$ & $1366$ & $840.6$ & $708.1$ & $593.2$ & $8996$ \\
    $20$ & $68.3$ & $42$ & $35.4$ & $593.2$ & $449.8$  \\
    $50$ & $27.3$ &$16.8$ & $28.3$ & $593.2$ & $179.9$\\
    $100$ & $13.7$ & $16.8$ & $28.3$ & $593.2$ & $90$
  \end{tabular}
  \caption{Comparison of the quantity in \eqref{eq:efficiency}
      measuring the computational efficiency of the method for different numbers of
      cores available and different variants of Metropolization.}
  \label{tab:inte_ac}
\end{table}

\begin{figure}[!htb]
  \begin{subfigure}{0.48\textwidth}
    \centering \includegraphics[width=\textwidth]{./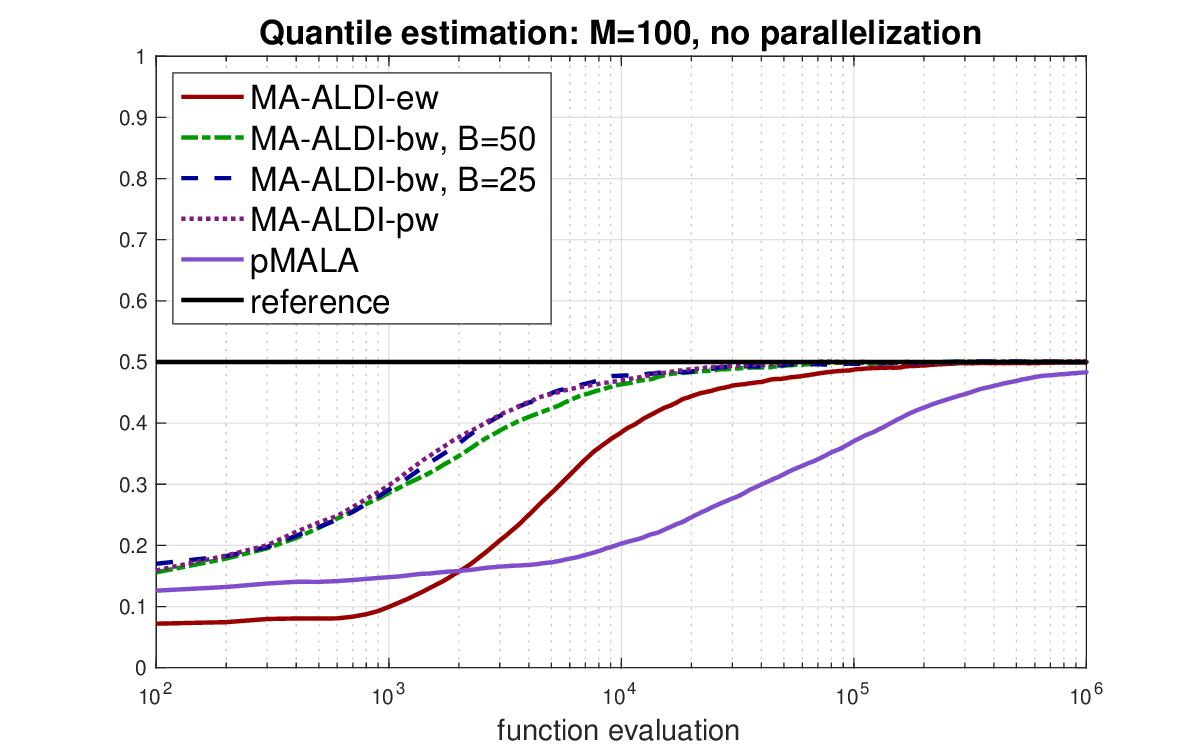}
    \caption{no parallelization}
    \label{fig:quantile_est_pw1a}
  \end{subfigure}%
  \hfill
  \begin{subfigure}{0.48\textwidth}
    \includegraphics[width=\textwidth]{./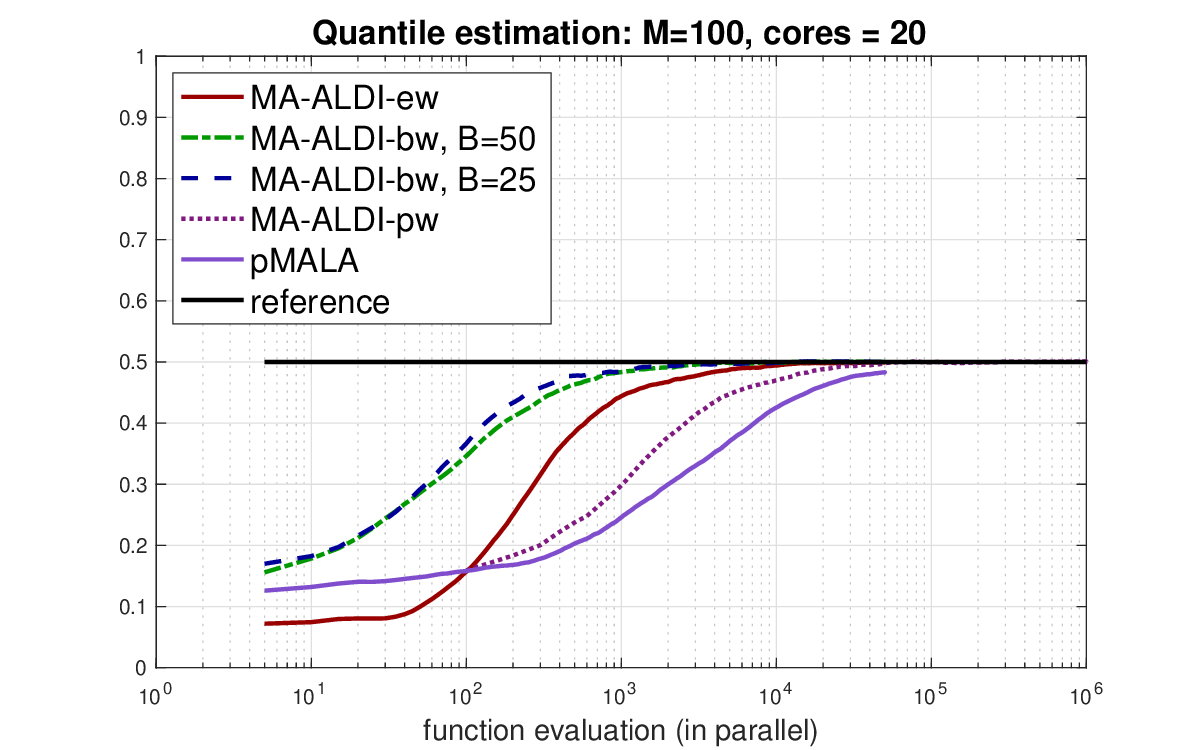}
    \caption{20 cores}
    \label{fig:quantile_est_pw1b}
  \end{subfigure}
  \caption{Estimation of $P_{\mathrm{ref}}$ in \eqref{eq:Pref} %
      for different
      versions of MA-ALDI with $\gamma=0.001$, and pMALA.  
      For MA-ALDI-ew
      the step size $h$ was tuned to obtain an acceptance
      rate of approximately $50\%$. For MA-ALDI-pw and both
      MA-ALDI-bw we use the same $h$ as for MA-ALDI-ew.
  }\label{fig:quantile_est_parallelization1}
\end{figure}

\begin{figure}[!htb]
  \begin{subfigure}{0.48\textwidth}
    \centering \includegraphics[width=\textwidth]{./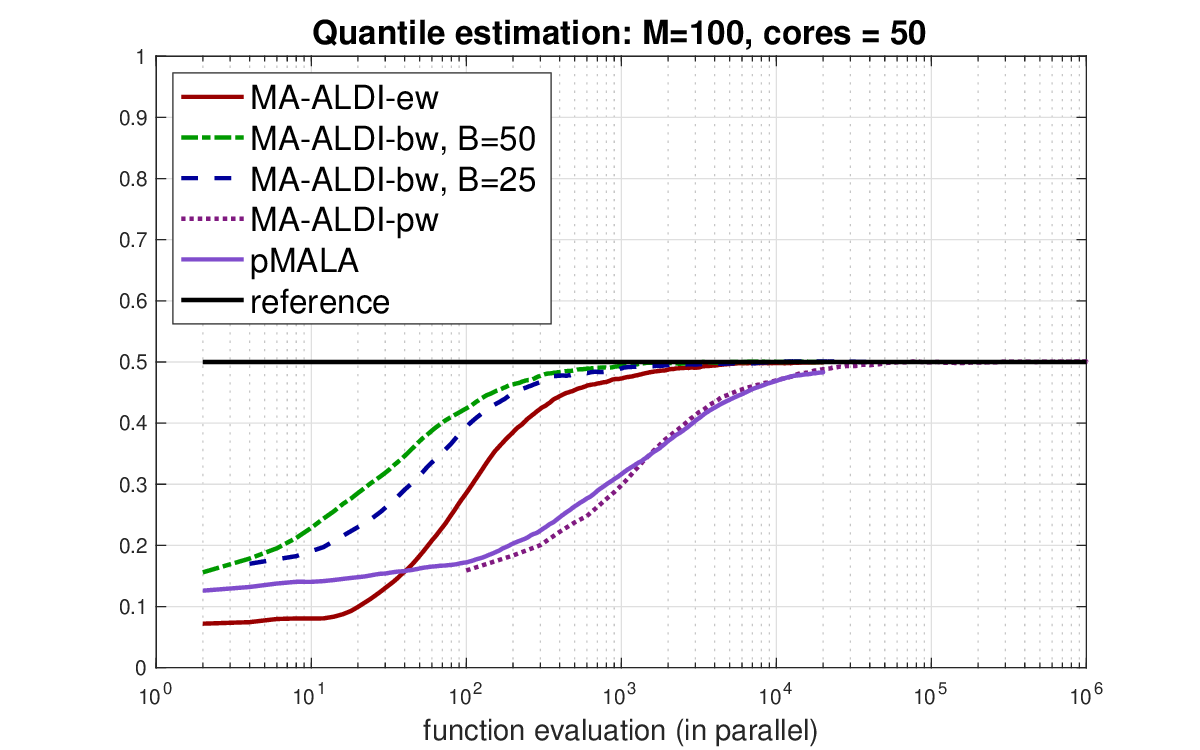}
    \caption{50 cores}
    \label{fig:quantile_est_pw1a}
  \end{subfigure}%
  \hfill
  \begin{subfigure}{0.48\textwidth}
    \includegraphics[width=\textwidth]{./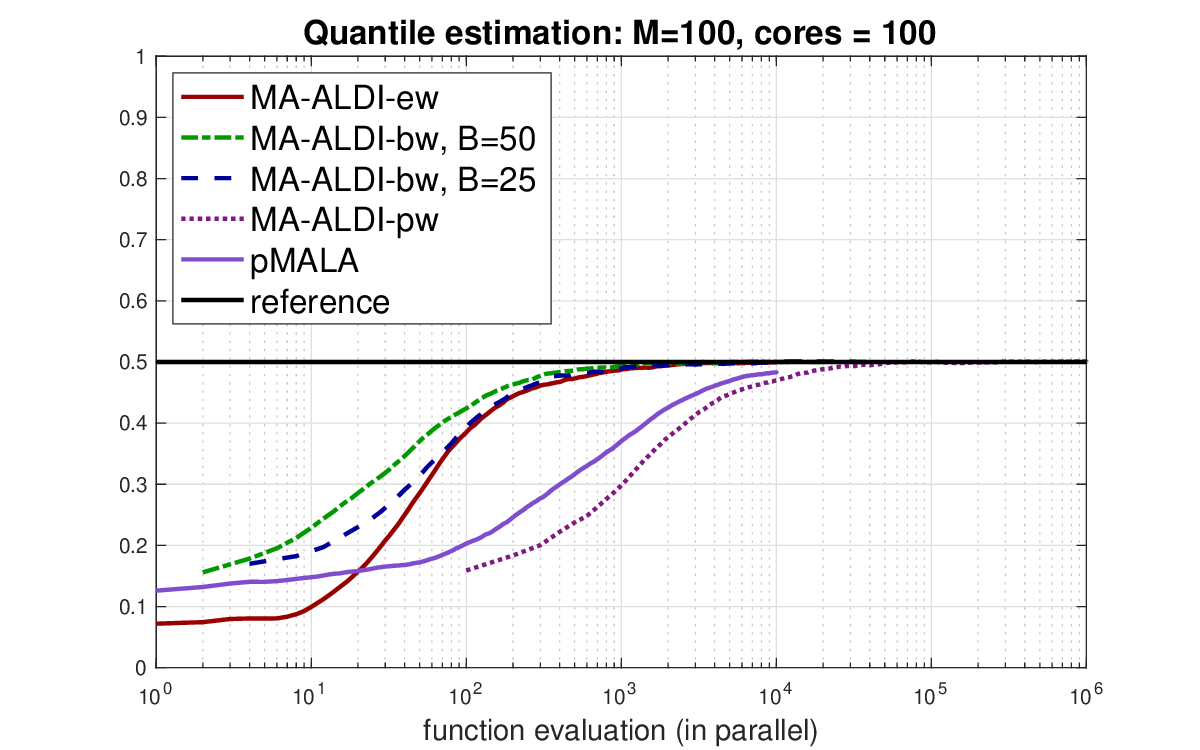}
    \caption{100 cores}
    \label{fig:quantile_est_pw1b}
  \end{subfigure}
  \caption{Same as Figure~\ref{fig:quantile_est_parallelization1} but with different assumptions on accessible cores.
  }\label{fig:quantile_est_parallelization2}
\end{figure}

\paragraph{Optimal tuning}
Finally, We study the dependence of the mean squared error (MSE) of $\SM_N(F)$ on the average acceptance rate for the considered ensemble-wise MA-IPS in Figure~\ref{fig:acc_rate_MHEKS-MHCBS}.
Here, we control the acceptance rate through the step size $h$ of the Euler-Maruyama scheme. 
The MSE was estimated over $100$ independent runs for each method and step size. 
We observe slightly different optimal average acceptance rates for MA-SVGD-ew, MA-ALDI-ew ($\gamma = 0.001$), and MA-CBS-ew ($\gamma=0$), %
with MA-ALDI-ew notably performing most sensitive w.r.t.~the acceptance rate, but also achieving the smallest MSE---by an order of magnitude smaller compared to MA-SVGD-ew and two orders of magnitude compared to MA-CBS-ew.
For MA-SVGD-ew we can additionally control the acceptance rate through the kernel function: We use the Gaussian kernel
\[
    k_s(x_1,x_2) \propto \exp\left(-\frac{1}{2s^2} \|x_1-x_2\|^2\right), 
\]
and can also steer the performance of (MA-)SVGD by the variance
parameter $s^2$. The results are shown in \Cref{fig:accrate_MHSVGD}.
The optimal average acceptance rate seems to almost the same as for
tuning $h$ (right). The dependence of the MSE and average acceptance
rate is explicitly shown in the middle and right plot of
\Cref{fig:accrate_MHSVGD}.

\begin{figure}[!htb]
\centering  \includegraphics[width=0.33\textwidth]{./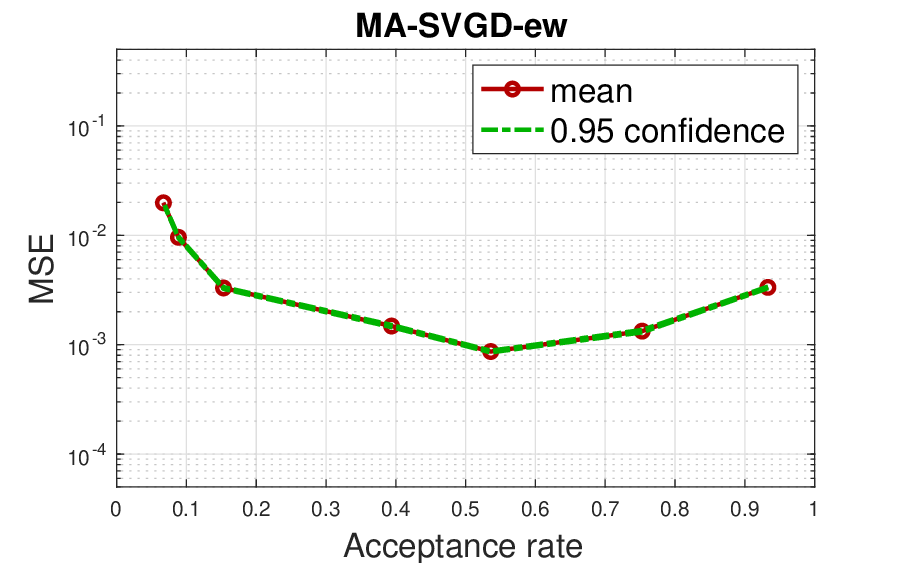}~~\includegraphics[width=0.33\textwidth]{./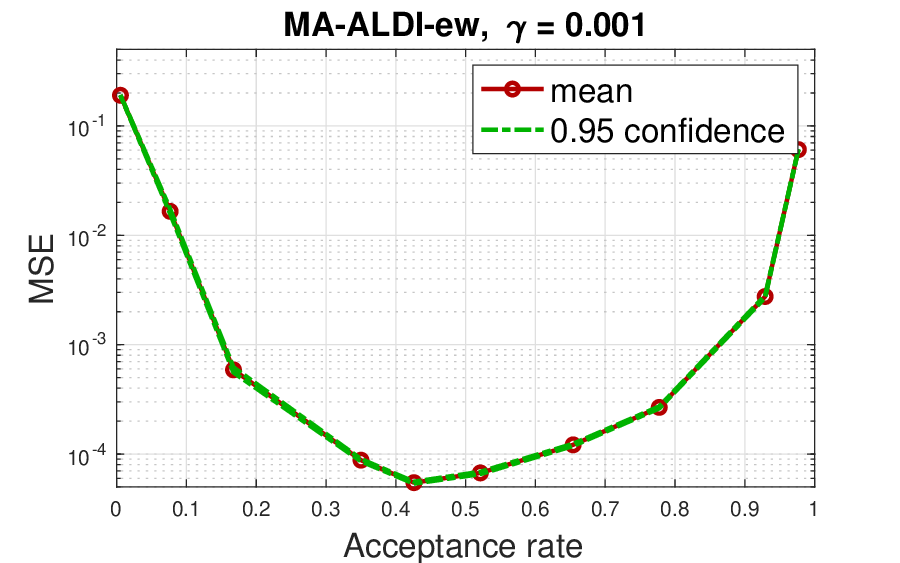}~~\includegraphics[width=0.33\textwidth]{./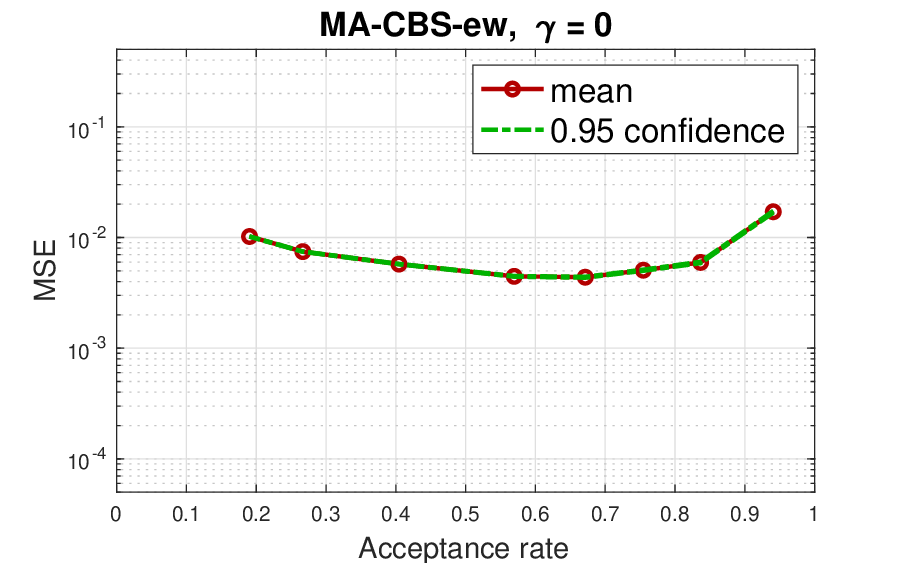}
\caption{Comparison of the expected MSE depending on the expected acceptance rate for MA-SVGD-ew (left),  %
  MA-ALDI-ew with $\gamma=0.001$
   (middle) and %
MA-CBS-ew with $\gamma=0$   (right). } \label{fig:acc_rate_MHEKS-MHCBS}
\end{figure}

\begin{figure}[!htb]
\centering \includegraphics[width=0.33\textwidth]{./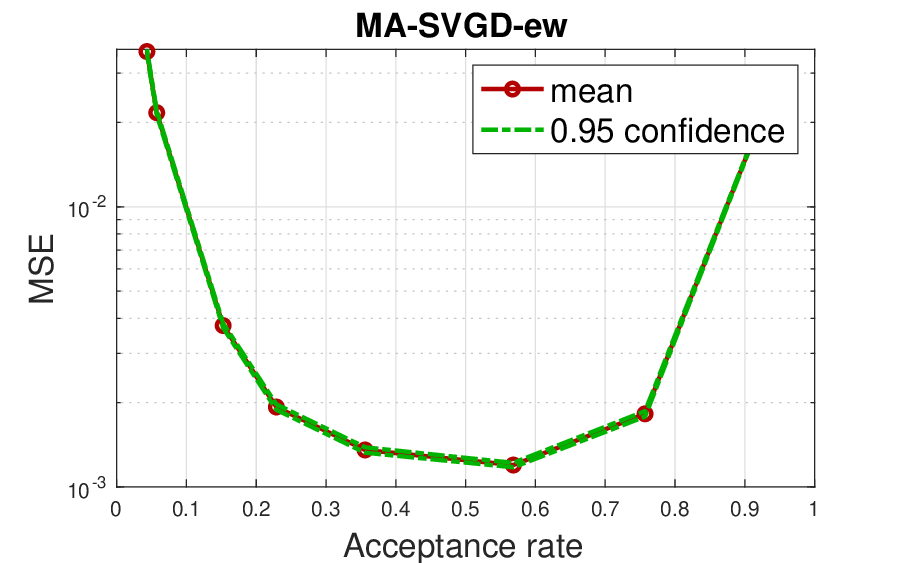}~~ \includegraphics[width=0.33\textwidth]{./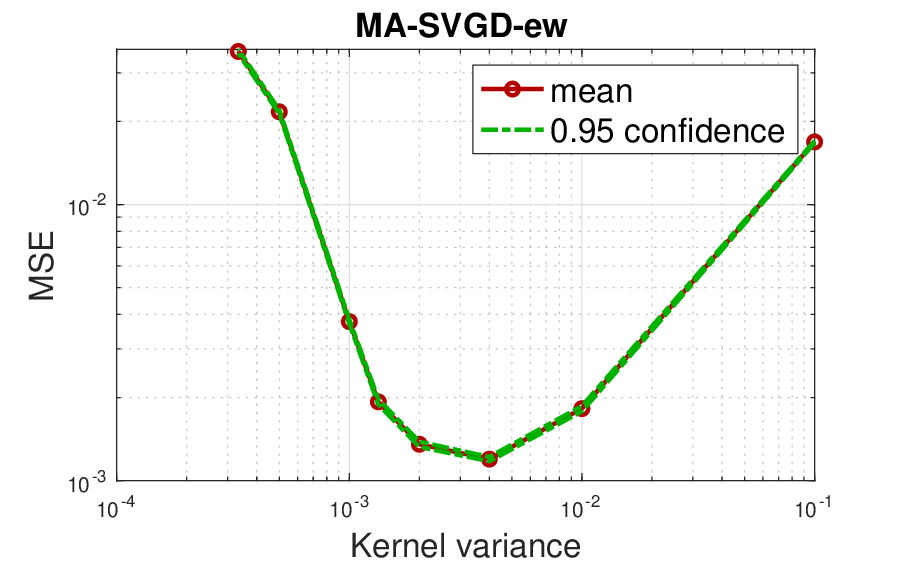}~~ \includegraphics[width=0.33\textwidth]{./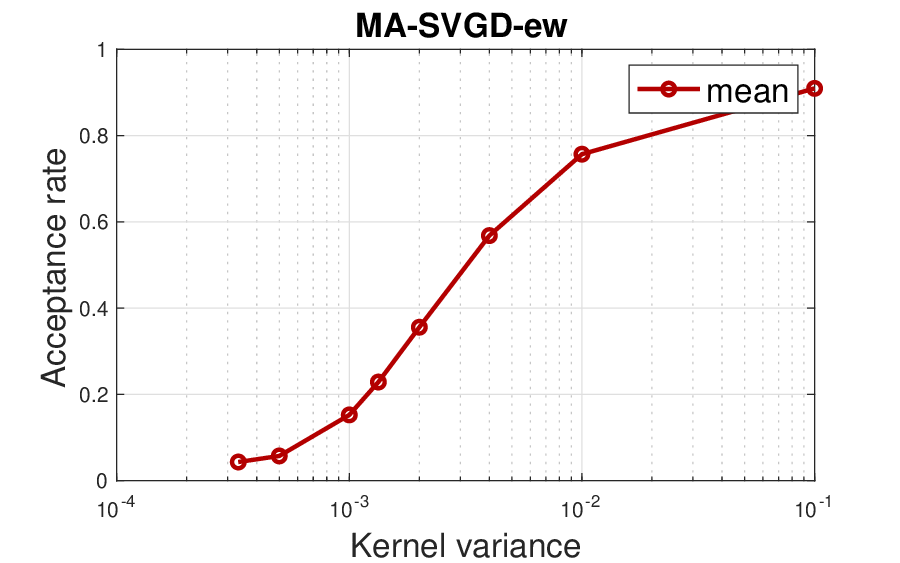}
 \caption{Comparison of the expected MSE depending on the expected acceptance rate and the choice of kernel variance for MA-SVGD-ew.  %
 } \label{fig:accrate_MHSVGD}
\end{figure} 

\subsection{ODE-based linear inverse problem }

We consider the one-dimensional elliptic equation 

\begin{equation}\label{eq:IP_ode}
\begin{cases}
-\frac{{\mathrm d}^2p(s)}{{\mathrm d}^2s} + p(s) = \theta(s) & s\in D:=(0,1)\\
p(s) = 0 & s\in\{0,1\}
\end{cases}
\end{equation}
and the inverse problem of recovering the unknown right-hand side $\theta\in L^\infty(D)$ from noisy observations $y = A(\theta) + \xi\in\R^K$, where $\xi\in\R^K$ denotes observational noise. The forward operator $A:%
H^{-1}(D)\to \R^K$ is defined by
\[
    A = \mathcal O \circ G^{-1}, \quad G = -\frac{{\mathrm d}^2}{{\mathrm d}^2s}+{\mathrm{id}}\ \text{on}\ \mathcal D(G) = H_0^1(D), 
\]
where $\mathcal O: H_0^1(D)\to\R^K$ denotes the observation operator providing function values of $K$ equidistant observation points $s_k = \frac{k}{K}$, $k=1,\dots,K$, such that
for $p\in H_0^1(D)$ we have $\mathcal O p(\cdot) = (p(s_1),\dots,p(s_K))^\top$. 

We consider a Gaussian process prior for $\theta$ given by
\[
    \theta(\cdot,x) 
    = 
    B_{\phi} x
    :=
    \sum_{i=1}^{d} x_i \phi_i(\cdot), 
\]
where $\phi_i(s) = \frac{\sqrt{2}}{\pi}\sin(i \pi s)$ and $x_i \sim \Nv(0, \lambda_i)$ independently with $\lambda_i =i^{-2\tau}$ for some fixed $\tau>1$.
Thus, the resulting inverse problem is to recover the coefficients $x=(x_1,\dots,x_d)^\top\in\R^d$ with prior information $\Nv(0,\Gamma_0)$, where $\Gamma_0 = \diag(\lambda_1,\dots,\lambda_d)
$. 
Assuming additive Gaussian noise $\xi\sim\Nv(0,\Gamma)$ the resulting %
(unnormalized) posterior density is
\[
    \pi(x) \propto \exp\left(-\frac12 \|\Gamma^{-1/2}(y-AB_\phi x)\|^2-\frac12\|\Gamma_0^{-1/2}x\|^2\right). 
\]
For the numerical implementation we replace $G$ by a numerical solution operator for \eqref{eq:IP_ode} on the grid $D_\delta \subset D$ with mesh size $\delta=2^{-6}$ and restriction of the unknown parameter $\theta(\cdot, x)$ to $D_\delta$. 
We consider a fully observed system with $K=2^6$ and $d=10$ terms in the Gaussian process model. 

We %
apply %
different versions of MA-ALDI with different choices of $\gamma\in\{0.01, 0.1, 1\}$ and compare them to %
pMALA. 
Note that the %
posterior $\pi$ is Gaussian with mean
$m_\ast = \Gamma_0 \tilde A^\top (\tilde A \Gamma_0 \tilde A^\top +
\Gamma)^{-1}y$ and covariance matrix
$C_\ast = \Gamma_0 - \Gamma_0 \tilde A^\top (\tilde A\Gamma_0 \tilde
A^\top + \Gamma)^{-1}\tilde A \Gamma_0\in\R^{d\times d}$ for $\tilde A:=
AB_\phi\in\R^{K\times d}$. Therefore, we consider again the quantity of interest
$f(X) = X^\top C_\ast^{-1} X \sim \chi^2(d)$, $X\sim \pi$, and the
corresponding probability
\begin{equation}\label{eq:Pref2}
  {P_{\mathrm{ref}}} = \mathbb P(f(X)\le q_{0.5}),
\end{equation}
  where $q_{0.5}$
denotes the $0.5$-quantile of the $\chi^2(d)$ distribution.  We
construct similar estimators of ${P_{\mathrm{ref}}}$ and the
autocorrelation as in the previous section. The resulting estimation
along the Markov chain and the estimated autocorrelation are plotted
in Figure~\ref{fig:quantile_est_linear_IP}.  Similar to before we
report the average values of 100 independent runs for each method.
Again, we observe that MA-ALDI outperforms pMALA.
In particular, the particle-wise MA-ALDI with highest
interaction ($\gamma = 0.01$) performs best among all considered
algorithms.

\begin{figure}[!htb]
  \begin{subfigure}{0.48\textwidth}
    \centering \includegraphics[width=\textwidth]{./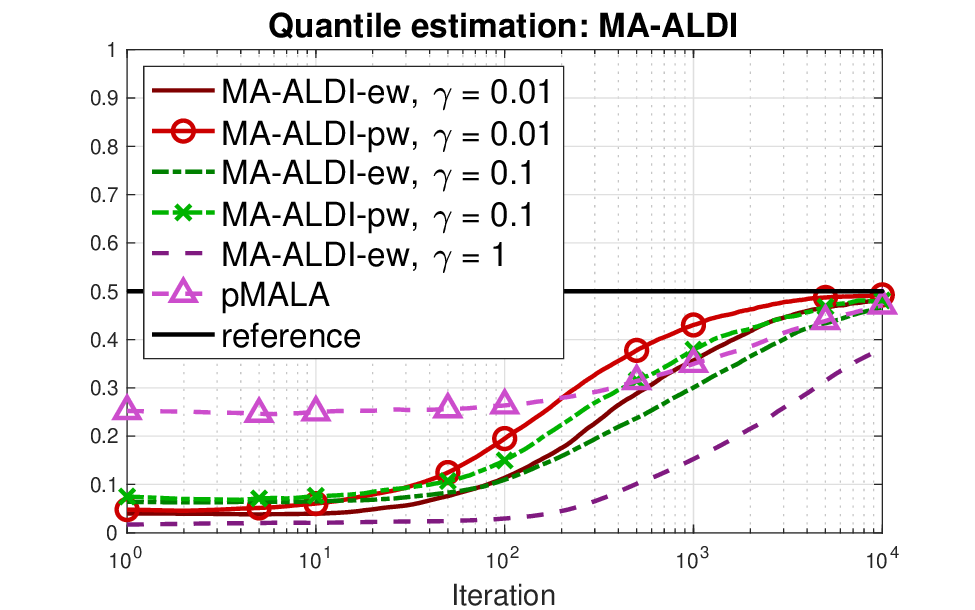}
    \caption{quantile estimation}
  \end{subfigure}\hfill
  \begin{subfigure}{0.48\textwidth}
    \centering
    \includegraphics[width=\textwidth]{./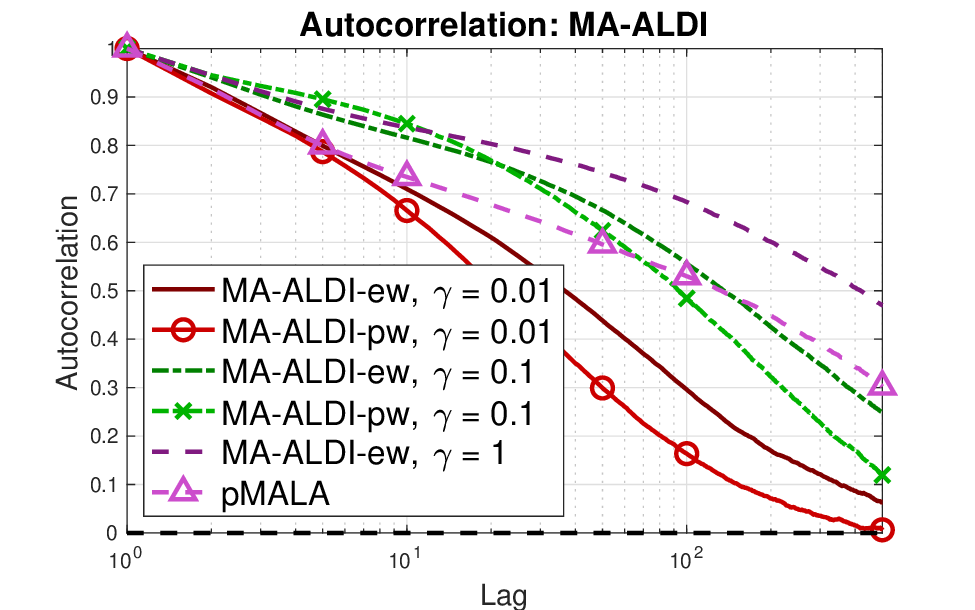}
    \caption{autocorrelation}
    \end{subfigure}
\caption{Estimation of $P_{\mathrm{ref}}$ in \eqref{eq:Pref2} and the
    corresponding estimated autocorrelation for different versions
    of MA-ALDI and $\gamma$,
    and pMALA.  For MA-ALDI-ew and pMALA the step size $h$ was tuned
    to obtain an acceptance rate of approximately $50\%$.  For
    MA-ALDI-pw we use the same $h$ as for
    MA-ALDI-ew.
 }\label{fig:quantile_est_linear_IP}
\end{figure} 

Moreover, we show the resulting posterior approximation averaged over the realized Markov chains pushed forward through the truncated KL-expansion, i.e., $\widehat \theta_k = B_\phi \XX_k$.
In Figure~\ref{fig:posterior_apprx_MHEKS_linear_IP} we plot the pointwise mean $\frac 1N \sum_{k=N_{\mathrm{burn}}}^{N_{\mathrm{burn}}+N}\widehat \theta_k$ plus/minus the pointwise empirical standard deviation. 
Here, we used again a burn-in of $N_{\mathrm{burn}} = 10^3$ iterations and $N=10^4$ and observe a smaller deviation to true posterior for %
MA-ALDI-ew ($\gamma =0.01$) than for %
pMALA.

\begin{figure}[!htb]
  \begin{subfigure}{0.48\textwidth}
    \centering \includegraphics[width=\textwidth]{./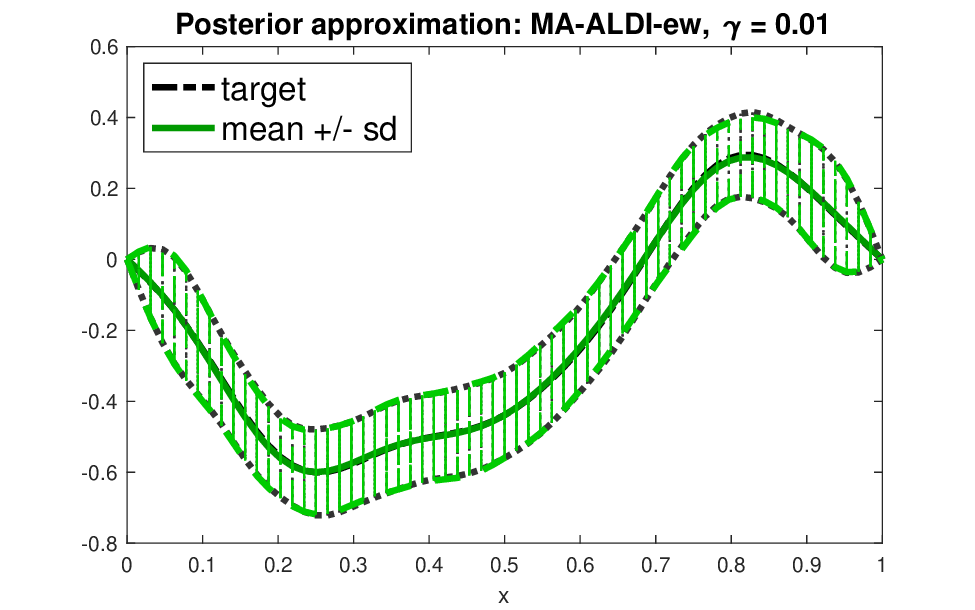}
    \caption{MA-ALDI-ew}
  \end{subfigure}\hfill
  \begin{subfigure}{0.48\textwidth}
    \centering
    \includegraphics[width=\textwidth]{./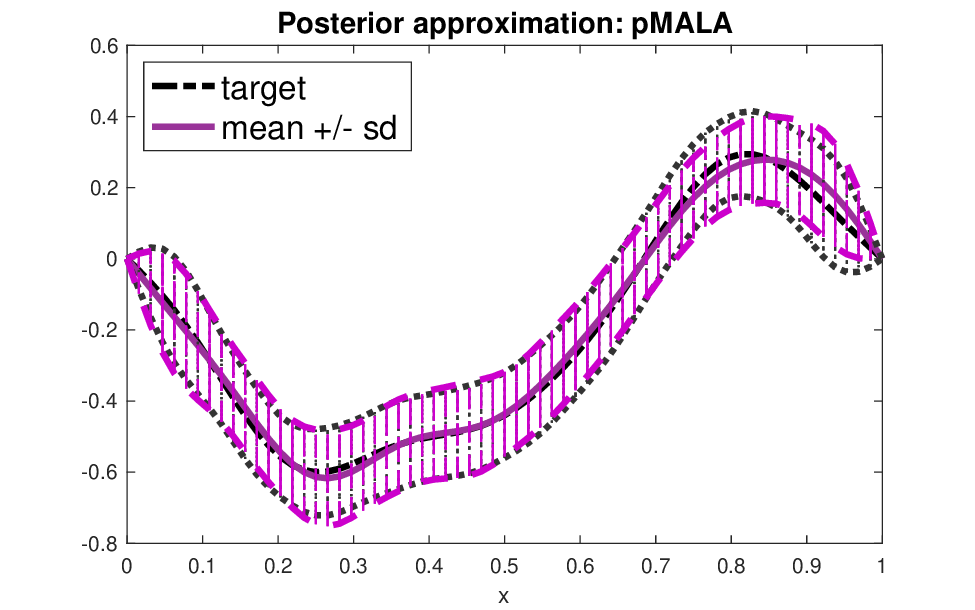}
    \caption{pMALA}
    \end{subfigure}
\caption{Posterior approximation (averaged over all iterations) for %
  MA-ALDI-ew
  with $\gamma = 0.01$ and %
  pMALA.
    The step size $h$ was chosen such that the acceptance rate is approximately $50\%$.}\label{fig:posterior_apprx_MHEKS_linear_IP}
\end{figure} 

\section{Conclusion}
The success of MCMC methods, particularly in high-dimensional
problems, heavily relies on the quality of the proposal
distribution. Ideally, additional information on the target
distribution, such as its covariance, should be used. While this is
not available in practice, it can for example be estimated along the
\emph{path} of the chain which has led to the developement of
  adaptive MCMC methods.  In the present work, we consider an
alternative approach that %
evolves an \emph{ensemble} of $M\in\N$ interacting particles and
leverages the information gained by the entire ensemble to generate a
proposal for the next update. One key advantage %
is that this method provides an effective and natural means of
parallelization which takes full advantage of the additional
information provided by the ensemble. %
This can be crucial in the treatment of real-world problems. For
instance, in engineering and science, solving a (Bayesian) inverse
problems often involves simulating a complex physical process at each
step of the chain. Each of these simulations can take minutes or even
hours to complete, which renders any sequential algorithm and any MCMC
approach that mixes only slowly infeasible.

The present study investigated three fundamental variants of
Metropolizing interacting particle systems that evolve $M\in\N$
particles based on some stochastic ODE. The first variant considers
the update as a proposal for the product of the target distribution in
the product state space $\mathbb{R}^{Md}$. It either accepts or
rejects the entire ensemble. The second %
variant employs
particle-wise Metropolization, where each particle is accepted or
rejected individually and %
sequentially. To allow for parallelization, in the third
variant we partition the ensembles into blocks of equal
size and sequentially accept or reject each block.
While variant two
has been proposed and discussed previously, e.g., in
\cite{CW2021,DS2022,GW2010,LMW2018}, variants one and three are novel
to the best of our knowledge. Furthermore, all three variants allow
for the construction of affine-invariant MCMC methods through
affine-invariant particle dynamics.

We presented a detailed empirical study comparing these %
methods for several common
particle dynamics. Our findings
show that the interaction of the particles can significantly
improve mixing compared to trivially running $M$ independent MCMC
chains (in parallel).
Moreover, depending on the situation, we observed that the
  particle- and block-wise Metropolization seem to outperform the
  ensemble-wise variant.
Overall, our study suggests that proposals based on interacting particle systems can provide significant improvements over traditional MCMC methods. %
Additionally, %
we provide a theoretical analysis of
these methods,
establishing basic ergodicity %
under mild and common assumptions. Finally, in the appendix we also discuss a
``simultaneous'' (instead of sequential) variant, and show why it does in general not yield the correct invariant distribution.
Potential modifications to fix this biasedness are left as an open question for future work.

Other possible directions for future work include additionally using the history of the Markov chain, e.g., estimating the target covariance also along the path of the ensemble chain which may reduce the estimation error for the covariance.
Also %
the application of localization techniques as discussed e.g., in \cite{Huang_2022,RW2021} within the MA-IPS approach seems beneficial.
Finally, while we provide basic convergence results, a solid theoretical analysis of the superiority of interacting ensembles over independent, parallel Markov chains %
remains an open %
and interesting avenue for future work.

\paragraph{Acknowledgements} We thank Daniel Rudolf for very helpful
and enduring discussions.

\bibliographystyle{abbrv} \bibliography{references.bib}

\appendix

\section{On simultaneous particle-wise Metropolization}\label{sec:pw-sim}
From a computational viewpoint, it would be advantegous to decide
  for each particle independently and in parallel whether to accept or
  reject it, as this facilitates the embarassingly parallel processing
  of all $M$ particles in the ensemble in each step of the
  algorithm. However, as we illustrate in the following, the
  corresponding ``simultaneous'' transition kernel is in general not invariant with respect to the product target measure $\ppi$ or even an $M$-coupling of $\pi$.

To formalize the outlined procedure, we consider the independent and simultaneous application of the particle-wise transition kernel
$P_{\xx^{-(i)}}$ in \eqref{eq:Pi} to the $i$th particle for each $i\in\{1,\dots,M\}$. This yields the transition kernel
$\PP_\mathrm{sim} \colon \R^{Md} \times \mathcal B(\R^{Md}) \to[0,1]$
\begin{equation}\label{eq:P_product}
    \PP_\mathrm{sim}(\xx, \mathrm d \yy)
    =
    P_{\xx^{-(1)}}(x^{(1)} , \mathrm d y^{(1)})
    \otimes
    \cdots
    \otimes
    P_{\xx^{-(M)}}(x^{(M)} , \mathrm d y^{(M)}).
  \end{equation}
The associated algorithmic description is given in \Cref{alg:MA-particle2}.

\begin{algorithm}[t]
  \begin{algorithmic}[1]
    \Statex \textbf{Input:} 
    \begin{itemize}
    \item target density $\pi$ on $\R^d$
    \item ensemble dependent proposal kernel $Q_{\xx^-}$ with density
      $q_{\xx^-}:\R^{d}\times\R^{d}\to (0,\infty)$ in \eqref{eq:qxx}
    \item initial probability distribution $\pi_0$ on $\R^d$
    \end{itemize}
    \Statex \textbf{Output:} ensemble Markov chain $(\XX_k)_{k\in\{1,\dots,N\}}$ in state space $\R^{Md}$
    \State
    draw $\xx_0\sim \otimes_{i=1}^M\pi_0$ and set initial state $\XX_0=\xx_0\in\R^{Md}$
    \For{$k=0,\dots, N$}
      \State
      $\forall i$: given $\XX_k = \xx_k$ draw proposal $y^{(i)} \sim Q_{\xx^{-(i)}_k}(x^{(i)}_k,\cdot)$ independently
    \State $\forall i$: compute particle acceptance probability $\alpha_{\xx^{-(i)}_k}(x^{(i)}_k, y^{(i)})\in [0,1]$ in \eqref{eq:alpha2}
    \State $\forall i$: draw $u_i\sim{\rm U}([0,1])$ independently and set
	$$ 
	X^{(i)}_{k+1} = \begin{cases}
          y^{(i)} &\text{if } u_i \le \alpha_{\xx^{-(i)}_k}(x^{(i)}_k, y^{(i)})\\
          x_{k}^{(i)} &\text{else}
	\end{cases}
	$$
        \EndFor
      \end{algorithmic}
      \caption{Simultaneous particle-wise Metropolized interactive particle sampling
      }\label{alg:MA-particle2}
\end{algorithm}

As for the sequential updates discussed in \Cref{sec:particle_met} and \ref{sec:block_met}, $\ppi$-reversibility does not hold, since in general
\[
    \prod_{i=1}^M
    \left( 
    \alpha_{\xx^{-(i)}}(x^{(i)},y^{(i)}) 
    \, q_{\xx^{-(i)}}(x^{(i)},y^{(i)})
    \, \pi(x^{(i)})
    \right)
    \neq
     \prod_{i=1}^M
    \left( 
    \alpha_{\yy^{-(i)}}(y^{(i)},x^{(i)}) 
    \, q_{\yy^{-(i)}}(y^{(i)},x^{(i)})
    \, \pi(y^{(i)})
    \right)
\]
for $\xx$, $\yy \in \R^{Md}$ with $x^{(i)} \neq y^{(j)}$ for all $i,j=1,\ldots,M$---except for the case of noninteraction, i.e., $q_{\xx^{-}} = q$ and, thus, $\alpha_{\xx^{-}} = \alpha$ does not depend on the other particles in the ensemble.

Regarding the $\ppi$ invariance of $\PP_{\rm sim}$, it is worth noting that each $P_{\xx^{-(i)}}$ is $\pi$-invariant. However, due to the interaction, this does not directly imply $\ppi$ invariance of $\PP_{\rm sim}$. 
Although, the particle-wise marginals of $\ppi$ and $\ppi\PP_\mathrm{sim}$ coincide as shown below in \Cref{propo:P_product} the product transition kernel $\PP_\mathrm{sim}$ is in general not $\ppi$ invariant as illustrated by several counterexamples below.

\begin{proposition}\label{propo:P_product}
For the transition kernel $\PP_\mathrm{sim} \colon \R^{Md} \times \mathcal B(\R^{Md}) \to[0,1]$ given in \eqref{eq:P_product} associated to \Cref{alg:MA-particle2} we have that $\ppi \PP_\mathrm{sim}$ is an \emph{$M$-coupling} of $\pi$, i.e. the particle-wise marginals of $\XX \sim \ppi \PP_\mathrm{sim}$ are $X^{(i)} \sim \pi$ for all $i=1,\ldots,M$. 
\end{proposition}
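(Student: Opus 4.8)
The plan is to compute the $i$th particle-wise marginal of the measure $\ppi\PP_\mathrm{sim}$ directly and check it equals $\pi$. The only two ingredients needed are the product form \eqref{eq:P_product} of $\PP_\mathrm{sim}$ and the fact, already recorded above, that each conditional kernel $P_{\xx^{-(i)}}$ is $\pi$-reversible — hence $\pi$-invariant — for every fixed value of the conditioning variable $\xx^{-(i)}\in\R^{(M-1)d}$.

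Concretely, I would fix $i\in\{1,\dots,M\}$ and $B\in\cB(\R^d)$, and test against the cylinder set $A := \R^d\times\cdots\times B\times\cdots\times\R^d$ with $B$ in the $i$th slot. By definition of $\ppi\PP_\mathrm{sim}$ and the product structure \eqref{eq:P_product},
\begin{equation*}
  \ppi\PP_\mathrm{sim}(A) = \int_{\R^{Md}} P_{\xx^{-(i)}}(x^{(i)},B)\,\prod_{j\neq i} P_{\xx^{-(j)}}(x^{(j)},\R^d)\,\ppi(\dd\xx);
\end{equation*}
since each $P_{\xx^{-(j)}}(x^{(j)},\cdot)$ is a probability measure, all factors with $j\neq i$ are $1$, leaving the integrand $P_{\xx^{-(i)}}(x^{(i)},B)$. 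Next I would invoke Tonelli's theorem and $\ppi=\bigotimes_{j=1}^M\pi$ to integrate first over $x^{(i)}$ with $\xx^{-(i)}$ held fixed, obtaining
\begin{equation*}
  \ppi\PP_\mathrm{sim}(A) = \int_{\R^{(M-1)d}} \left( \int_{\R^d} P_{\xx^{-(i)}}(x^{(i)},B)\,\pi(\dd x^{(i)})\right) \bigotimes_{j\neq i}\pi(\dd x^{(j)}).
\end{equation*}

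For each fixed $\xx^{-(i)}$ the kernel $P_{\xx^{-(i)}}$ is $\pi$-invariant, so the inner integral is $\pi(B)$ regardless of $\xx^{-(i)}$; the outer integral of the constant $\pi(B)$ against the probability measure $\bigotimes_{j\neq i}\pi$ then gives $\ppi\PP_\mathrm{sim}(A)=\pi(B)$. As $i$ was arbitrary, this shows all particle-wise marginals of $\XX\sim\ppi\PP_\mathrm{sim}$ equal $\pi$, which is the assertion. This argument is a short computation rather than a genuine difficulty; the only point needing care is the bookkeeping — noting that every kernel except the $i$th integrates to one, and that the $\pi$-invariance of the conditional kernel must be applied pointwise in the conditioning argument $\xx^{-(i)}$ before integrating it out.
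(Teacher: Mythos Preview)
Your proposal is correct and follows essentially the same approach as the paper's proof: both test against a cylinder set in the $i$th coordinate, use the product structure of $\PP_\mathrm{sim}$ together with the fact that the remaining kernels integrate to one, then apply Fubini/Tonelli and the $\pi$-invariance of $P_{\xx^{-(i)}}$ for each fixed $\xx^{-(i)}$ to conclude. The only cosmetic difference is that you make the ``factors equal one'' step explicit, whereas the paper suppresses it.
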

\begin{proof}
For any $i=1,\ldots,M$ and any $A_i \in \cB(\R^d)$ we have
\begin{align*}
    \ppi \PP_\mathrm{sim}(\R^d \times \cdots \times \R^d \times A_i \times \R^d\times \cdots \times \R^d)
    & =
    \int_{\R^{Md}}
    \PP_\mathrm{sim}(\xx, \R^d \times \cdots \times \R^d \times A_i \times \R^d\times \cdots \times \R^d)
    \, \ppi(\dd\xx)\\
    & =
    \int_{\R^{Md}}
    P_{\xx^{-(i)}}(x^{(i)}, A_i) \,
    \, \ppi(\dd\xx)\\
    & = 
    \int_{\R^{(M-1)d}}
   \left(  \int_{\R^d}
    P_{\xx^{-(i)}}(x^{(i)}, A_i)
    \, \pi(\dd x^{(i)})\right) 
    \, \bigotimes_{j\neq i} \pi(\dd x^{(j)})
    \\
    & = \int_{\R^{(M-1)d}} \pi(A_i) \, \bigotimes_{j\neq i} \pi(\dd x^{(j)})
    = \pi(A_i)    
\end{align*}
due to the $\pi$-invariance of $P_{\xx^{-}}$ for any $\xx^{-} \in \R^{(M-1)d}$. 
\end{proof}

\begin{remark}
If we \emph{assume} $\ppi$-invariance of $\PP_\mathrm{sim}$, then ergodicity and a strong law of large numbers follow under the same conditions as in \Cref{theo:MAPS2_seq}; this can be proven by a slight modification of the arguments in \cite{RoRo2006}.
\end{remark}

\begin{example}
Let us consider a discrete state space $\mathcal X = \{x_1,x_2\}$ of two elements $x_1\neq x_2$ and the uniform distribution $\pi(x_1) = \pi(x_2) = \frac 12$ on $\mathcal X$ as the target measure.
Consider parametrized proposal kernels
$Q_z \colon \mathcal X \times \mathcal X \to [0,1]$ with parameter
$z \in \mathcal X$ which we write as right stochastic matrices
$\mathrm Q_z \in [0,1]^{|\mathcal X|}$ where the element $q_{ij}$ of
$\mathrm Q_z$ in the $i$th row and $j$th column denotes the
probability $Q_z(x_i, x_j)$:
\[
    \mathrm Q_{x_1}
    =
    \begin{pmatrix}
    2/3 & 1/3 \\
    2/3 & 1/3
    \end{pmatrix},
    \qquad
    \mathrm Q_{x_2}
    =
    \begin{pmatrix}
    1/2 & 1/2 \\
    1/2 & 1/2
    \end{pmatrix}.
\]
The resulting parametrized acceptance probabilities
$\alpha_z \colon \mathcal X \times \mathcal X \to [0,1]$
are given by (cp.~\eqref{eq:alpha0})
\[
    \alpha_{x_1}(x,y)
    =
    \begin{cases}
    1/2, & \text{if } x = x_2, y = x_1\\
    1, & \text{else}
    \end{cases},
    \qquad
    \alpha_{x_2}(x,y)
    \equiv
    1,
  \]
  and the rejection probabilies by (cp.~\eqref{eq:Pr})
\[
    r_{x_1}(x)
    =
    \begin{cases}
    1/3, & x = x_1\\
    0, & x = x_2
    \end{cases},
    \qquad
    r_{x_2}(x) \equiv 0.
\]
This
yields as $\pi$-invariant MH transition kernel $P_z\colon \mathcal X \times \mathcal X \to [0,1]$ again written as right stochastic matrices $\mathrm P_z$ (cp.~\eqref{eq:Pr})
\[
    \mathrm P_{x_1}
    =
    \begin{pmatrix}
    2/3 & 1/3 \\
    2/3 & 1/3
    \end{pmatrix},
    \qquad
    \mathrm P_{x_2}
    =
    \begin{pmatrix}
    1/2 & 1/2 \\
    1/2 & 1/2
    \end{pmatrix}.
\]
The resulting product transition kernel 
\[
    \PP_\mathrm{sim}(\xx, \yy) 
    = 
    P_{x^{(2)}}(x^{(1)}, y^{(1)}) \cdot P_{x^{(1)}}(x^{(2)}, y^{(2)})
\]
is then, written as well as a $(4\times 4)$-right stochastic matrix where the rows and columns in the matrix correspond to the lexicographically ordered states $(x_1,x_1), (x_1,x_2), (x_2,x_1), (x_2,x_2)$ in $\mathcal X^2$,
\[
    \PP_\mathrm{sim}
    \simeq
    \begin{pmatrix}
    4/9 & 2/9 & 2/9 & 1/9\\
    1/6 & 2/6 & 1/6 & 2/6 \\
    1/6 & 2/6 & 1/6 & 2/6 \\
    1/4 & 1/4 & 1/4 & 1/4
    \end{pmatrix}.
\]
The associated invariant measure $\boldsymbol\nu\colon \mathcal X^2 \to [0,1]$ is given by
\[
    \boldsymbol\nu(x_1,x_1) = \frac{45}{173},
    \quad
    \boldsymbol\nu(x_1,x_2) = \frac{49}{173},
    \quad
    \boldsymbol\nu(x_2,x_1) = \frac{35}{173},
    \quad
    \boldsymbol\nu(x_2,x_2) = \frac{44}{173}
\]
which does not correspond to $\boldsymbol\pi \equiv \frac 14$.
Moreover, also the particlewise marginals $\nu^{(i)}$ of $\boldsymbol\nu$ do not coincide with $\pi$:
\[
    \nu^{(1)}(x_1) = \frac{94}{173},
    \quad
    \nu^{(1)}(x_2) = \frac{79}{173},
    \qquad
    \nu^{(2)}(x_1) = \frac{80}{173},
    \quad
    \nu^{(2)}(x_2) = \frac{93}{173}.
\]
\end{example}

\begin{example}\label{exam:counter_num1}
Let us consider the continuous state space $\mathcal X = \mathbb R$ equipped with the triangular target distribution $\pi$ given by the Lebesgue density
\[
    \pi(x)
    =
    \begin{cases}
    \frac 12 + 2 \min\{x , 1-x\}, & \text{if } x\in[0,1]\\
    0, & \text{else.}
    \end{cases}
\]
Again, we consider an ensemble Markov chain of $M=2$ interacting particles $\XX_k = (X_k^{(1)}, X_k^{(2)})^\top \in \mathbb R^2$.
As ensemble proposal kernel we choose the following
\[
    \QQ(\xx, \dd \yy)
    =
    Q_{x^{(2)}}(x^{(1)}, \dd y^{(1)})
    \otimes
    Q_{x^{(1)}}(x^{(2)}, \dd y^{(2)}),
    \qquad
    Q_{z}(x, \dd y)
    =
    \Nv\left(\frac {x+z}2, \frac14 \right),
\]
which corresponds particle-wise to $\Nv\left( m(\XX_k), h\right)$ for
$h=\frac 14$. %
Consider the transition kernel $\PP_\mathrm{sim}$ resulting from
particle-wise Metropolization.  Note that for $M=2$ we can decompose
$\PP_\mathrm{sim}$ as follows
\begin{align*}
    \PP_\mathrm{sim}((x_1,x_2), \dd y_1, \dd y_2)
    & =
    \alpha_{x_2}(x_1,y_1)
    \alpha_{x_1}(x_2,y_2)
    q_{x_2}(x_1,y_1)
    q_{x_1}(x_2,y_2)
    \dd y_1 \dd y_2\\
    & \quad +
    \alpha_{x_2}(x_1,y_1)
    r_{x_1}(x_2)
    q_{x_2}(x_1,y_1)
    \dd y_1 \delta_{x_2}(\dd y_2)\\
    & \quad +    
    r_{x_2}(x_1)
    \alpha_{x_1}(x_2,y_2)
    q_{x_1}(x_2,y_2)
    \dd y_2 \delta_{x_1}(\dd y_1)\\
    & \quad +
    r_{x_2}(x_1)
    r_{x_1}(x_2)
    \delta_{x_1}(\dd y_1)
    \delta_{x_2}(\dd y_2)
\end{align*}
where $q_{z}$ denotes the Lebesgue density of $Q_{z}$.  We then
discretize the state space to obtain a transition matrix
$\PP_\mathrm{sim} \in \R^{n\times n}$ and compute its invariant
measure as approximation to the true invariant measure of the operator
$\PP_\mathrm{sim}$.  Since $\alpha_{z}(x,y) = 0$ for $y\notin[0,1]$ it
suffices to discretize $[0,1]^2$.  Here we use a uniform grid with
grid size $\Delta x = 0.01$ in each dimension.  Thus, $n = 101^2$.The
invariant measure of the matrix $\PP_\mathrm{sim} \in \R^{n\times n}$
is then computed numerically and rearranged to yield
$\ppi_\text{inv} \in [0,1]^{101 \times 101}$.  It is displayed in
comparison to an analogously discretized version of the true product
target $\ppi = \pi \otimes \pi$ in Figure \ref{fig:2d_comp}.  We do
notice a bias, although a small one of relative size $10^{-3}$ to
$10^{-2}$.  Since the crucial object for sampling purposes is not
necesarilly the invariant measure in the ensemble space but the
particle-wise marginals of it we also compare these in Figure
\ref{fig:2d_comp_marg}.  However, the results are similar here. A
small bias is observable, again the relative size compared to the true
target are of order $10^{-3}$.

\begin{figure}[!htb]
\includegraphics[width=0.32\textwidth]{./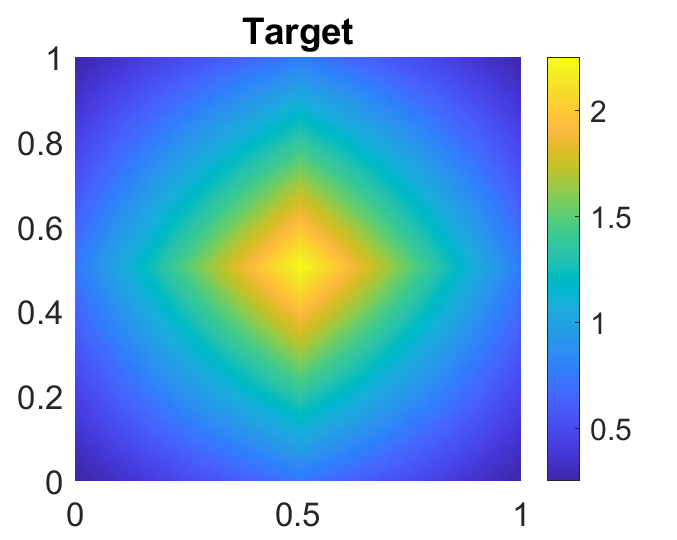}\hfill\includegraphics[width=0.32\textwidth]{./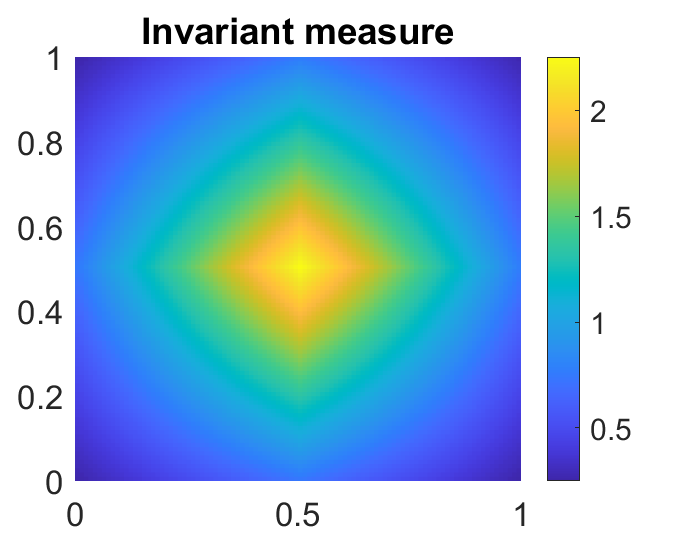}\hfill\includegraphics[width=0.32\textwidth]{./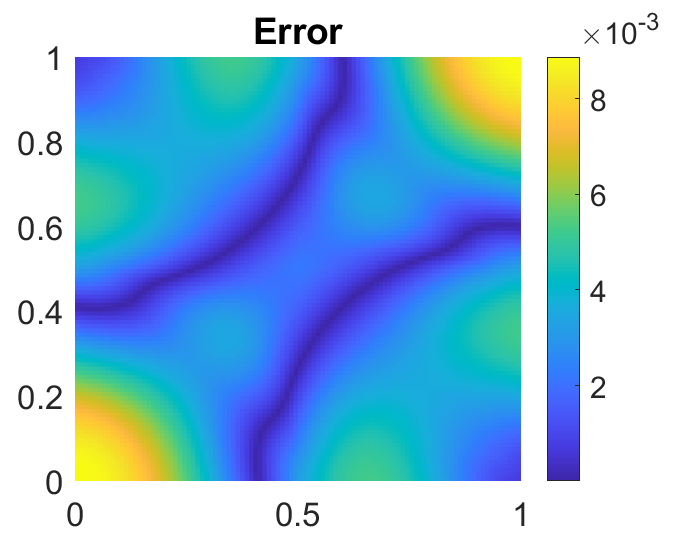}
 \caption{Comparison true product target and numerically computed invariant measure of simultaneous particle-wise Metropoliziation $\PP_\mathrm{sim}$ for $M=2$ particles in \Cref{exam:counter_num1}.}\label{fig:2d_comp}
\end{figure} 

\begin{figure}[!htb]
\includegraphics[width=0.32\textwidth]{./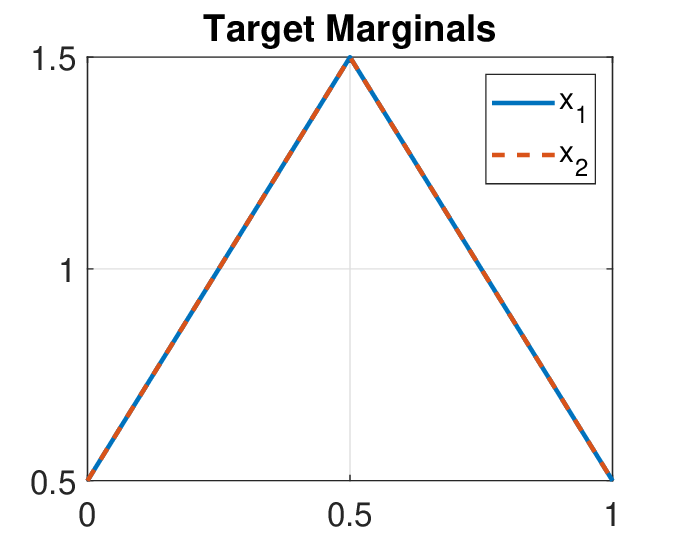}\hfill\includegraphics[width=0.32\textwidth]{./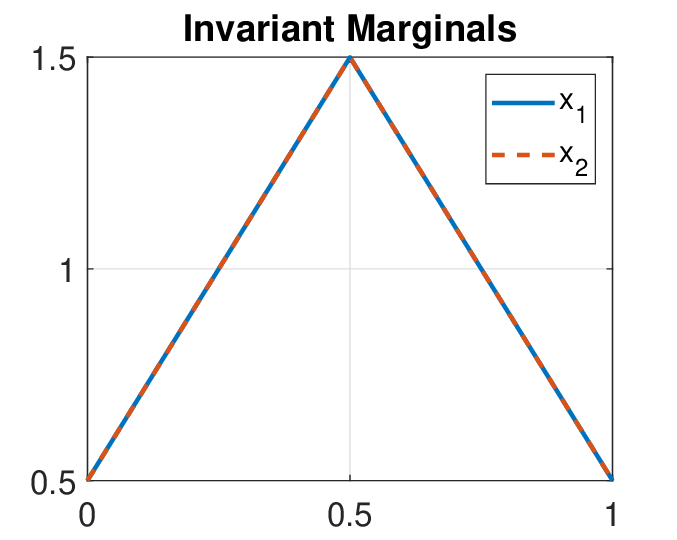}\hfill\includegraphics[width=0.32\textwidth]{./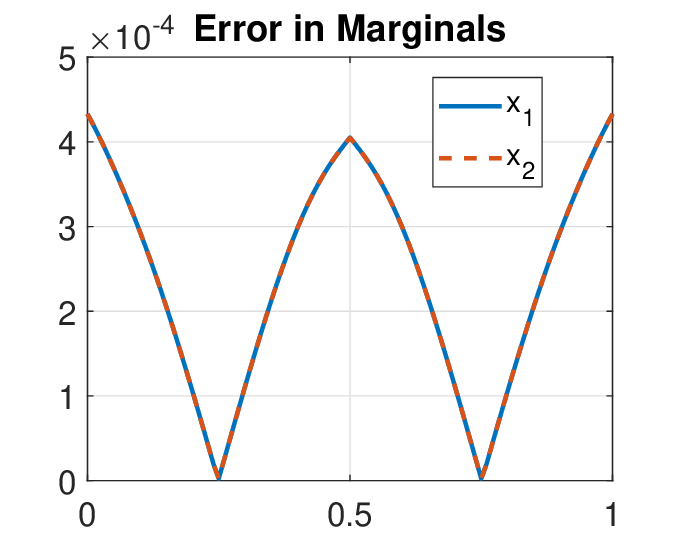}
 \caption{Comparison true product target and numerically computed invariant measure of simultaneous particle-wise Metropoliziation $\PP_\mathrm{sim}$ for $M=2$ particles in \Cref{exam:counter_num1}.}\label{fig:2d_comp_marg}
\end{figure} 
\end{example}

\begin{example}\label{exam:counter_num2}
We provide another numerical example similar to the previous one.
Here again, $\mathcal X = \mathbb R$ but now $\pi = \U[0,1]$ is the uniform distribution on $[0,1]$.
We consider $M=2$ interacting particles $\XX_k = (X_k^{(1)}, X_k^{(2)})^\top \in \mathbb R^2$ based on the following ensemble proposal kernel
\[
    \QQ(\xx, \dd \yy)
    =
    Q_{x^{(2)}}(x^{(1)}, \dd y^{(1)})
    \otimes
    Q_{x^{(1)}}(x^{(2)}, \dd y^{(2)}),
    \qquad
    Q_{z}(x, \dd y)
    =
    \Nv\left( x, \frac 14 \left(0.01 + \frac{0.99}2(z-x)^2\right) \right)
\]
which corresponds particle-wise to $\Nv\left( x, h (\gamma + (1-\gamma)C(\XX_k)) \right)$ with $\gamma = 0.01$ and $h=\frac 14$.  
Analogously, we discretize the state space or $[0,1]^2$, respectively, using a 
uniform grid with grid size $\Delta x = 0.01$ and compute numerically the invariant measure of the resulting transition matrix $\PP_\mathrm{sim} \in \R^{101\times 101}$.
The results are shown in Figure \ref{fig:2d_comp2} and \ref{fig:2d_comp_marg2}.
Also for this example we do notice a bias which is even larger than in the previous example, i.e., we observe a relative error of order $10^{-2}$ to $10^{-1}$ for the joint target and $10^{-2}$ for the particle-wise marginals.

\begin{figure}[!htb]
\includegraphics[width=0.32\textwidth]{./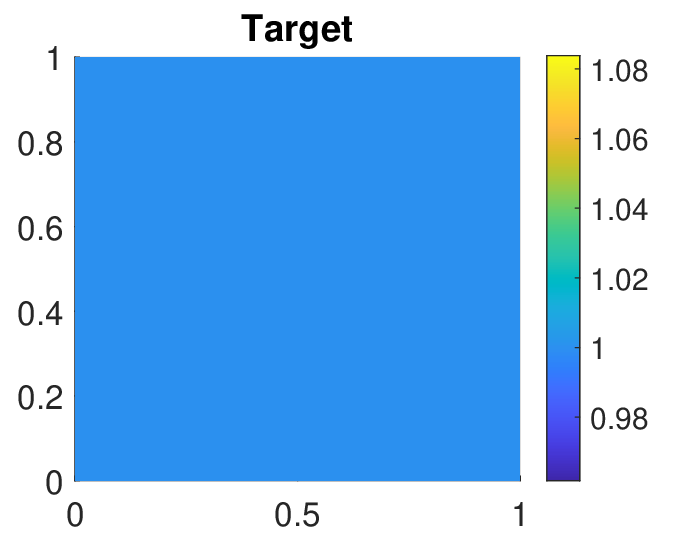}\hfill\includegraphics[width=0.32\textwidth]{./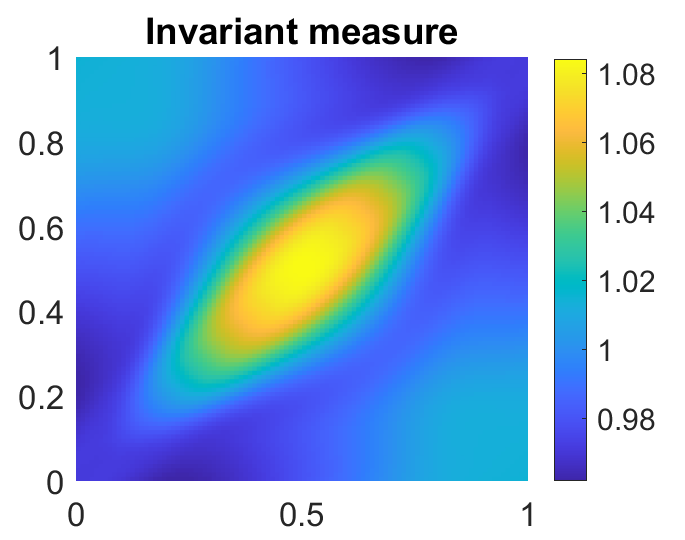}\hfill\includegraphics[width=0.32\textwidth]{./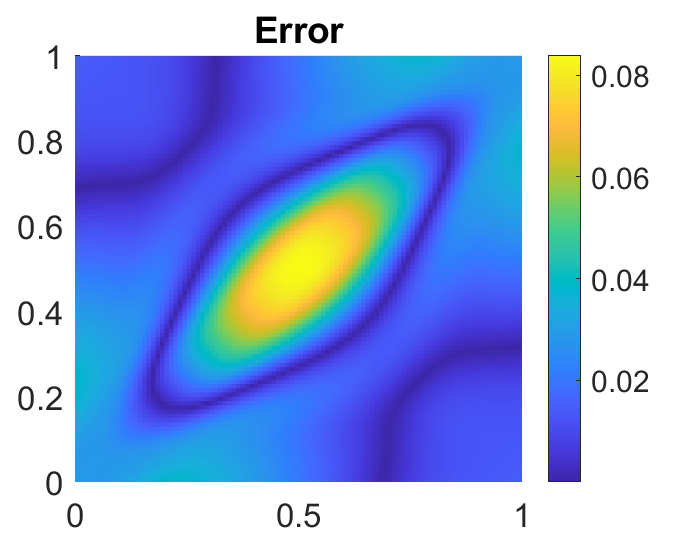}
 \caption{Comparison true product target and numerically computed invariant measure of simultaneous particle-wise Metropoliziation $\PP_\mathrm{sim}$ for $M=2$ particles in \Cref{exam:counter_num2}.}\label{fig:2d_comp2}
\end{figure} 

\begin{figure}[!htb]
\includegraphics[width=0.32\textwidth]{./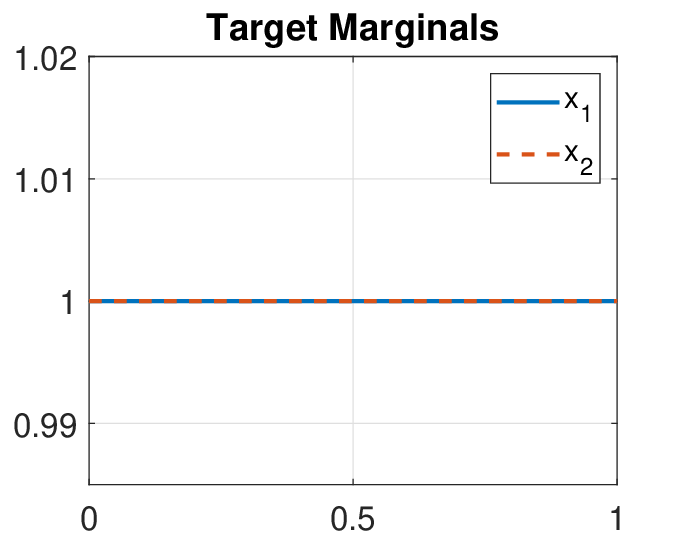}\hfill\includegraphics[width=0.32\textwidth]{./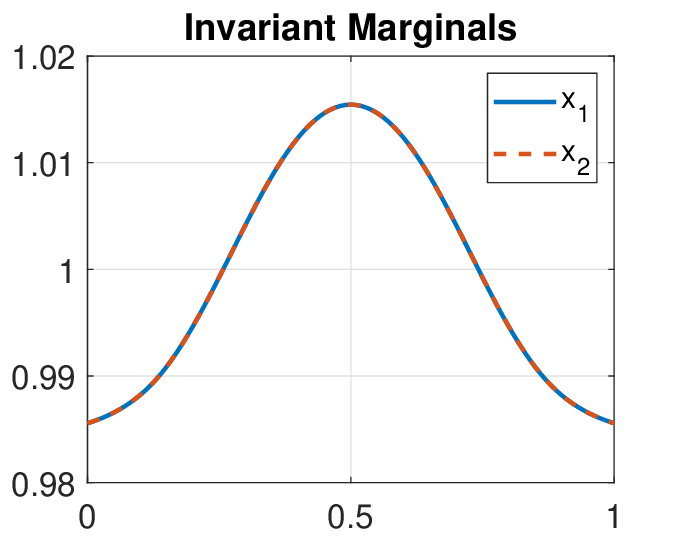}\hfill\includegraphics[width=0.32\textwidth]{./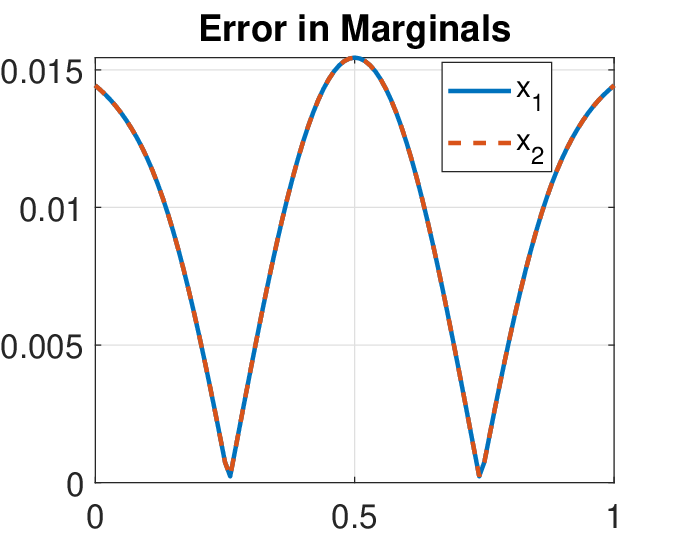}
 \caption{Comparison true product target and numerically computed invariant measure of simultaneous particle-wise Metropoliziation $\PP_\mathrm{sim}$ for $M=2$ particles in \Cref{exam:counter_num2}.}\label{fig:2d_comp_marg2}
\end{figure} 
\end{example}

We suspect that the bias of simultaneous particle-wise Metropolization is larger for smaller ensemble sizes than for bigger ones. %
In particular, the bias may vanish as $M\to\infty$ for suitable interacting particle systems, i.e., if the dynamics of each particle converge to their only time-discretized mean field limit as $M\to\infty$, then the corresponding proposal distributions $Q_{x^{-(i)}}(x^{(i)},\cdot)$ should also converge to a limit proposal distribution $Q_\infty(x^{(i)},\cdot)$ which does not depend on the other particles anymore, e.g., $Q_\infty(x^{(i)},\cdot) = \Nv(x^{(i)} + h C(\pi_t) \nabla \log \pi(x^{i}), 2h C(\pi_t) )$ in case of ALDI.
However, for independent proposal kernels $Q_{x^{-(i)}}(x^{(i)},\cdot) = Q(x^{(i)},\cdot)$ the transition kernel of simultaneous particle-wise Metropolization $\PP_\mathrm{sim}$ is in fact $\ppi$-invariant.
Therefore, we suspect that the bias of $\PP_\mathrm{sim}$ is largest for the $M=2$ particle case considered in the numerical counterexamples  above.

\end{document}